\newtheorem{assumption}{Assumption}
\newtheorem{thm}{Theorem}
\newtheorem{lem}{Lemma}
\newtheorem{defi}{Definition}
\newtheorem{prop}{Proposition}
\newtheorem{rmk}{Remark}
\newcommand{\ka}{\kappa}
\newcommand{\calD}{\mathcal{D}}
\newcommand{\calO}{\tilde{\mathcal{O}}}
\newcommand{\calI}{\mathcal{I}}
\newcommand{\calN}{\mathcal{N}}
\newcommand{\bw}{\mathbf{w}}
\newcommand{\bg}{\mathbf{g}}
\newcommand{\bbf}{\mathbf{f}}
\newcommand{\EE}{\mathbb{E}}
\newcommand{\RR}{\mathbb{R}}
\newcommand{\bx}{\mathbf{x}}
\newcommand{\cc}{\CheckmarkBold}
\newcommand{\xx}{\XSolidBold}
\newcommand{\vv}{\vartriangle}
\newcommand{\RomanNumeralCaps}[1]
    {\MakeUppercase{\romannumeral #1}}
\algnewcommand\algorithmicforpara{\textbf{for}}
\algnewcommand\algorithmicdoinparallel{\textbf{do in parallel}}
\DeclareMathOperator*{\argmin}{arg\,min}
\setlist[itemize]{leftmargin=*}
\title{FedCut: A Spectral Analysis Framework for Reliable Detection of Byzantine Colluders}
\author{Hanlin~Gu,
        Lixin~Fan,~\IEEEmembership{Member,~IEEE,}
        XingXing~Tang,
        and~Qiang~Yang,~\IEEEmembership{Fellow,~IEEE}
\IEEEcompsocitemizethanks{
\IEEEcompsocthanksitem Hanlin Gu and Lixin Fan are with WeBank AI Lab, WeBank, China. E-mail: \{allengu, lixinfan\}@webank.com, \{ghltsl123, Lixin.Fan01\}@gmail.com. 
\IEEEcompsocthanksitem Xingxing Tang is with the Department of Computer Science and Engineering, Hong
Kong University of Science and Technology, Hong Kon. E-mail: xtangav@connect.ust.hk 
\IEEEcompsocthanksitem  Qiang Yang is with the Department of Computer Science and Engineering, Hong
Kong University of Science and Technology, Hong Kong and  WeBank AI Lab, WeBank, China. E-mail: qyang@cse.ust.hk.}
\thanks {Corresponding author: Lixin Fan.}}
\begin{document}

\IEEEtitleabstractindextext{
\begin{abstract}

This paper proposes a general \textit{spectral analysis} framework that thwarts a security risk in federated Learning caused by \textit{groups of malicious Byzantine attackers} or \textit{colluders}, who conspire to upload vicious model updates to severely debase global model performances. The proposed framework delineates the strong consistency and temporal coherence between Byzantine colluders’ model updates from a spectral analysis lens, and, formulates the detection of Byzantine misbehaviours as a \textit{community detection} problem in weighted graphs. The modified \text{normalized graph cut} is then utilized to discern attackers from benign participants. Moreover, the \textit{Spectral heuristics} is adopted to make the detection robust against various attacks. The proposed Byzantine colluder resilient method, i.e., FedCut, is guaranteed to converge with bounded errors. Extensive experimental results under a variety of settings justify the superiority of FedCut, which demonstrates extremely robust model performance (MP) under various attacks. It was shown that FedCut's averaged MP is 2.1\% to 16.5\% better than that of the state of the art Byzantine-resilient methods. In terms of the worst-case model performance (MP), FedCut is 17.6\% to 69.5\% better than these methods.
\end{abstract}

\begin{IEEEkeywords}
Byzantine colluders; Byzantine resilient; Spectral analysis; Normalized cut; Spectral heuristics; Graph; Federated learning; privacy preserving computing;
\end{IEEEkeywords}}

\maketitle

\IEEEdisplaynontitleabstractindextext

\IEEEpeerreviewmaketitle

\IEEEraisesectionheading{\section{Introduction}}

\IEEEPARstart{F}{ederated} learning (FL) \cite{mcmahan2017communication,yang2019federated} is a suite of privacy-preserving machine learning techniques that allow multiple parties to {collaboratively train a global model}, yet, {without gathering or exchanging \textit{privacy} for the sake of compliance with private data protection regulation rules such as GDPR\footnote{GDPR is applicable as of May 25th, 2018 in all European member states to harmonize data privacy laws across Europe. https://gdpr.eu/}}. It is the \textit{upgraded global model performance} that motivates multiple parties to join FL, however, the risk of potential malicious attacks aiming to degrade FL model performances cannot be discounted.
It was shown that even a \textit{single} attacker (aka a Byzantine worker) may prevent the convergence of a naive FL aggregation rule by submitting vicious model updates to outweigh benign workers (see Lemma 1 of \cite{blanchard2017machine}). A great deal of research effort was then devoted to developing numerous Byzantine-resilient methods which can effectively detect and attenuate such misbehaviours \cite{blanchard2017machine, yin2018byzantine, pillutla2019robust, chen2017distributed, xie2018generalized, yang2019byzantine, guerraoui2018hidden}. Nevertheless, recent research pointed out that \textit{a group of attackers} or \textit{colluders} may conspire to cause more damages than these Byzantine-resilient methods can deal with (see \cite{baruch2019little,xie2020fall,fang2020local}). 
It is the misbehavior of such Byzantine \textit{colluders} that motivate our research to analyze their influences on 
global model performances from a spectral analysis framework, and based on our findings, an effective defense algorithm to do away with Byzantine colluders is proposed.


There are two main challenges brought by Byzantine colluders. First, colluders may conspire to \textit{misbehave consistently} and introduce \textit{statistical bias} to break down \textit{Robust Statistics} (RS) based resilient methods (e.g., \cite{baruch2019little,fang2020local,xie2020fall}). Consequently, the global model performances may deteriorate significantly. 
Second, Byzantine colluders may conspire to violate the assumption that all malicious model updates form one group while benign ones form the other, which is invariably assumed by most clustering-based Byzantine-resilient methods \cite{shen2016auror, sattler2020byzantine, ghosh2019robust}. By submitting multiple groups of such detrimental yet disguised model updates, colluders therefore can evade clustering-based methods and degrade the global model performance significantly. 


\begin{figure*}[htbp]
    \centering
    	\subfigure{
		\begin{minipage}[b]{0.48\textwidth}
			\includegraphics[width=0.9\textwidth]{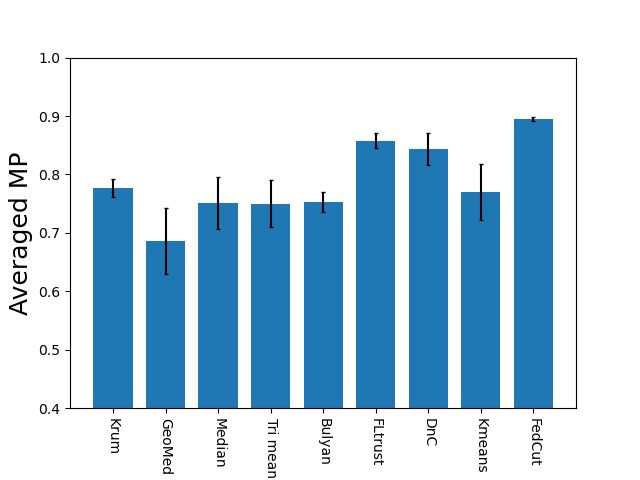}
		\end{minipage}
	}
    	\subfigure{
    		\begin{minipage}[b]{0.48\textwidth}
  		 	\includegraphics[width=0.9\textwidth]{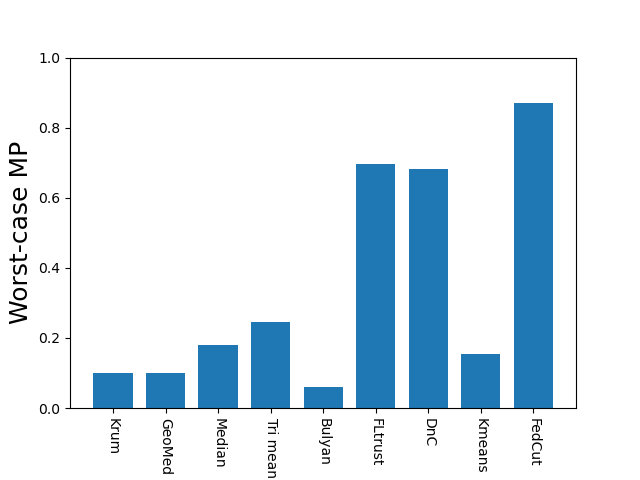}
    		\end{minipage}
    	}
    \caption{Averaged (left) and worst-case (right) model performances under all attacks for different Byzantine-resilient methods (Krum \cite{blanchard2017machine}, GeoMedian \cite{chen2017distributed}, Median, Trimmed mean \cite{yin2018byzantine}, Bulyan \cite{guerraoui2018hidden}, FLtrust \cite{cao2020fltrust}, DnC \cite{shejwalkar2021manipulating}, Kmeans \cite{shen2016auror} and our proposed FedCut) with IID setting and 30 Byzantine clients for classification of Fashion MNIST (FedCut -- the proposed method, see more details in Sect. \ref{subsec:experiment-setup}).}
    \label{fig:overview}
    \vspace{-4pt}
\end{figure*}

In order to address the challenges brought by colluders, we propose in this article a spectral analysis framework, called FedCut, which admits effective detection of Byzantine colluders in a variety of settings and provide the spectral analysis of different types of Byzantine behaviours especially for Byzantine colluders (see Sect. \ref{sec:failure}).
The essential ingredients of the proposed framework are as follows. First, we build the Spatial-Temporal graph with all clients' model updates over multiple learning iterations as nodes, and similarities between respective pairs of model updates as edge weights (see Sect. \ref{subsec:spa-temp}). Second, an extension of the \textit{normalized cut} (Ncut) \cite{shi2000normalized,ng2001spectral} provides the optimal c-partition Ncut (see Sect. \ref{subsec:fedcut}), which allows to detect colluders with consistent behaviour efficiently. Third, \textit{spectral heuristics} \cite{zelnik2004self} are used to determine the type of Byzantine attackers, the unknown number of colluder groups and the appropriate scaling factor $\sigma$ of Gaussian kernels used for measuring similarities between model updates (see Sect. \ref{subsec: spectal-heu}). By leveraging the aforementioned techniques together within the unified spectral analysis framework, we thus propose in Sect. \ref{sec:framework} the FedCut method which demonstrates superior robustness in the presence of different types of colluder attacks. Its convergence is theoretically proved in Sect. \ref{sec:converg} and it compares favorably to the state of arts Byzantine-resilient methods in thorough empirical evaluations of model performances in Sect. \ref{sec:experiment}.
Our contributions are three folds.

\begin{itemize}
     \item First, we provide the spectral analysis of Byzantine attacks, especially, those launched by colluders. Specifically, we formulate existing Byzantine attacks as four types and gain a deeper understanding of colluders' attacks in the lens of spectral. Moreover, we delineate root causes of failure cases of existing Byzantine-resilient methods in the face of colluder attacks (see Sect. \ref{subsec:fail-case}).  
     \item Second, we propose the spectral analysis framework, called FedCut, which distinguishes benign clients from multiple groups of Byzantine colluders. Specifically, the \textit{normalized cut}, \textit{temporal consistency} and the \textit{spectral heuristics} are adopted to address challenges brought by colluders. Moreover, we provide the theoretical analysis of convergence of the proposed algorithm FedCut.
    \item Finally, we propose to thoroughly investigate both \textit{averaged} and \textit{worst-case} model performances of different Byzantine-resilient methods with extensive experiments under a variety of settings, including different types of models, datasets, extents of Non-IID, the fraction of attackers, combinations of colluders groups. It was then demonstrated that the proposed FedCut consistently outperforms existing methods under all different settings (see Fig. \ref{fig:overview} and Tab. \ref{tab:ref}). 
\end{itemize}

\section{Related work}\label{sec:related}

Related work are abundant in the literature and we briefly review them below. \\
\textbf{Byzantine attacks} can be broadly categorized as Non-Collusion and Collusion attacks. The former attacks were proposed to degrade global model performances by uploading Gaussian noise or flipping gradients \cite{blanchard2017machine, li2019rsa}. The latter type launched consistent attacks to induce misclassifications of Byzantine attackers \cite{baruch2019little, fang2020local,xie2020fall}. \\
\textbf{Robust statistics based aggregation} approaches treated malicious model updates as \textit{outliers}, which are far away from benign clients, and filter out outliers via robust statistics accordingly. For example, the coordinate-wise Median and some variants of Median, such as geometric Median, were proposed to remove outliers \cite{yin2018byzantine,pillutla2019robust,xie2018generalized}. Moreover, Blanchard et al. assumed few outliers were far away from benign updates. Therefore, they used the Krum function to select model updates that are very close to at least half the other updates \cite{blanchard2017machine} as benign updates. 
However, the aforementioned methods are vulnerable to attacks with collusion which may conspire to induce \textit{biased} estimations \cite{baruch2019little, fang2020local,xie2020fall}. 
A different line of approaches \cite{shejwalkar2021manipulating,diakonikolas2019robust,data2021byzantine} used the concentration filter to remove the outliers which are far away from concentration (such as the mean) \cite{shejwalkar2021manipulating,diakonikolas2019robust,data2021byzantine}. For instance, Shejwalkar et al. applied SVD to the covariance matrix of model updates and filtered out outliers that largely deviate from the empirical mean of model updates towards the principle direction \cite{shejwalkar2021manipulating}. However, the effectiveness of the concentration filter may deteriorate in the presence of collusion attacks (see Sect. \ref{subsec:fail-case}), and moreover, the time complexity of this approach is high $\tilde{O}(d^3)$ when the dimension of updates $d$ is large.\\
\textbf{Clustering based robust aggregation} grouped all clients into two clusters according to pairwise similarities between clients' updates and regarded the cluster with a smaller number as the Byzantine cluster. For instance, some methods applied \textit{Kmeans} into clustering benign and byzantine clients \cite{shen2016auror, bulusu2021byzantine} and Sattler et al. separated benign and byzantine clients by minimizing \textit{cosine similarity} of updates among two groups \cite{sattler2020byzantine, sattler2020clustered}. Moreover, Ghosh et al. made use of iterative Kmeans clustering to remove the byzantine attackers \cite{ghosh2019robust}.
However, the adopted naive clustering such as Kmeans,Kmedian \cite{jain1988algorithms,lloyd1982least} have two shortcomings: 1) being \textit{inefficient} and easily trapped into \textit{local minima}; 
2) breaking down when the number of colluder groups is unknown to the the clustering method (see details in Sect. \ref{sec:failure}).\\
\textbf{Server based robust aggregation}
assumed \textit{the server had an extra training dataset} which was used to evaluate uploaded model updates. Either abnormal updates with low scores were filtered out \cite{cao2020fltrust,xie2018zeno,regattibygars,prakash2020mitigating} or minority Byzantine updates were filtered out through majority votes \cite{chen2018draco,rajput2019detox,gupta2021byzantine}. 
However, this approach is not applicable to the case when the server-side data is not available, or it might break down if the distribution of server's data deviates far from that of training data of clients. \\
\textbf{Historical Information based byzantine robust methods} made use of historical information (such as distributed Momentums \cite{el2021distributed}) to help correct the statistical bias brought by colluders during the training, and thus lead to the convergence of optimization of federated learning \cite{allen2020byzantine, kairouz2021advances, alistarh2018byzantine, chenbyzantine, karimireddy2021learning, farhadkhani2022byzantine}.\\
\textbf{Other byzantine robust methods} 
used signs of the gradients \cite{bernstein2018signsgd, sohn2020election}, optimization strategy \cite{li2019rsa} or sampling methods \cite{karimireddy2020byzantine} to achieve robust aggregation. Recently, some work studied Byzantine-robust algorithms in the decentralized setting without server \cite{guerraoui2018hidden, peng2020byzantine,he2022byzantine} and asynchronous setting with heterogeneous communication delays \cite{damaskinos2018asynchronous, xie2020zeno++, yang2021basgd}.  \\
\textbf{Community detection in graph.} In the proposed framework, the detection of Byzantine colluders is treated as the detection of multiple subgraphs or \textit{communities} in a large weighted graph (see Sect. \ref{subsec:spa-temp}). Existing approaches \cite{newman2004detecting} could be applied to detect the byzantine colluders. One important technique is to detect specific features of graph such as clustering coefficients \cite{watts1998collective}. Another important technique is to leverage the spectral property of graph based on the \textit{adjacency matrix} or \textit{normalized Laplacian matrix} 
and so on \cite{shen2010spectral, sarkar2018spectral,brouwer2011spectra}. 
Moreover, the number of communities can be determined based on the eigengap of these matrix \cite{zelnik2004self} (see Sect. \ref{subsec: spectal-heu}). 

\begin{table*}[htbp]
\vspace{-3pt}
\setlength{\tabcolsep}{2.0mm}
\renewcommand\arraystretch{1.5}
\centering
\caption{Overview of the performance for various Byzantine-resilient methods (Robust statistics based: Krum \cite{blanchard2017machine}, Median, Trimmed Mean \cite{yin2018byzantine} and DnC \cite{shejwalkar2021manipulating}; Clustering based: Kmeans \cite{shen2016auror}; Server based: FLtrust \cite{cao2020fltrust}; Spectral based: FedCut, see more details in Sect. \ref{sec:related}) under different Byzantine attacks (Non-Collusion attack: \cite{blanchard2017machine}, label flipping \cite{data2021byzantine} and sign flipping \cite{data2021byzantine}, Collusion-diff attack: same value attack \cite{li2019rsa}, Fang-v1 (design for trimmed mean) \cite{fang2020local} and our designed Multi-collusion attack, Collusion-mimic attack: Mimic \cite{karimireddy2020byzantine} and Lie \cite{baruch2019little}, see more details in Sect. \ref{subsec:spectral property}). \cc, $\vv$ and \xx denotes the drop of model performance less than $3\%$, from $3\%$  to $10\%$ and above $10\%$ on Fashion MNIST respectively (see more comparisons in Sect. \ref{sec:experiment})}
\begin{tabular}{|c|c||c|c|c|c||c||c||c|}
\hline
&
&\multicolumn{4}{|c||}{Robust statistics based}&Clustering based &Server based & Spectral based\\ 
\cline{3-9}

&&Krum&Median&Tri mean&DnC& Kmeans&FLtrust&FedCut(Ours)\\ \hline
\multirow{3}{*}{Non-Collusion}&Gaussian\cite{blanchard2017machine} &$\vv$&\cc&\cc&\xx&\xx&\cc& \cc\\ \cline{2-9}

&label flipping\cite{data2021byzantine} &$\vv$&\xx&\xx&\cc&\cc&\cc&\cc \\ \cline{2-9}

&sign flipping\cite{data2021byzantine} &$\vv$&$\vv$&$\vv$&\xx&\xx&$\vv$&\cc \\ \hline

\multirow{3}{*}{Collusion-diff} &same value\cite{li2019rsa}  &$\vv$&\xx&\xx&\cc&\cc&\cc&\cc \\ \cline{2-9}

&Fang-v1\cite{fang2020local} &$\vv$&\xx&\xx&\cc&\cc&$\vv$& \cc\\ \cline{2-9}

&Multi-collusion&$\vv$&\xx&\xx&$\vv$&\xx&\cc&\cc \\ \hline

\multirow{2}{*}{Collusion-mimic}& Mimic\cite{karimireddy2020byzantine} &\xx&$\vv$&$\vv$&$\vv$&\cc&\cc&\cc \\ \cline{2-9}

&Lie \cite{baruch2019little}&$\vv$&\xx&\xx&\xx&\cc&\xx&\cc \\ \hline

\end{tabular}
\label{tab:ref}
\vspace{-3pt}
\end{table*}
\section{Preliminary}
\subsection{Federated Learning}
We consider a \textit{horizontal federated learning} \cite{yang2019federated,mcmahan2017communication} setting consisting of one server and $K$ clients. We assume $K$ clients\footnote{In this article we use terms "client", "node", "participant" and "party" interchangeably. } have their local dataset $\calD_i = \{(\bx_{i,j},y_{i,j} \}_{j=1}^{n_i}, i=1\cdots K$, where $\bx_{i,j}$ is the input data, $y_{i,j}$ is the label and $n_i$ is the total number of data points for $i_{th}$ client. The training in federated learning is divided into three steps which iteratively run until the learning converges:
\begin{itemize}
    \item The $i_{th}$ client takes empirical risk minimization as:
    \begin{equation}
        \min_{\bw_i} F_{i}(\bw_i, \calD_i) =\min_{\bw_i} \frac{1}{n_i} \sum_{j=1}^{n_i} \ell(\bw_i, \bx_{i,j}, y_{i,j}),
    \end{equation}
    where $\bw_i \in \RR^d$ is the $i_{th}$ client's local model weight and $\ell(\cdot)$ is a loss function that measures the accuracy of the prediction made by the model on each data point.
    \item Each client sends respective local model updates $\nabla F_i$ to the server and the server updates the \textit{global model} $\bw$ as $\bw = \bw - \eta \frac{1}{K}\sum_{i=1}^K\nabla F_i$, where $\eta$ is learning rate.
    \item The server distributes the updated global model $\bw$ to all clients.
\end{itemize}

\subsection{Byzantine Attack in Federated Learning}
We assume a malicious threat mode where an unknown number of participants out of K clients are Byzantine, i.e., they may upload arbitrarily corrupt updates $\bg_b$ to degrade the global model performance (MP). Under this assumption, behaviours of Byzantine clients and the rest of benign clients can be summarized as follows:
\begin{equation} \label{eq:upload-gradients}
\bg_i = \left\{
\begin{aligned}
\nabla F_i \qquad &\text{Benign clients}\\
\bg_b  \qquad & \text{Byzantine clients}\\
\end{aligned}
\right.
\end{equation}
Note that under the assumed threat mode, each adversarial node has access to updates of all clients during the training procedure. They are aware of the adoption of Federated Learning Byzantine-resilient methods \cite{blanchard2017machine,baruch2019little, xie2020fall}, and conspire to upload specially designed model updates that may overwhelm existing defending methods. Byzantine clients who behave consistently as such are referred to as \textit{colluders} throughout this article.  Moreover, we assume the server to be honest and try to defend the Byzantine attacks.


\section{Failure Cases Caused by Byzantine Colluders} \label{sec:failure}

The challenge brought by colluders is detrimental to Byzantine-resilient algorithms in different ways. We first analyze below behaviours of representative Byzantine attackers from a graph theoretic perspective. Then we demonstrate a toy example to showcase that existing Byzantine-resilient methods are vulnerable to collusion attacks. 


\begin{figure*}[htbp]
	\centering
	\vspace{-5pt}
			\subfigure[]{
		\begin{minipage}[b]{0.22\textwidth}
			\includegraphics[width=1\textwidth]{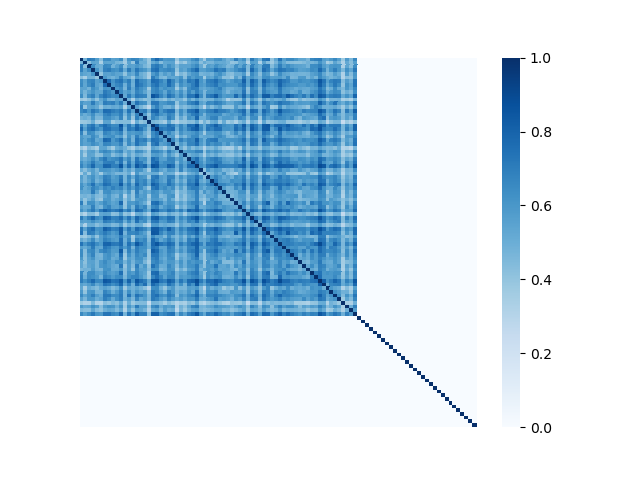}
		\end{minipage}
	}
    	\subfigure[]{
    		\begin{minipage}[b]{0.22\textwidth}
  		 	\includegraphics[width=1\textwidth]{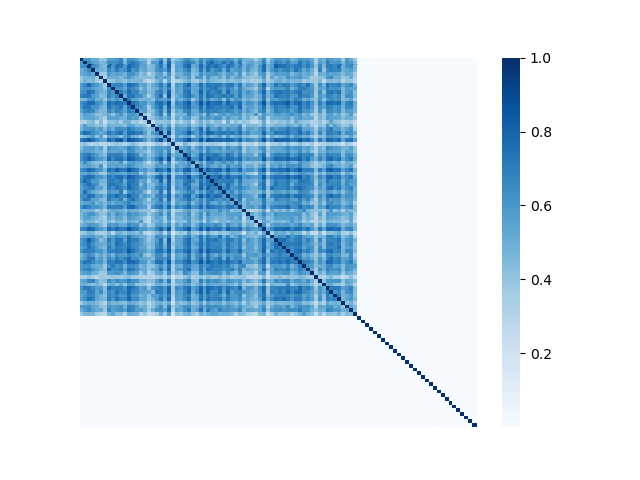}
    		\end{minipage}
    	}
    		\subfigure[]{
    		\begin{minipage}[b]{0.22\textwidth}
  		 	\includegraphics[width=1\textwidth]{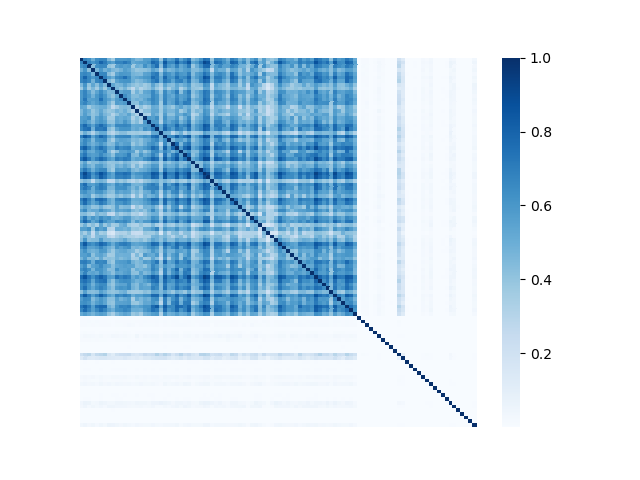}
    		\end{minipage}
    	}
    	\subfigure[]{
		\begin{minipage}[b]{0.22\textwidth}
			\includegraphics[width=1\textwidth]{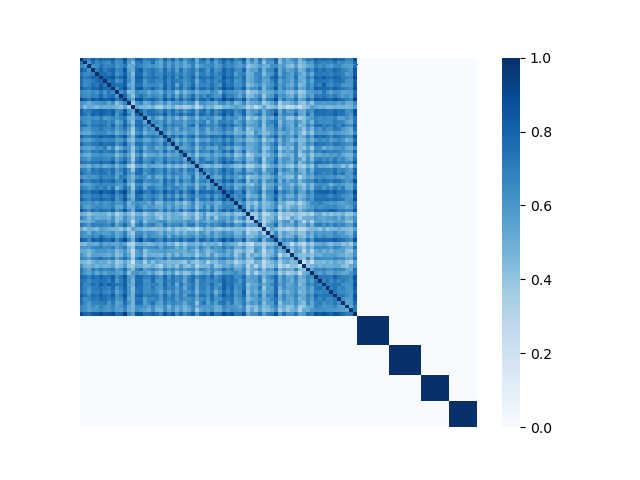}
		\end{minipage}
	}
	
	\subfigure[]{
		\begin{minipage}[b]{0.22\textwidth}
			\includegraphics[width=1\textwidth]{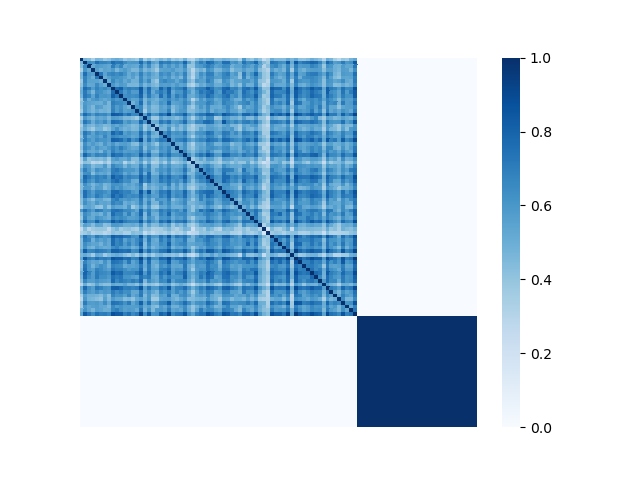}
		\end{minipage}
	}
    		\subfigure[]{
    		\begin{minipage}[b]{0.22\textwidth}
  		 	\includegraphics[width=1\textwidth]{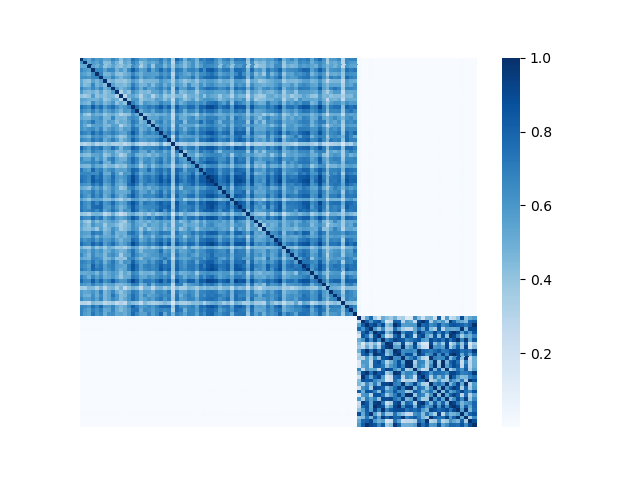}
    		\end{minipage}
    	}
    	\subfigure[]{
		\begin{minipage}[b]{0.22\textwidth}
			\includegraphics[width=1\textwidth]{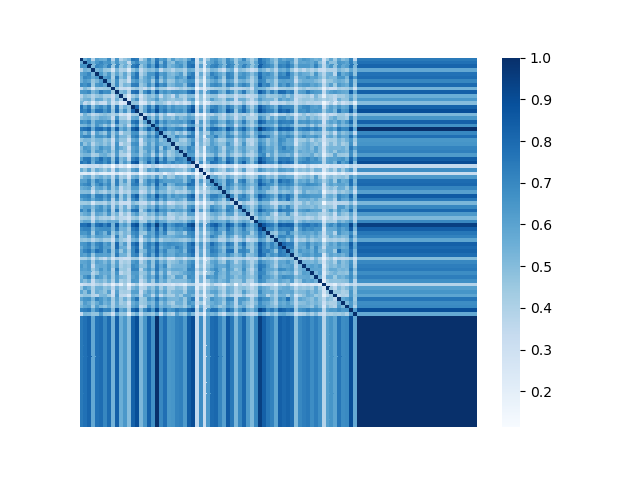}
		\end{minipage}
	}
	    \subfigure[]{
    		\begin{minipage}[b]{0.22\textwidth}
  		 	\includegraphics[width=1\textwidth]{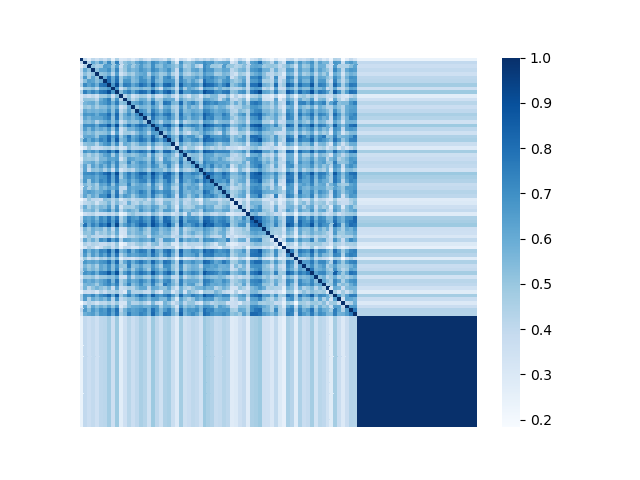}
    		\end{minipage}
    	}
	\caption{Heatmap of adjacency matrices for 8 types of attack (in Sect. \ref{subsec:experiment-setup}) of 100 client including 30 attackers at $1000_{th}$ iteration under IID setting. In each subfigure, benign clients form a single coherent cluster residing in the upper-left block of the adjacent matrix, while attackers reside in the bottom-right parts of the adjacent matrix. The scaling factor $\sigma^2=10$ and the task is logistic regression for MNIST dataset. From left to right, top to bottom, attack methods are Gaussian attack \cite{blanchard2017machine}, label flipping \cite{data2021byzantine}, sign flipping \cite{data2021byzantine}, our designed multi-collusion attack, same value attack \cite{li2019rsa}, Fang-v1 (design for trimmed mean) \cite{fang2020local}, Mimic attack, and Lie \cite{baruch2019little} respectively.}
	\vspace{-5pt}
	\label{fig:aff-matrix-iid}

\end{figure*}

\subsection{Weighted Undirected Graph in Federated Learning}
We regard model updates contributed by $K$ clients as an undirected graph $G = (V,E)$, where $V={v_1,\cdots, v_K}$ represent $K$ model updates, $E$ is a set of weighted edge representing similarities between uploaded model updates corresponding to clients in $V$.
We assume that the graph $G = (V,E)$ is weighted, and each edge between two nodes $v_i$ and $v_j$ carries a non-negative weight, e.g., $A_{ij} = \text{exp}(-||\bg_i-\bg_j||^2/2\sigma^2) \geq0$, where $\bg_i$ is uploaded gradient for $i_{th}$ client and $\sigma$ is the Gaussian scaling factor. Let $G_R = (V_R, E_R)$ and $G_B =(V_B, E_B)$ respectively denote \textit{two subgraphs} of $G$ representing benign and Byzantine clients. 

Moreover, Byzantine problem \cite{lamport2019byzantine} could be regarded as finding a optimal graph-cut for $G$ to distinguish the Byzantine and benign model updates. Since model updates from colluders form specific patterns (see Fig. \ref{fig:spectral-analysis} for examples), the aforementioned graph-cut can be generalized to the so called \emph{community-detection problem} \cite{newman2004detecting} in which multiple subsets of closely connected nodes are to be separated from each other.

\subsection{Spectral Graph Analysis for Byzantine Attackers}\label{subsec:spectral property}

We illustrate the spectral analysis of representative Byzantine attacks, especially, those launched by colluders. For example, Fig. \ref{fig:aff-matrix-iid} shows adjacency matrices with elements representing pairwise similarities between \textit{70 benign clients} under IID setting and \textit{30 attackers} (the darker the element the higher the pairwise similarity is). It is clear in each subfigure that benign clients form a single coherent cluster residing in the upper-left block of the adjacent matrix\footnote{Note that the order of the nodes illustrated in Fig. \ref{fig:aff-matrix-iid} is irrelevant and we separate benign nodes from Byzantine nodes only for better visual illustration.}, while attackers reside in the bottom-right parts of the adjacent matrix. We observe the following characteristics pertaining to benign as well as Byzantine model updates. 

First, benign model updates form a single group (upper-left block of Fig. \ref{fig:aff-matrix-iid}), which is formally illustrated by Assumption \ref{assum: Bounded gradient dissimilarity}, i.e., they all lie in the group (circle) with center $\nabla F$ and radius $\ka$.
Specifically, when the local datasets are homogeneous (IID), $\ka$ is small, indicating the strong similarity among benign model updates. When the local data of clients becomes heterogeneous, the radius $\ka$ becomes larger. Note that Assumption \ref{assum: Bounded gradient dissimilarity} has been widely used to bound differences between model updates of benign clients, e.g., in \cite{li2019communication, yu2019linear}. 

\begin{assumption} \label{assum: Bounded gradient dissimilarity}
Assume the difference of local gradients $\nabla F_i$ and the mean of benign model update ${\nabla F} = \frac{1}{|V_R|}\sum_{i \in V_R} \nabla F_i$ is bounded ($V_R$ is the set of benign clients), i.e., there exists a finite $\ka$, such that
\begin{equation*}
    || \nabla F_i-{\nabla F}|| \leq \ka\footnote{\text{$||\cdot||$ in the paper represents the $\ell_2$ norm}}.
\end{equation*}
\end{assumption}

\begin{figure*}
\centering
	\centering
			\subfigure[]{
		\begin{minipage}[b]{0.22\textwidth}
			\includegraphics[width=1\textwidth]{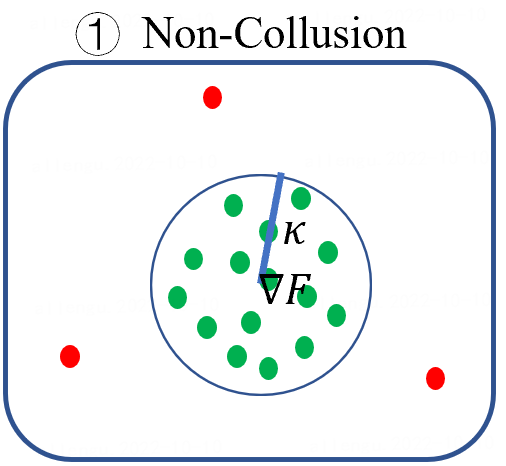}
		\end{minipage}
	}
    	\subfigure[]{
    		\begin{minipage}[b]{0.22\textwidth}
  		 	\includegraphics[width=1\textwidth]{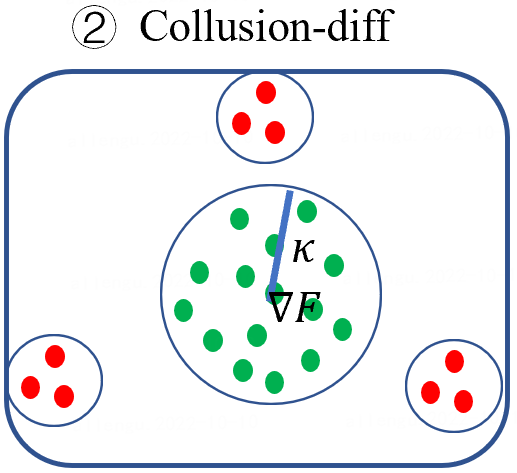}
    		\end{minipage}
    	}
    	\subfigure[]{
    		\begin{minipage}[b]{0.22\textwidth}
  		 	\includegraphics[width=1\textwidth]{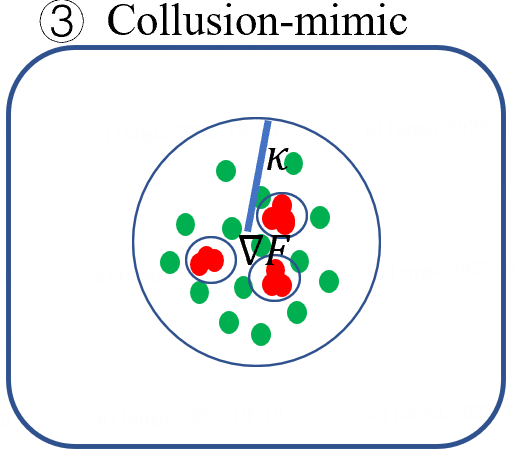}
    		\end{minipage}
    	}    	\subfigure[]{
    		\begin{minipage}[b]{0.22\textwidth}
  		 	\includegraphics[width=1\textwidth]{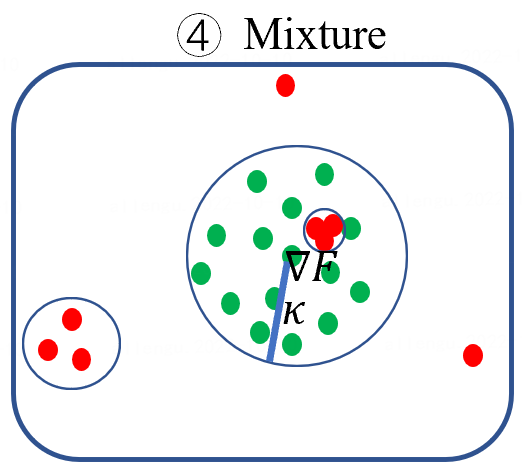}
    		\end{minipage}
    	}
	
	\centering
			\subfigure[]{
		\begin{minipage}[b]{0.22\textwidth}
			\includegraphics[width=1\textwidth]{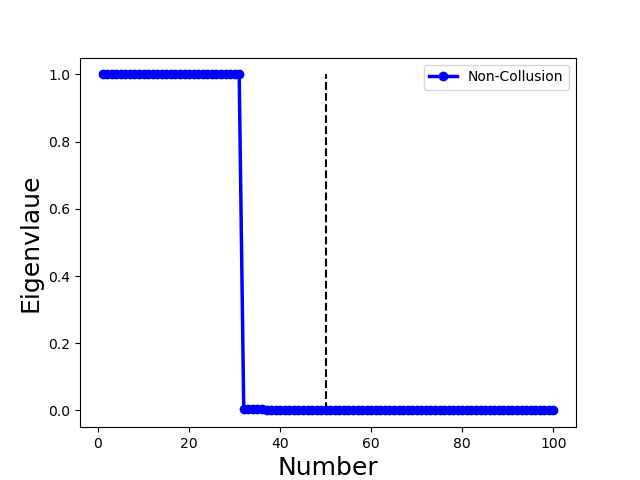}
		\end{minipage}
	}
    	\subfigure[]{
    		\begin{minipage}[b]{0.22\textwidth}
  		 	\includegraphics[width=1\textwidth]{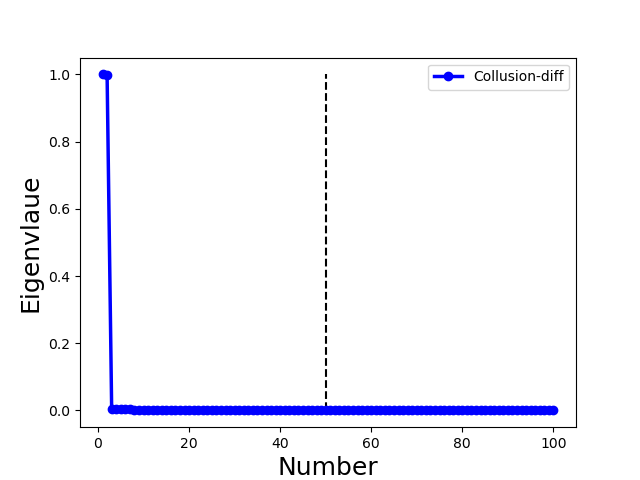}
    		\end{minipage}
    	}
    	\subfigure[]{
    		\begin{minipage}[b]{0.22\textwidth}
  		 	\includegraphics[width=1\textwidth]{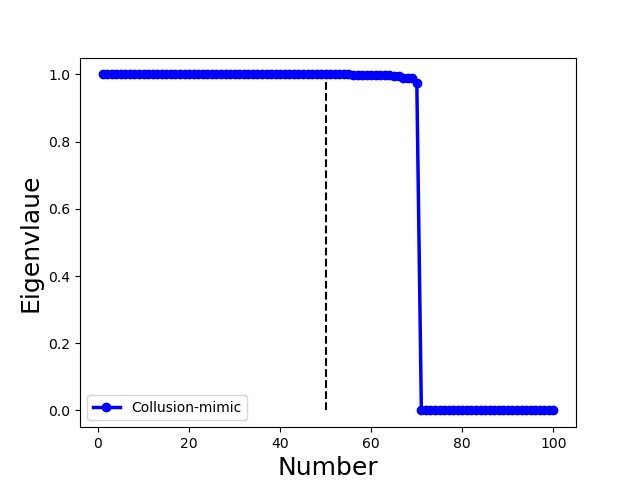}
    		\end{minipage}
    	}    	\subfigure[]{
    		\begin{minipage}[b]{0.22\textwidth}
  		 	\includegraphics[width=1\textwidth]{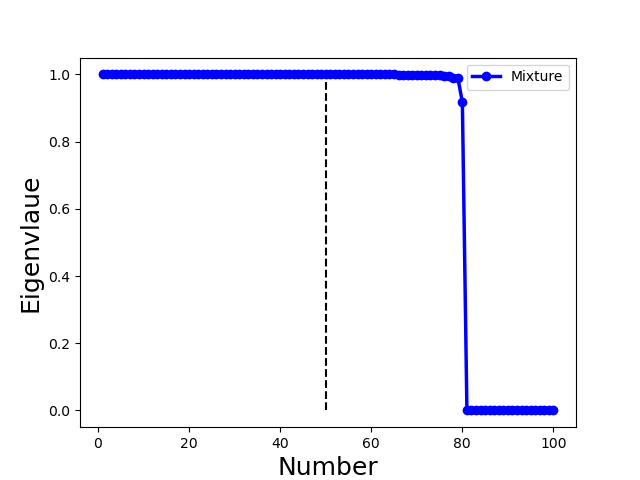}
    		\end{minipage}
    	}

 	\centering
			\subfigure[]{
		\begin{minipage}[b]{0.22\textwidth}
			\includegraphics[width=1\textwidth]{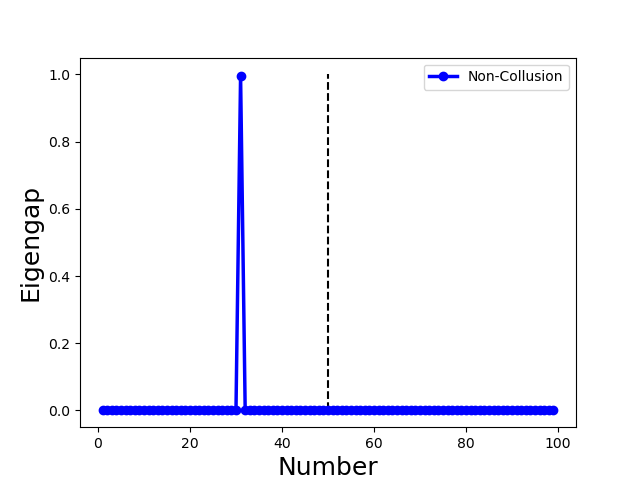}
		\end{minipage}
	}
    	\subfigure[]{
    		\begin{minipage}[b]{0.22\textwidth}
  		 	\includegraphics[width=1\textwidth]{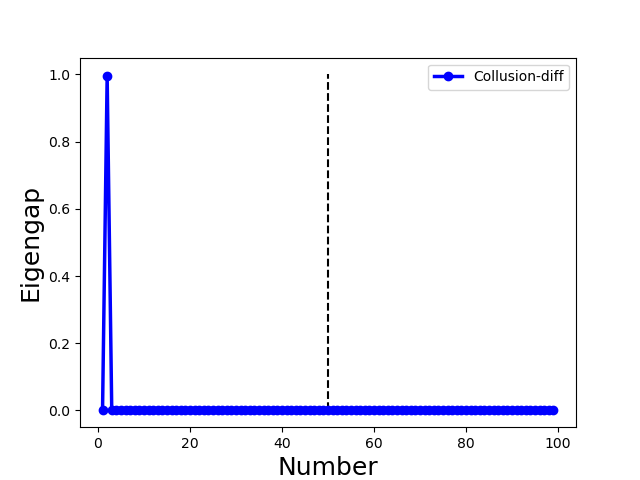}
    		\end{minipage}
    	}
    	\subfigure[]{
    		\begin{minipage}[b]{0.22\textwidth}
  		 	\includegraphics[width=1\textwidth]{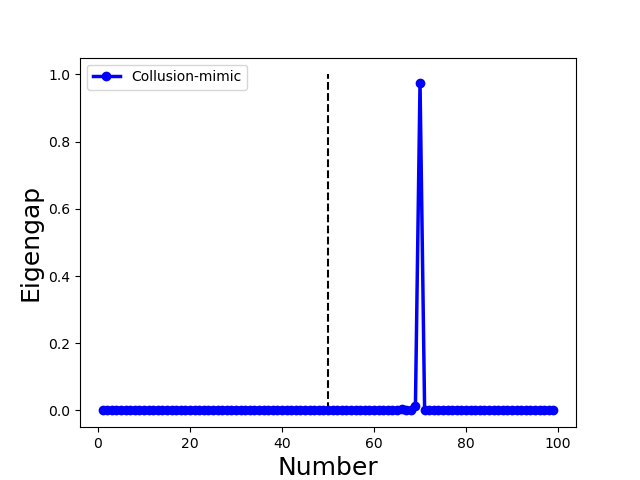}
    		\end{minipage}
    	}    	\subfigure[]{
    		\begin{minipage}[b]{0.22\textwidth}
  		 	\includegraphics[width=1\textwidth]{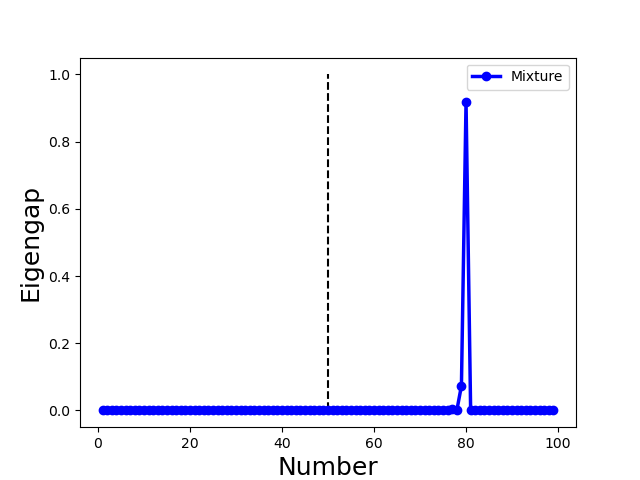}
    		\end{minipage}
    	}

\caption{Spectral analysis of four types of Byzantine attackers (Each column represents one type of attack), including Non-Collusion, Collusion-diff, Collusion-mimic and Mixture attacks. We take one example of 100 clients consisting of 70 benign clients and 30 attackers, view the model updates as nodes, and compute the pair-wise similarities as edges. The first row provides an overview of four attacks ((a), (b), (c) and (d) represents Non-Collusion, Collusion-diff, Collusion-mimic and Mixture attacks respectively where green and red points represent benign and Byzantine updates respectively), while the second and third rows show the eigenvalue and eigengap of the normalized adjacency matrix (see Def. \ref{def:eigengap}).}
\label{fig:spectral-analysis}
\vspace{-5pt}
\end{figure*}

Second, Byzantine model updates can be categorized into four types (Fig. \ref{fig:spectral-analysis}):

\begin{itemize}
    \item Non-Collusion: $\|\bg_b - \nabla F \| > \ka $ and malicious updates ($\bg_b$) are far away from each other (e.g., Gaussian attack \cite{blanchard2017machine}, label flipping \cite{data2021byzantine} and sign flipping \cite{data2021byzantine} attacks in Fig. \ref{fig:aff-matrix-iid}(a), (b) and (c)). 
    \item Collusion-diff: $\|\bg_b - \nabla F \| > \ka $ and malicious updates ($\bg_b$) form one or multiple clusters (small intra-cluster distance) (e.g., our designed collusion attack (see Sect. \ref{subsec:experiment-setup}), same value attack \cite{li2019rsa}, Fang-v1 \cite{fang2020local} attacks in Fig. \ref{fig:aff-matrix-iid}(d), (e) and (f)). 
    \item Collusion-mimic: $\|\bg_b - \nabla F \| < \ka $ and $\bg_b$ of different attackers are almost identical ($\|\bg_b^i - \bg_b^j \| \ll \ka$). It represents adversaries with strong connections form one or multiple clusters (small intra-cluster distance), and their behaviours are very similar to the few selected benign clients but are different from the rest of benign clients (e.g., Mimic attack \cite{karimireddy2020byzantine} and Lie \cite{baruch2019little} in Fig. \ref{fig:aff-matrix-iid}(g) and (h)).
    \item Mixture: adversaries may combine Non-collusion, Collusion-diff, and Collusion-mimic arbitrarily to obtain a mixture attack.
\end{itemize}

In order to address the complex types of both benign and Byzantine model updates, we adopt he eigengap technique \cite{shen2010spectral, zelnik2004self} to reliably detect benign and Byzantine clusters (or communities). We provide the following proposition to elucidate characteristics of different types of Byzantine attacks in the lens of spectral analysis. The proof of Proposition \ref{prop:attack-type} is deferred to Appendix C.


\begin{prop} \label{prop:attack-type}
Suppose $K$ clients consist of $m$ benign clients and $q$ attackers ($q<m-1$). If Assumption \ref{assum: Bounded gradient dissimilarity} holds for Non-collusion and Collusion-diff attacks, then only the first $c$ eigenvalues are close to 1 and
\begin{itemize}
    \item  $c=1+q<\frac{K}{2}$ for Non-collusion attacks provided that $||\bg_b-\nabla F || \gg \ka$ and malicious updates ($\bg_b$) are far away from each other;
    \item $c=1+B<\frac{K}{2}$ for Collusion-diff attacks provided that malicious updates form $B$ groups and $||\bg_b-\nabla F || \gg \ka$;
    \item $c >\frac{K}{2}$ for Collusion-mimic attacks that $||\bg_b-\nabla F || < \ka$ and malicious updates are almost identical.
\end{itemize}
\end{prop}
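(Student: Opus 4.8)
The plan is to reduce the statement to the classical spectral-clustering fact that, for the normalized adjacency matrix $M$ (e.g. $M=D^{-1/2}AD^{-1/2}$), the multiplicity of the eigenvalue $1$ equals the number of connected components of the similarity graph, and then simply to count components under each attack model. First I would treat the idealized regime in which inter-cluster weights vanish. Writing $A_{ij}=\exp(-\|\bg_i-\bg_j\|^2/2\sigma^2)$, any two updates with $\|\bg_i-\bg_j\|\gg\sigma$ produce an exponentially small weight, so the graph decomposes (up to negligible cross terms) into blocks. In this block-diagonal limit $M$ is itself block-diagonal, and each connected block is orthogonally similar to the stochastic random-walk matrix $D^{-1}A$ of that block, whose Perron eigenvalue is $1$ with the remaining spectrum strictly below $1$; since the kernel has unit diagonal, even an isolated node forms such a block. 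Hence the number of eigenvalues equal to $1$ is exactly the number of connected components, and the $c$ largest eigenvalues are the relevant ones.

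Next I would count the blocks for each family. For Non-collusion, Assumption~\ref{assum: Bounded gradient dissimilarity} places all benign updates in a ball of radius $\ka$ about $\nabla F$, so their pairwise distances are at most $2\ka$; taking $\ka$ small relative to $\sigma$ merges them into a single block, while $\|\bg_b-\nabla F\|\gg\ka$ together with the mutual separation of the $\bg_b$ forces each attacker into its own singleton block, giving $c=1+q$. For Collusion-diff the benign block is unchanged but the attackers now coalesce into $B$ tight groups, each contributing one additional block, so $c=1+B$; in both cases the benign-majority condition $q<m-1$ yields $c<\tfrac{K}{2}$. For Collusion-mimic the geometry is reversed: the attackers lie inside the benign radius, and the attack is effective precisely when the benign updates are themselves dispersed (large effective $\ka/\sigma$, as in the non-IID regime), so most benign nodes are mutually disconnected and form singleton-like blocks, while the near-identical attackers merge only with the few benign clients they imitate. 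With $m>\tfrac{K}{2}$ benign nodes and all but a few of them isolated, the component count exceeds $\tfrac{K}{2}$.

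Finally I would discharge the idealization by matrix perturbation. Reinstating the true (small but nonzero) cross-block weights perturbs $M$ by a matrix whose operator norm I can bound by the largest surviving cross-weight times the block sizes; Weyl's inequality then keeps the $c$ largest eigenvalues within that bound of $1$, while a Davis--Kahan argument keeps the associated eigenvectors aligned with the block-indicator subspace, so the eigengap after the $c$-th eigenvalue persists provided the cross-weights are small compared with the within-block spectral gap. I expect the Collusion-mimic case to be the main obstacle: unlike the first two families, where component counting is immediate, the bound $c>\tfrac{K}{2}$ hinges on quantifying when the dispersed benign updates genuinely decouple (i.e. relating the effective radius $\ka$ to the kernel scale $\sigma$) while the mimicking attackers stay merged, and on ensuring the perturbation estimate does not spuriously reconnect the many small benign blocks.
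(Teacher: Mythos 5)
Your proposal follows essentially the same route as the paper's own proof: idealize the kernel matrix as block-diagonal, invoke the classical fact that the multiplicity of eigenvalue $1$ of the normalized adjacency matrix equals the number of connected components, count components per attack type ($1+q$, $1+B$, and more than $\frac{K}{2}$ respectively), and then transfer to the true matrix by a Weyl/Stewart-type perturbation bound on $\|E\|$. If anything, you are more explicit than the paper where it is weakest -- the paper dispatches the Collusion-mimic case with a single ``Similarly,'' whereas you correctly flag that this case needs the benign updates to decouple relative to the kernel scale while the mimicking attackers stay merged.
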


\begin{rmk}

Proposition \ref{prop:attack-type} illustrates that the eigenvalue 1 has multiplicity $c$, and then there is a large gap to the $(c+1)_{th}$ eigenvalue $\lambda_{c+1} <1$. Therefore, we can use the position of the pronounced eigengap to elucidate the number of community clusters. Specifically, we use the largest eigengap to determine the number of the community clusters in our proposed method. Moreover, Proposition \ref{prop:attack-type} demonstrates that the position of the pronounced eigengap for Non-collusion and Collusion-diff attacks is less than $\frac{K}{2}$ while larger than $\frac{K}{2}$ for Collusion-mimic attack. This property is used in the proposed method to distinguish Collusion-mimic attacks from other types of attacks (see Sect. \ref{subsubsec:mimic}).  
\end{rmk}

We illustrate with a concrete example the different spectra of four types of attacks in Fig. \ref{fig:spectral-analysis}, in which each column represents one type of attack. We observe the following properties (all of the examples include 70 benign clients and 30 attackers):
\begin{itemize}
    \item for Non-Collusion attack, we use the Gaussian attack \cite{li2019rsa} to upload largely different random updates following $\calN(0,200)$. Fig. \ref{fig:spectral-analysis}(e) and (i) show the eigenvalue of adjacency matrix $L$ drops abruptly between index 30 and 31, i.e., the largest eigengap lies in index 31. It indicates the number of different communities is 31 with 70 benign clients forming a close group and 30 individual attackers forming other groups;
    \item  for Collusion-diff attack, we use the same-value attack \cite{li2019rsa} to upload updates consisting of all one elements, Fig. \ref{fig:spectral-analysis}(f) and (j) illustrate the largest eigengap lies in 2 elucidating that benign clients form one group and colluders with strong connections form the other group;
    \item  for Collusion-mimic attack, we use the mimic attack \cite{karimireddy2020byzantine} to upload updates that mimic one benign update, the largest eigengap is 70 because Byzantine colluders and mimiced benign clients form a group while the rest of 69 benign clients form 69 groups separately as Fig. \ref{fig:spectral-analysis}(g) and (k) illustrate. It is because the connection among mimic colluders is much stronger than benign clients.
    \item for Mixture attack, we combine the Gaussian attack (5 attackers), same-value attack (5 attackers), and mimic attack (20 attackers). Fig. \ref{fig:spectral-analysis}(h) and (l) reveal that the largest eigengap is 80, in which mimic colluders and one mimiced benign client form a group. In contrast, 69 benign clients and 10 attackers form 79 groups separately. This example showcase that the mimic attack dominates the spectral characteristics of the Mixture attack. Therefore, for the proposed method, the Collusion-mimic type attack is first detected and removed, followed by the detection of other two types of attack (see Algo. \ref{algo:para_deter} in Sect. \ref{subsubsec:mimic} and line 3-4 of Algo. \ref{algo:fedcut} in Sect. \ref{subsec:fedcut}).
\end{itemize}

\subsection{Failure Case Analysis} \label{subsec:fail-case}
We evaluate two types of representative Byzantine-resilient methods (Robust statistics and the clustering based aggregation methods) and the proposed FedCut method (see Sect. \ref{subsec:fedcut}) under four typical Byzantine Attacks mentioned in Sect. \ref{subsec:spectral property}, in terms of their Byzantine Tolerance Rates (BTR) defined below. Specifically, we assume 10 benign model updates following 1D Gaussian distribution $\calN(0.1,0.1)$ and see four types Byzantine attacks (\textbf{S1}-\textbf{S4}) as illustrated in Tab. \ref{tab:four_sce}. 

\begin{table}[htbp]
    \centering
\setlength{\tabcolsep}{2.0mm}
\renewcommand\arraystretch{1.0}
\vspace{-2pt}
\caption{\small four scenarios (\textbf{S1}-\textbf{S4}) of Byzantine Attacks. \textbf{S1} represents the \textbf{Non-Collusion} type; \textbf{S2-s} and \textbf{S2-m} represent \textbf{Collusion-diff} and the difference is \textbf{S2-s} only has a single cluster of colluders while \textbf{S2-m} consists multiple clusters of colluders; \textbf{S3} denotes the \textbf{Collusion-mimic} scenario; \textbf{S4} denotes a mixture scenario comprising the \textbf{Non-Collusion}, \textbf{Collusion-diff} and \textbf{Collusion-mimic} type.} 
\begin{tabular}{|l|l|l|l|}
\hline
Scenario & Attack Type & Attack Distribution & Number \\ \hline
\textbf{S1} &Non-Collusion& $\calN(0.1,1)$ & 8  \\ \hline 
\textbf{S2-s} &Collusion-diff&  $\calN(-2,0.01)$ & 8 \\ \hline 
 \multirow{2}{*}{\textbf{S2-m} }& \multirow{2}{*}{Collusion-diff }&$\calN(-2,0.01)$ & 4 \\
& &$\calN(4,0.01)$ &  4 \\ \hline

\textbf{S3} &Collusion-mimic&$\calN(\mu,0.01)$\tablefootnote{$\mu$ is the minimum elements of 10 benign model updates} & 8 \\ \hline 

 \multirow{3}{*}{\textbf{S4} }& \multirow{3}{*}{Mixture }&$\calN(-2,0.01)$ & 3 \\
& &$\calN(\mu,0.01)$ &  3 \\ 
& &$\calN(0.1, 1)$ &  1 \\ \hline
\end{tabular}
    \label{tab:four_sce}
    \vspace{-4pt}
\end{table}

\begin{table*}[htbp]
    \centering
\setlength{\tabcolsep}{2.0mm}
\renewcommand\arraystretch{1.5}
\vspace{-2pt}
\caption{four scenarios (\textbf{S1}-\textbf{S4}) of Byzantine Attacks are repeated 1000 runs to evaluate 5 representative Byzantine-resilient methods and the proposed FedCut method, in terms of their Byzantine Tolerance Rates (BTR) in Def. \ref{def:Byzantine Tolerance}} 
\begin{tabular}{|l|l|l|l|l|l|l|}
\hline
\textbf{BTR} & Krum \cite{blanchard2017machine} & Median \cite{yin2018byzantine}& Trimmed Mean \cite{yin2018byzantine} & DnC \cite{shejwalkar2021manipulating} & Kmeans  \cite{shen2016auror} &FedCut (ours) \\ \hline
\textbf{S1}  & 96.0\%   & 96.2\%                                                    & 92.6\%                                                         & 95.3\%                                                 & 70.5\%                                                     & \textbf{96.2\%}                                                  \\ \hline
\textbf{S2-s}  & 34.5\%                                                 & 29.5\%                                                    & 27.5\%                                                         & 89.6\%                                                 & 99.0\%                                                    & \textbf{99.0\% }                                                 \\ \hline
\textbf{S2-m}  & 86.1\%                                                 & 89.3\%                                                    & 88.2\%                                                         & 99.4\%                                                 & 3.5\%                                                    & \textbf{99.6\% }                                                  \\ \hline
\textbf{S3}  & 29.6\%                                                 & 36.9\%                                                    & 37.7\%                                                         & 52.9\%                                                  & 33.7\%                                                     & \textbf{98.7\%}                                                   \\ \hline
\textbf{S4}  & 59.5\%                                                 & 61.5\%                                                    & 67.6\%                                                         & 53.1\%                                                  & 85.6\%                                                     & \textbf{95.9\% }                                                  \\ \hline
\end{tabular}
    \label{tab:toy_example}
    \vspace{-4pt}
\end{table*}

Next, we provide a toy example of four types of Byzantine attacks mentioned above to illustrate failure cases of existing Byzantine-resilient methods.

In order to evaluate different Byzantine-resilient methods under the four scenarios, we define the Byzantine Tolerant Rate representing the fraction of Byzantine Tolerant cases over repeated runs against attacks as follows:

\begin{defi} \label{def:Byzantine Tolerance}
(Byzantine Tolerant Rate)
Suppose the server receives $(K - q)$ correct gradients $\mathcal{V} = \{v_1,\cdots, v_{K-q}\}$ and $q$ Byzantine gradients $\mathcal{U} = \{ u_1, \cdots, u_q \}$. A Byzantine robust method $\mathcal{A}$ is said to be \textbf{Byzantine Tolerant} to certain attacks \cite{xie2020fall} if
\begin{equation}
<\EE[\mathcal{V}], \EE[\mathcal{A}([\mathcal{V} \cup \mathcal{U}])] > \geq  0 
\end{equation}
Moreover, the \textbf{Byzantine Tolerant Rate (BTR)} for $\mathcal{A}$ is the fraction of Byzantine Tolerant cases over repeated runs against attacks. 
\end{defi}

Tab. \ref{tab:toy_example} summarized toy example results for different Byzantine-resilient methods under four types of Byzantine attacks mentioned above. We can draw following conclusions:
\begin{itemize}
    \item First, for Non-Collusion attack (\textbf{S1}), all Byzantine-resilient methods except Kmeans \cite{shen2016auror} perform well (i.e., the BTR is higher than 90\%), which indicates that Non-Collusion attack is easy to be defended.
    \item Second, for Collusion-diff attack (\textbf{S2}), Robust Statistics based methods such as Krum \cite{blanchard2017machine}, Median, Trimmed Mean \cite{yin2018byzantine} and DnC \cite{shejwalkar2021manipulating} are vulnerable to \textbf{S2-s}. This failure is mainly ascribed to the wrong estimation of sample \textit{mean} or \textit{median} misled by biased model updates from colluders. Moreover, the clustering based method i.e., Kmeans \cite{shen2016auror} fails in the \textbf{S2-m}, with BTR as low as 3.5\%. This is because the clustering based method relies on the assumption that only one group of colluders exists, but two or more groups of colluders in \textbf{S2-m} are misclassified by naive clustering-based methods with wrong assumptions.
    \item Third, for Collusion-mimic attack (\textbf{S3}), both Robust Statistics based methods and clustering based method fail, with BTR lower than 52.9\%. The main reason is that colluders would introduce statistical bias for benign updates and similar behaviours of colluders are hard to detect.
    \item Finally, the proposed FedCut method is able to defend against all attacks with high BTR (more than 95\%) by using a Spatial-Temporal framework and \textit{spectral heuristics} illustrated in Sect. \ref{sec:framework}.
\end{itemize}  

It is worth mentioning that the toy example used in this Section only showcases some simplified failure cases that might defeat existing Byzantine-resilient methods. Instead, the attacking methods evaluated in Sect. \ref{sec:experiment} is more complex and detrimental, and we refer readers to thorough experimental results in Sect. \ref{sec:experiment} and Appendix B.

\section{FedCut: Spectral Analysis against Byzantine Colluders} \label{sec:framework}

This section illustrates the proposed spectral analysis framework (see Algo. \ref{algo:byz-robust}) in which distinguishing benign clients from \textit{one or more groups of Byzantine colluders} is formulated as a \textit{community detection problem} in Spatial-Temporal graphs \cite{shen2010spectral}, in which nodes represent all model updates and weighted edges represent similarities between respective pairs of model update over all training iterations (see Sect. \ref{subsec:spa-temp}). The \textit{normalized graph cut} with temporal consistency \cite{shi2000normalized,ng2001spectral} (called FedCut) is adopted to ensure that a global optimal clustering is found (see Sect. \ref{subsec:fedcut}). Moreover, the \textit{spectral heuristics} \cite{zelnik2004self} is then used to determine the Gaussian scaling factor, the number of colluder groups and the attack type (see Sect. \ref{subsec: spectal-heu}). The gist of the proposed method is how to discern colluders from benign clients, by scrutinizing similarities between their respective model updates (see Fig. \ref{fig:setup} for an overview).

\begin{algorithm}[htbp]
\caption{\textbf{FedCut Framework}}
	\begin{algorithmic}[1]
		\renewcommand{\algorithmicrequire}{\textbf{Input:}}
		\renewcommand{\algorithmicensure}{\textbf{Output:}}
		\Require $K$ clients with local training datasets $\calD_i, i=1,2,\cdots,|\calD_i|$; number of global iterations $T$; 
		learning rate $\eta$ and batch size $b$.
		\Ensure Global model $\bw$. 
		\State $\bm{w} \leftarrow $ random initialization, $\tilde{L}^0 =\mathbf{0} \in \RR^{K \times K}$.
		\For{$t=1,2,\cdots,T$}
		    \State \textbf{Step \RomanNumeralCaps{1}}: The server sends the global model $\bm{w}$ to all
		    \Statex \qquad \qquad clients $i=v_1, v_2, \cdots, v_K$.
	        \State \textbf{Step \RomanNumeralCaps{2}}: Training local models and server model.
		    \ForParallel{$i=v_1, v_2, \cdots, v_K$} 
		        \State $\bg_{i}^t = ModelUpdate(\bw,\calD_{i},b,\eta)$.
		        \State Send $\bg_i$ to the server.
		    \EndFor
		    \State \textbf{Step \RomanNumeralCaps{3}}: Updating the global model via FedCut
		    \State $\calI_R, \tilde{L}^t = \textbf{\text{C-Ncut}}(\bg_{i}^t,\tilde{L}^{t-1})$
		    \State $\bg^t$ = \textbf{Aggregate}($\bg^t_{\calI_R}$), where $\bg^t_{\calI_R} = \{\bg^t_i|v_i \in \calI_R \}$
            \State $\bw \leftarrow \bw - \eta \bg^t$.
		\EndFor\\
		\Return $\bw$.
	\end{algorithmic} 
	 \label{algo:byz-robust}
\end{algorithm}

\subsection{A Spatial-Temporal Graph in Federated Learning} \label{subsec:spa-temp}

In order to use all information during the training, we define a Spatial-Temporal graph by considering behaviours of clients among all training iterations:

\begin{defi} \label{def:Spatial-Temporal graph}
(Spatial-Temporal Graph) Define a Spatial-Temporal graph $G = (V,E)$ as a sequence of snapshots $<G^1, \cdots, G^T>$, where $G^t=(V, E^t)$ is an undirected graph at iteration $t$. $V$ denotes a fixed set of $K$ vertexes representing model updates belonging to $K$ clients. $E^t$ is a set of weighted edge representing similarities between model updates corresponding to clients in $V$ at iteration $t$, where related adjacency matrix of the graph is $A^t$. 
\end{defi}

Given a set of model updates from $K$ clients over $T$ iterations during federated learning, one is able to construct such a Spatial-Temporal graph by assigning edge weights as the measured pair-wise similarity between model updates. 

The construction of the Spatial-Temporal graph with historical information is motivated by a common shortcoming of Byzantine-resilient methods, e.g., reported in \cite{karimireddy2021learning}, which showed that methods without using historical information during the training might lead to deteriorated global model performances in the presence of colluders. Our ablation study in Appendix B also confirmed the importance of using the Spatial-Temporal graph.

\subsection{FedCut} \label{subsec:fedcut}
\begin{figure*}[hbtp]
	\centering
         {\includegraphics[width = 0.75\linewidth]{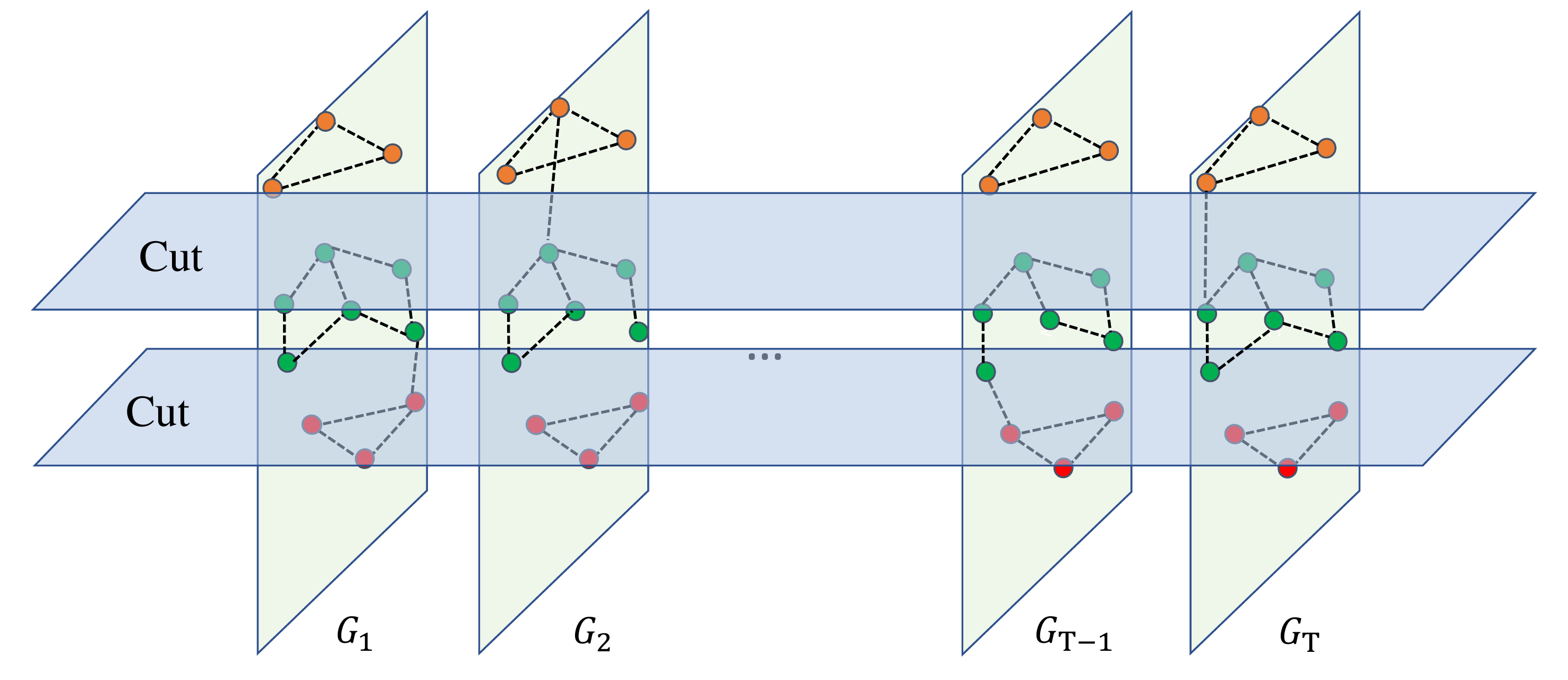}}
         \caption{FedCut: provide a temporal spatial cut to separate benign clients (green point) two-parties colluders (orange and red point) among all iterations.}

	\label{fig:setup}
	\vspace{-6pt}

\end{figure*}
As mentioned in Sect. \ref{subsec:spa-temp}, we have defined a Spatial-Temporal graph $G=<G^1,\cdots, G^T>$ according to the model updates from $K$ clients over $T$ iterations during federated learning. Byzantine problem is then regarded as to find an \textit{optimal cut} for $G$ such that the inter-cluster similarities (between benign clients and colluders) are as small as possible and intra-cluster similarities are as large as possible \footnote{We view the cluster with the largest numbers as benign clients which constitute the largest community among all clients.}. One optimal cut as such is known as the \textit{Normalized cut} (\textit{Ncut}) and has been initially applied to image segmentation \cite{shi2000normalized, ng2001spectral}. We adopt the Ncut approach to identify those colluders which have large intra-cluster similarities.
Moreover, we extend the notion of \textit{Ncut} to cater for the proposed Spatial-Temporal Graph over all iterations to make the detection of Byzantine colluders more consistent over multiple iterations. The \textit{c-partition Ncut for Spatial-Temporal Graph} is thus defined as follows:
{
\begin{defi} (c-partition Ncut for Spatial-Temporal Graph) \label{def:partition for Spatial-Temporal graph}
Let $G = (V,E)$ be a Spatial-Temporal graph as Def. \ref{def:Spatial-Temporal graph}. Denote the c-partition for graph $G$ as $V = B_1 \cup \cdots \cup B_c$ and $B_i \cap B_j = \varnothing$ for any $i,j$, c-partition Ncut for Spatial-Temporal Graph aims to optimize:
\begin{equation} \label{eq:loss-Spatial-Temporal}
    \min_{( B_1 \cup \cdots \cup B_c)=V}\quad \sum_{t=1}^T \sum_{i=1}^c\frac{W^t(B_i, \overline{B_i})}{Vol^t(B_i)}, 
\end{equation}
where $\overline{B_i}$ is the complement of $B_i$, $W^t(B_i,B_j): = \frac{1}{2}\sum_{i\in B_i, j\in B_j}A^t_{ij}$, and $A_{ij}^t$ is edge weight of $B_i$ and $B_j$, and $Vol^t(B_i): = \sum_{i \in B}\sum_{j\in V} A_{ij}^t$.
\end{defi}}
The proposed FedCut (Algo. \ref{algo:fedcut}) is used to get rid of malicious updates before model updates are aggregated. FedCut computes adjacent matrix according to the uploaded gradients over all training iterations and then takes the \textit{normalized cut} to distinguish benign clients and colluders. Specifically, three important parameters including the appropriate Gaussian kernel $\sigma_*$, the cluster number $c$ and the set of mimic colluders $\calI$ (who behave similarly to benign clients) need to be firstly determined (see Algo. \ref{algo:para_deter}). Then we calculate the adjacency matrix $A^t$ at $t_{th}$ iteration to remove the mimic colluders, i.e., set the connection between the mimic colluders and benign clients to be zero (line 2-7 in Algo. \ref{algo:fedcut}). We further compute normalized adjacency matrix $\tilde{L}^t$ by induction over $t$ iterations (line 8-9 in Algo. \ref{algo:fedcut}). Finally, we implement NCut into normalized adjacency matrix $\tilde{L}^t$ to choose the clusters of benign clients (line 10-12 in Algo. \ref{algo:fedcut}). 

It is worth noting that FedCut runs over multiple learning iterations and, as shown by Proposition \ref{prop:prop1}, it is guaranteed to provide the optimal \textit{c-partition Ncut} defined by Eq. (\ref{eq:loss-Spatial-Temporal}) (See Proof in Appendix C).

\begin{prop} \label{prop:prop1}
FedCut (Algo. \ref{algo:fedcut}) solves the \textit{c-partition Ncut} for Spatial-Temporal Graph i.e., it obtains a optimal solution by: 
\begin{equation*}
    \argmin_{( B_1 \cup \cdots \cup B_c)=V}\quad \sum_{t=1}^T \sum_{i=1}^c\frac{W^t(B_i, \overline{B_i})}{Vol^t(B_i)}
\end{equation*}.

\end{prop}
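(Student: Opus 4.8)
The plan is to reduce the discrete combinatorial objective in Eq.~(\ref{eq:loss-Spatial-Temporal}) to a relaxed trace-minimization problem over an \emph{accumulated} normalized adjacency matrix, solve that relaxation exactly via the Rayleigh--Ritz (Ky Fan) theorem, and then argue that the eigenvectors computed by FedCut together with its discretization step recover the discrete optimizer. This mirrors the classical equivalence between normalized cut and spectral clustering \cite{shi2000normalized,ng2001spectral}, the genuinely new ingredient being the temporal accumulation across the $T$ snapshots.

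First I would encode any fixed $c$-partition $V = B_1 \cup \cdots \cup B_c$ at iteration $t$ by the volume-scaled indicator vectors $h_i^t = (D^t)^{1/2}\mathbf{1}_{B_i}/\sqrt{Vol^t(B_i)}$, where $D^t$ is the degree matrix of $A^t$. Writing $H^t=[h_1^t,\dots,h_c^t]$, a direct computation gives $(H^t)^\top H^t = I$ and
\begin{equation*}
\sum_{i=1}^c\frac{W^t(B_i,\overline{B_i})}{Vol^t(B_i)} = \mathrm{Tr}\big((H^t)^\top (I-\tilde{A}^t) H^t\big),
\end{equation*}
where $\tilde{A}^t=(D^t)^{-1/2}A^t(D^t)^{-1/2}$ is the normalized adjacency (the factor $\tfrac12$ in $W^t$ is an immaterial constant). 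Summing over $t$ turns Eq.~(\ref{eq:loss-Spatial-Temporal}) into $\sum_{t=1}^T\mathrm{Tr}\big((H^t)^\top(I-\tilde{A}^t)H^t\big)$; for a single trace term, relaxing the partition-structure constraint while keeping orthonormality, the minimizer is attained by the $c$ eigenvectors of $\tilde{A}^t$ with the largest eigenvalues (equivalently the $c$ smallest of $I-\tilde{A}^t$).

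The main obstacle is that the scaled encodings $H^t$ differ across iterations because the normalization uses the iteration-specific volumes $Vol^t(B_i)$, so the summed objective does not immediately collapse to a single trace problem in one shared variable. This is precisely what the inductive construction of $\tilde{L}^t$ in lines 8--9 of Algo.~\ref{algo:fedcut} is designed to handle: I would show that the recursion accumulates the per-iteration normalized adjacencies into $\tilde{L}^T \propto \sum_{t=1}^T \tilde{A}^t$, so that minimizing the summed normalized cut becomes the single trace minimization $\min_{H^\top H=I}\mathrm{Tr}\big(H^\top(TI-\tilde{L}^T)H\big)$ over a \emph{common} $H$. Justifying that a common scaled indicator encoding is admissible --- i.e. that the per-iteration degree normalizations can be consolidated --- is the delicate point, and I would lean on the temporal-coherence property that benign (and colluder) degrees are stable across iterations, so the $Vol^t(B_i)$ factors can be absorbed into the accumulation.

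Finally, applying Ky Fan to the accumulated problem shows the relaxed optimum is the span of the top-$c$ eigenvectors of $\tilde{L}^T$, which is exactly the object FedCut computes before its discretization (k-means) step in lines 10--12. To upgrade ``relaxed optimum'' to ``exact optimal $c$-partition Ncut'' I would invoke the ideal spectral structure of Proposition~\ref{prop:attack-type}: when the eigenvalue $1$ has multiplicity $c$ with a pronounced gap to $\lambda_{c+1}$, the top-$c$ eigenspace is spanned by (normalized) cluster indicators, so the discretization recovers a genuine partition and the relaxation is tight. Hence the partition returned by FedCut attains the minimum in Eq.~(\ref{eq:loss-Spatial-Temporal}), as claimed.
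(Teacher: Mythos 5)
Your proposal follows essentially the same route as the paper's own proof: the paper's key lemma is exactly your indicator-to-trace conversion (scaled indicators $F$, substitution $F = D^{-1/2}H$ with $H^{T}H=I$), the recursion in lines 8--9 of Algo.~\ref{algo:fedcut} is unrolled to $\tilde{L}^t = \frac{1}{t}\sum_{i=1}^{t} L^i$, and the Rayleigh--Ritz theorem finishes the argument by identifying the minimizer with the top-$c$ eigenvectors of $\tilde{L}^t$. Notably, the two ``delicate points'' you flag are precisely where the paper's proof is silent: it minimizes over a single $H$ shared across all $T$ snapshots despite the iteration-specific volume normalizations $Vol^t(B_i)$, and it treats the orthonormal relaxation as if it were equivalent to the discrete $c$-partition problem without any tightness argument of the kind you sketch via the eigengap structure --- so your write-up is, if anything, more candid than the published proof about what remains unproven.
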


\begin{algorithm}[htbp]
\caption{c-NCut: c-partition Ncut for the Spatial-Temporal Graph}
	\begin{algorithmic}[1]
		\renewcommand{\algorithmicrequire}{\textbf{Input:}}
		\renewcommand{\algorithmicensure}{\textbf{Output:}}
\Require  The uploaded gradients of $K$ clients at $t_{th}$ iteration: $\{ \bg_{i}^t \}_{i=1}^K$, normalized adjacency matrix averaged over the first $t-1$ iterations: $\tilde{L}^{t-1}$;
\Ensure The set of benign clients: $\calI_R$; normalized adjacency matrix averaged over $t$ iterations: $\tilde{L}^t$. 
       \State  $c, \sigma_*, \calI=$ PDSH($\{ \bg_{i}^t \}_{i=1}^K$); 
        \State Compute the adjacency matrix $A^t$ as:
       	\If{Client $i$ $\in \calI$}
		\State $A^t_{ij} =0$ ($i \neq j$)
		\Else
	   \State $A^t_{ij} =\text{exp}(-||\bg_i^t - \bg^t_j||^2/2\sigma_*^2)$
	   \EndIf
       \State $D = \text{diag}(\text{Sum}(A^t))$, $L^t = D^{-1/2}A^tD^{-1/2}$;
       \State $\tilde{L}^t = \frac{t-1}{t}\tilde{L}^{t-1} + \frac{1}{t}L^t$;
		\State Eigen decomposition: $\tilde{L}^t= Q\Lambda Q^{-1}$;
		\State Apply \textit{Kmeans} into top $c$ normalized eigenvectors ($Q$) to assign $K$ clients into $c$ clusters.
    \State $\calI_R$ 
    is the cluster with the largest number of clients; \\
    \Return $\calI_R$ and $\tilde{L}^t$. 
	\end{algorithmic} 
	 \label{algo:fedcut}
\end{algorithm}

\subsection{Spectral Heuristics} \label{subsec: spectal-heu}
The determination of three parameters (i.e., the scaling parameter of Gaussian kernels $\sigma_*$, the number of clusters $c$ and mimic colluders $\calI$) are critical in thwarting Byzantine Collusion attacks as illustrated in Sect. \ref{subsec:fail-case}. We adopt the \textit{Spectral Heuristics} about the following definition of eigengap to determine these three important parameters. 

\begin{defi} \label{def:eigengap}
The $i_{th}$ \textit{eigengap} is defined as $|\lambda_i -\lambda_{i+1}|$ for which eigenvalues of the the normalized adjacency matrix $L = D^{-1/2}AD^{-1/2}$ are ranked in descending order, where $D = \text{diag}(\text{Sum}(A))$. Let $\delta$ be the largest eigengap among $|\lambda_i -\lambda_{i+1}|$.
\end{defi}


\begin{algorithm}[htbp]
\caption{Parameter Determination via Spectral Heuristics (PDSH)}
	\begin{algorithmic}[1]
		\renewcommand{\algorithmicrequire}{\textbf{Input:}}
		\renewcommand{\algorithmicensure}{\textbf{Output:}}
        \Require The uploaded gradients of $K$ clients at $t_{th}$ iteration: $\{ \bg_{i}^t \}_{i=1}^K$, a preselected set of $\sigma$: $\{\sigma_1,\cdots \sigma_n\}$, where $\sigma_0 \ll \sigma_1$
		\Ensure The cluster number $c$, the appropriate Gaussian kernel $\sigma_*$, the set of mimic colluders $\calI$;
		\State Initialize $\calI = \emptyset$
		\While{ $\sigma \in \{\sigma_1,\cdots \sigma_n\}$}
       \State Compute the adjacency matrix $A^t$, where $A^t_{ij} =$ \\ \qquad $\text{exp}(-||\bg_i^t - \bg^t_j||^2/2\sigma^2)$.
       \State $D = \text{diag}(\text{Sum}(A^t))$, $L^t = D^{-1/2}A^tD^{-1/2}$. 
		\State Eigen decomposition: $L^t= P\Lambda P^{-1}$;
		\State Compute eigengap $\Delta_i = \mathbf{\lambda}_i - \mathbf{\lambda}_{i+1}$, where $\mathbf{\lambda}_i$ is the \\ \qquad $i_{th}$ eigenvalue;
		\State  j = $\arg\max_i\Delta_i$, $\Delta_\sigma = \Delta_j, c_\sigma = j$;
		\EndWhile 
		\State $\sigma_* = \arg \max_{\sigma} \Delta_\sigma$, $c= c_{\sigma^*}$;
		\If {$c> \frac{K}{2}$}
		\State Apply Kmeans into top $c$ normalized eigenvectors \\
		\qquad \quad to assign $K$ clients into $c$ clusters;
         \State $\calI$ are the clusters with number larger than 1.
        \EndIf
        \State $\sigma_* = \arg \max_{\sigma} \{\Delta_\sigma|c_{\sigma}< \frac{K}{2}\}$, $c= c_{\sigma^*}$;
\\
		\Return $c$, $\sigma^*$ and the set of mimic colluders $\calI$.
	\end{algorithmic} 
	 \label{algo:para_deter}
\end{algorithm}

\begin{figure*}[htbp]
	\centering
			\subfigure[]{
		\begin{minipage}[b]{0.22\textwidth}
			\includegraphics[width=1\textwidth]{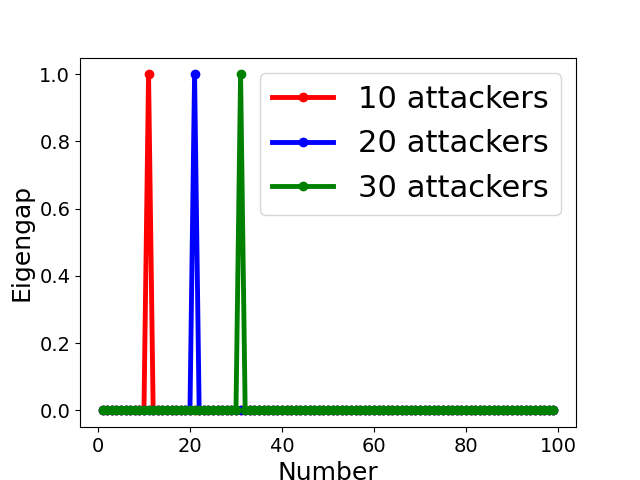}
		\end{minipage}
	}
    	\subfigure[]{
    		\begin{minipage}[b]{0.22\textwidth}
  		 	\includegraphics[width=1\textwidth]{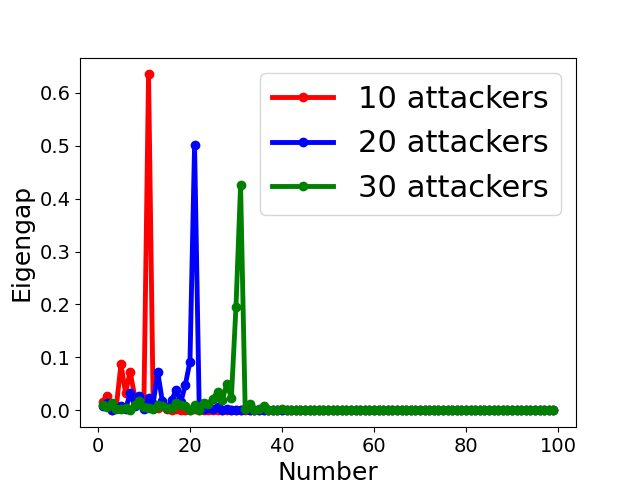}
    		\end{minipage}
    	}
    		\subfigure[]{
    		\begin{minipage}[b]{0.22\textwidth}
  		 	\includegraphics[width=1\textwidth]{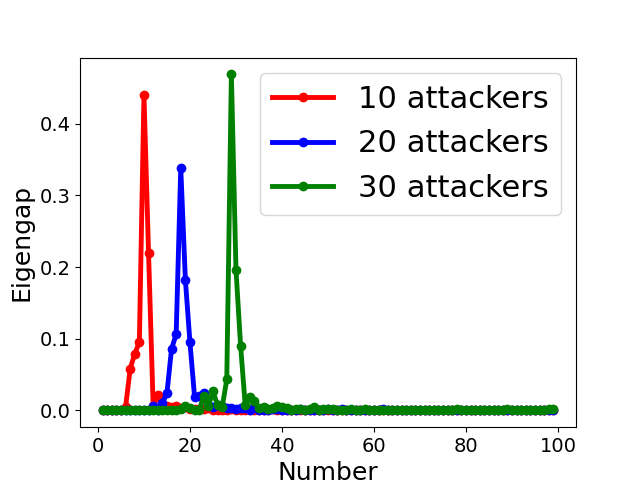}
    		\end{minipage}
    	}
    	\subfigure[]{
		\begin{minipage}[b]{0.22\textwidth}
			\includegraphics[width=1\textwidth]{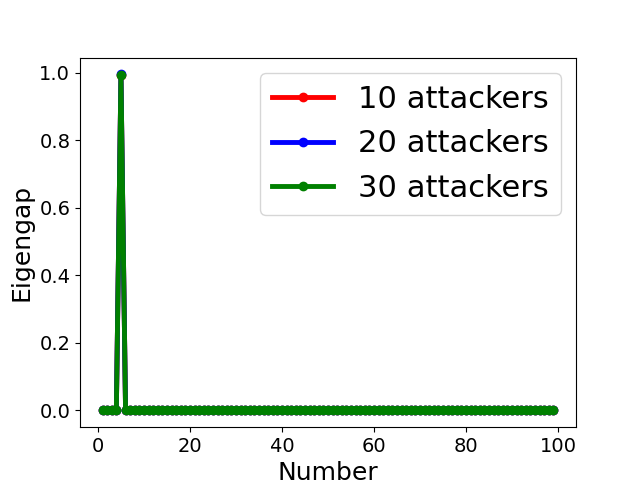}
		\end{minipage}
	}
	
	\subfigure[]{
		\begin{minipage}[b]{0.22\textwidth}
			\includegraphics[width=1\textwidth]{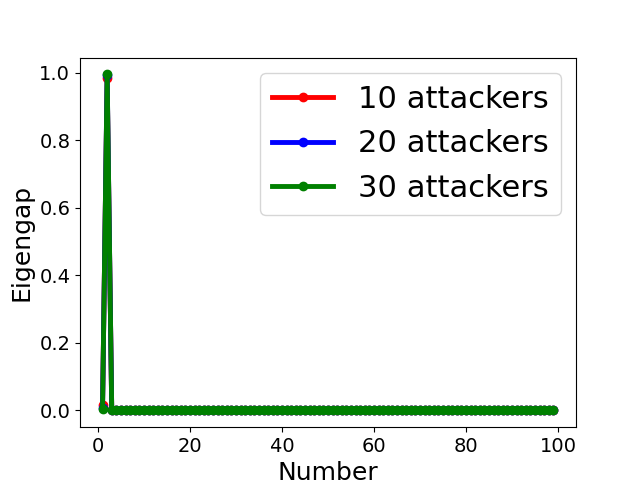}
		\end{minipage}
	}
    		\subfigure[]{
    		\begin{minipage}[b]{0.22\textwidth}
  		 	\includegraphics[width=1\textwidth]{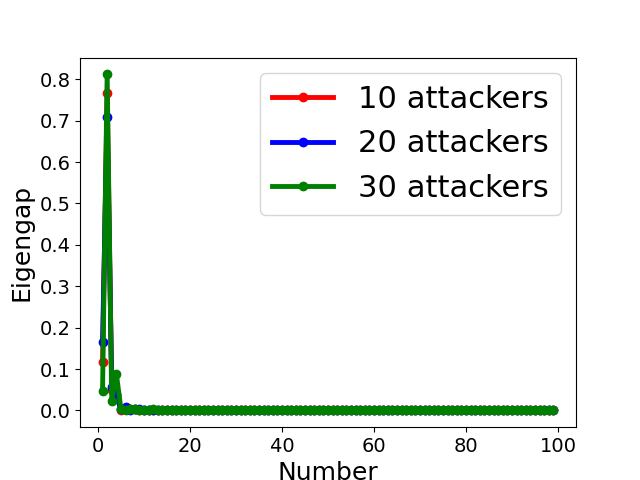}
    		\end{minipage}
    	}
    	 \subfigure[]{
    		\begin{minipage}[b]{0.22\textwidth}
  		 	\includegraphics[width=1\textwidth]{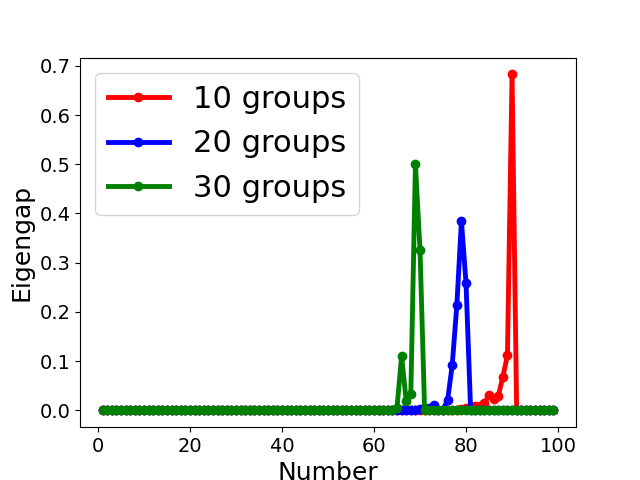}
    		\end{minipage}
    	}
	  \subfigure[]{
    		\begin{minipage}[b]{0.22\textwidth}
  		 	\includegraphics[width=1\textwidth]{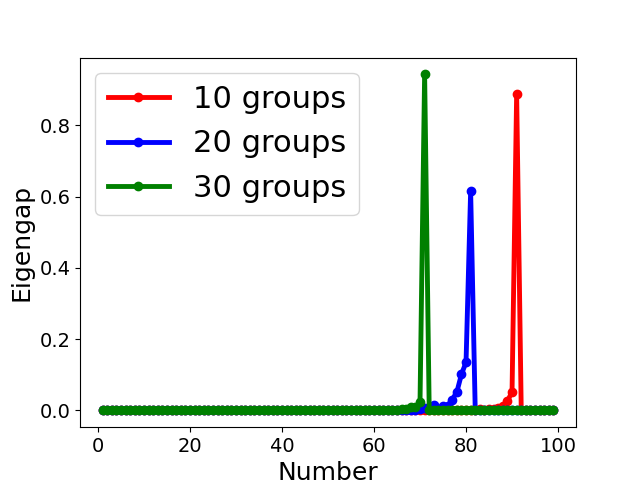}
    		\end{minipage}
    	}    

 	\caption{The change of eigengap for different attacks with different number of attackers under IID setting for MNIST dataset. From left to right, top to bottom, attack methods are Gaussian attack \cite{blanchard2017machine}, label flipping \cite{data2021byzantine}, sign flipping \cite{data2021byzantine}, our designed collusion attack (see Appendix A), same value attack \cite{li2019rsa}, Fang-v1 (design for trimmed mean) \cite{fang2020local}, Mimic attack \cite{karimireddy2020byzantine}, and Lie \cite{baruch2019little} respectively.}
	\vspace{-10pt}
	\label{fig:eigengap-clusternum1}
\end{figure*}

\subsubsection{Gaussian Kernel Determination}\label{subsubsec:gaussian}
The Gaussian kernel scaling parameter $\sigma$ in similarity function $A_{ij} = \text{exp}(-||\bg_i - \bg_j||^2/2\sigma^2)$ controls how rapidly the $A_{ij}$ falls off with the distance between $\bg_i$ and $\bg_j$. An appropriate $\sigma$ is crucial for distinguishing Byzantine and benign clients.
The following analysis shows that if the maximum eigengap $\delta$ (see Def. \ref{def:eigengap}) is sufficiently large, a small perturbation on the normalized adjacency matrix $L$ or will only affect the eigenvectors with bounded influence, and thus the clustering of $c$ clusters via top $c$ eigenvectors are stable.

\begin{prop}[Stability \cite{stewart1990matrix}] \label{prop:matrix-perturb}
Let $\lambda$, $Y$ and $\delta$ be eigenvalue, principle eigenvectors and the maximum eigengap of $L$ separately. Define a matrix small perturbation for $L$ as $\tilde{L} = L +E$ so that $||E||_2$ is small enough, let $\tilde{\lambda}$, $\tilde{Y}$ be eigenvalue and principle eigenvectors of $\tilde{L}$. If the maximum eigengap $\delta$ is large enough, then 

\begin{equation}
    ||Y -\tilde{Y}|| \leq \frac{4||E||}{\delta - \sqrt{2}||E||}
\end{equation}
\end{prop}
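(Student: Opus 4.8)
The plan is to recognize this proposition as the Davis--Kahan $\sin\Theta$ theorem specialized to the symmetric normalized adjacency matrix. First I would record the structural fact that $L = D^{-1/2}AD^{-1/2}$ is real symmetric (since $A$ is built from the symmetric similarities $A_{ij}=A_{ji}$ and $D$ is diagonal), so both $L$ and $\tilde L = L+E$ admit orthonormal eigenbases with real eigenvalues. Order the eigenvalues of $L$ as $\lambda_1 \ge \cdots \ge \lambda_K$ and let the columns of $Y$ be the top $c$ orthonormal eigenvectors, spanning the invariant subspace $\mathcal{S}$; define $\tilde Y$ and $\tilde{\mathcal{S}}$ analogously for $\tilde L$. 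By Definition~\ref{def:eigengap} the relevant maximum eigengap is $\delta = \lambda_c - \lambda_{c+1}$, which is exactly the spectral separation between the block of eigenvalues we keep and the block we discard.

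\textbf{Controlling the perturbed spectrum.} Next I would invoke Weyl's inequality, $|\tilde\lambda_i - \lambda_i| \le \|E\|$ for every $i$, to guarantee that the perturbation does not close the gap: the top $c$ eigenvalues of $\tilde L$ stay at least as high as $\lambda_c - \|E\|$ while the remaining ones stay below $\lambda_{c+1}+\|E\|$, so the eigenvalues associated with $\tilde{\mathcal{S}}^{\perp}$ are separated from those associated with $\mathcal{S}$ by at least $\delta - \|E\|$. This is precisely where the hypothesis that $\delta$ is large enough (equivalently $\|E\|$ small enough) enters: it keeps this residual separation strictly positive, so that $\mathcal{S}$ and $\tilde{\mathcal{S}}$ share the same dimension $c$ and the canonical angles between them are well defined.

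\textbf{From angles to eigenvectors.} With the residual gap in hand, the Davis--Kahan theorem bounds the canonical angles $\Theta$ between $\mathcal{S}$ and $\tilde{\mathcal{S}}$ by $\|\sin\Theta\| \le \|E\|/(\delta - \|E\|)$. Because the eigenvectors spanning a clustered block are defined only up to an orthogonal change of basis within the subspace, the meaningful quantity is $\min_{O}\|Y - \tilde Y O\|$ over orthogonal $O$, and a standard identity bounds this by $\sqrt{2}\,\|\sin\Theta\|$. Composing these two estimates and tracking the constants through the resolvent/Riccati form used in Stewart's treatment---which is what converts the alignment factor $\sqrt 2$ into the denominator and produces the leading factor $4$---yields $\|Y-\tilde Y\| \le \frac{4\|E\|}{\delta - \sqrt 2\,\|E\|}$, as claimed.

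\textbf{Main obstacle.} The principal difficulty is not any single inequality but the bookkeeping around the rotation ambiguity together with the self-consistent treatment of the gap. One must apply the $\sin\Theta$ bound with the \emph{perturbed} separation rather than the unperturbed $\delta$, and then absorb the optimal-alignment factor so that the correction surfaces as $-\sqrt 2\,\|E\|$ inside the denominator instead of as a multiplicative constant outside. Keeping the two effects coupled correctly---the Weyl shift of the gap and the optimal orthogonal alignment of the two bases---is exactly what pins down the stated constants $4$ and $\sqrt 2$.
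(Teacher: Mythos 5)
The paper does not actually prove this proposition: in the main text it is attributed outright to Stewart and Sun's matrix perturbation book, and in Appendix C it reappears verbatim as Lemma C.2, again cited rather than proved. So there is no in-paper argument to compare yours against, and your Weyl-plus-Davis--Kahan route is a genuine self-contained proof rather than a parallel of the paper's (non-)derivation. Its skeleton is sound: symmetry of $L$, Weyl's inequality to keep the spectral separation open (residual gap at least $\delta - \|E\|$), the $\sin\Theta$ theorem giving $\|\sin\Theta\| \leq \|E\|/(\delta - \|E\|)$, and the alignment identity $\min_{O}\|Y - \tilde{Y}O\| \leq \sqrt{2}\,\|\sin\Theta\|$. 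You are also right --- and it is worth stating explicitly --- that the bound is only meaningful modulo an orthogonal alignment of the two bases: with $E = 0$ one may take $\tilde{Y} = -Y$, so the unaligned inequality is false as written; your reinterpretation is a necessary repair of the statement, not a convenience.

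The one soft spot is your closing paragraph: ``tracking the constants through the resolvent/Riccati form used in Stewart's treatment'' is not a derivation --- it defers the constants $4$ and $\sqrt{2}$ back to the very reference the paper cites, which is exactly what a blind proof should avoid. But you do not need to reproduce those constants. Composing your own two intermediate estimates gives $\min_{O}\|Y - \tilde{Y}O\| \leq \sqrt{2}\,\|E\|/(\delta - \|E\|)$, and a one-line check shows this already implies the claimed bound: for $\delta > \sqrt{2}\,\|E\|$ (required anyway for the stated right-hand side to be positive),
\[
\frac{\sqrt{2}\,\|E\|}{\delta - \|E\|} \;\leq\; \frac{4\,\|E\|}{\delta - \sqrt{2}\,\|E\|}
\quad\Longleftrightarrow\quad
2\,\|E\| \;\leq\; \bigl(4-\sqrt{2}\bigr)\,\delta,
\]
which holds since $\|E\| < \delta/\sqrt{2}$. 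Replacing the hand-wave by this observation makes your argument complete, and in fact it establishes a slightly sharper bound than the one the paper imports from the literature.
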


\begin{figure}[htbp]
\vspace{-6pt}
	\centering
	\subfigure{
		\begin{minipage}[b]{0.22\textwidth}
			\includegraphics[width=1\textwidth]{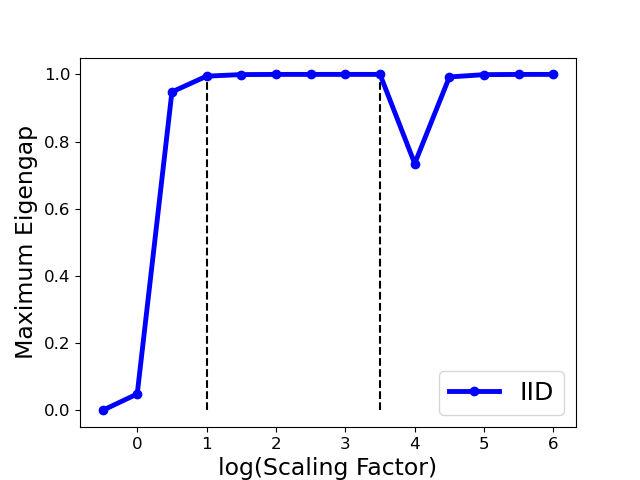}
		\end{minipage}
	}
    	\subfigure{
    		\begin{minipage}[b]{0.22\textwidth}
  		 	\includegraphics[width=1\textwidth]{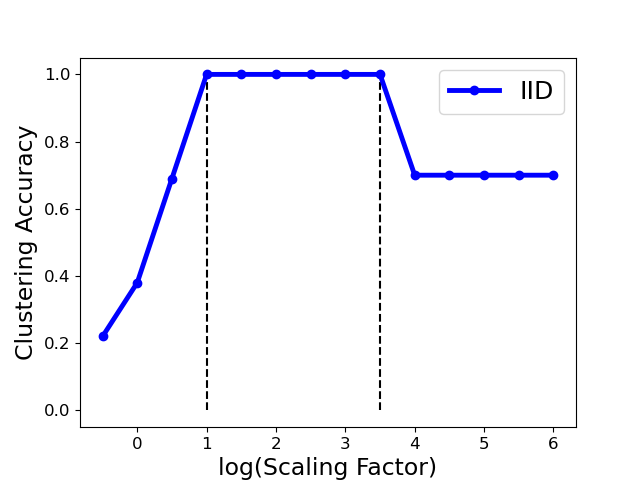}
    		\end{minipage}
    	}

	\caption{The change of the maximum eigengap of normalized adjacency matrix and clustering accuracy with different scaling factor $\sigma$ for the Gaussian attack \cite{li2019rsa}, where the normalized adjacency matrix $L = D^{-1/2}AD^{-1/2}$, $D = \text{diag}(\text{Sum}(A))$ and $A_{ij}= \text{exp}(-||\bg_i - \bg_j||^2/2\sigma^2)$. The clustering accuracy is calculated by NCut on the normalized adjacency matrix with 100 client including 70 benign clients and 30 Byzantine clients.}
	\vspace{-10pt}
	\label{fig:eigengap-gamma}
\end{figure}
It is noted that the error bound $||Y -\tilde{Y}||$ in the principle eigenvectors is affected by two factors: the perturbation $||E||$ in the affinity matrix, and the maximal eigengap $\delta$. While the perturbation is unknown and unable to modify for a given matrix $\tilde{L}$, one can seek to maximize the eigengap $\delta$ so that the recovered principle eigenvectors are as stable as possible. According to proposition \ref{prop:matrix-perturb}, we select the optimal parameter $\sigma_*$ by maximizing the maximum eigengap, i.e., 
\begin{equation}
    \sigma^* = \text{argmax}_{\sigma} \delta(\sigma)
\end{equation}
We implement this strategy in Algo. \ref{algo:para_deter}. Fig. \ref{fig:eigengap-gamma} shows that one can select $\sigma$ as such that the largest eigengap is up to its maximal (e.g., $log(\sigma)$ is in the range (1,3.5) for the Gaussian attack), provided that the clustering accuracy achieves the largest value (100\%) in this range. Also noted that the clustering accuracy drops in the extreme case (i.e., the $\sigma \geq 1e^4$) even the maximum eigengap is still large. Therefore, we select the optimal $\sigma$ at one reasonable range with removing the extreme case (see detailed range in Sect. \ref{subsec:experiment-setup}).



\subsubsection{Determination of the Number of Clusters} \label{subsubsec:determine of cluster number}
The assumption made by many clustering-based Byzantine-resilient methods that model updates uploaded by all Byzantine attackers form \textit{a single group} might be violated in the presence of collusion (as shown by the toy example in Sect. \ref{subsec:fail-case} and experimental results in Sect. \ref{sec:experiment}). Therefore, it is necessary to discover the number of clusters, and one possible way is to analyze the spectrum of the adjacency matrix, normalized adjacency matrix or correlation matrix \cite{shen2010spectral}. According to the spectral property in Sect. \ref{subsec:spectral property}, we determine the clustering number $c$ (see line 3-11 in Algo. \ref{algo:para_deter}) based on the position of the largest eigengap of normalized adjacency matrix, i.e., 
\begin{equation}
    c = \text{argmax}_{k}|\lambda_k-\lambda_{k+1} |
\end{equation}



Fig. \ref{fig:eigengap-clusternum1} displays the change of eigengap under different attacks, which shows that the index of the largest eigengap is a good estimation of the number of clusters. Specifically, 
\begin{itemize}
    \item for Non-Collusion attack (i.e., Gaussian attack \cite{blanchard2017machine}, label flipping \cite{data2021byzantine} and sign flipping \cite{data2021byzantine}), the largest eigengap in Fig. \ref{fig:eigengap-clusternum1}(a), (b) and (c) consisting of 10, 20 and 30 attackers lies in 11, 21, and 31 with benign clients forming one group and 10, 20 and 30 individual attackers;
    \item for Collusion-diff attack, the largest eigengap of Fig. \ref{fig:eigengap-clusternum1}(e) and (f) lies in between second and third largest eigenvalues. The position of the largest eigengap indicates that there are two clusters (one cluster represents benign clients while the other cluster represents colluders);
    \item for Collusion-mimic attack, the largest eigengap in Fig. \ref{fig:eigengap-clusternum1}(a), (b), and (c) of 10, 20 and 30 attackers lies in 90, 80, and 70 indicating that cluster numbers are 90, 80, and 70 (colluders form one group while other benign clients form 90, 80, and 70 groups separately);
\end{itemize}


In addition, Our empirical results in Appendix B also demonstrates the effectiveness of estimating the number of communities via the largest eigengap even for heterogeneous dataset among clients.


\subsubsection{Mimic Colluders Detection}\label{subsubsec:mimic}
Colluders may mimic the behaviours of some benign clients towards over-emphasizing some clients and render other model updates useless. Therefore, it is hard to distinguish the mimic colluders and benign clients. The existing byzantine-resilient methods could not defend the Collusion-mimic attack (see Sect. \ref{subsec:fail-case} for \textbf{S3} case). We leverage the spectral property to detect the mimic colluders if the position of the largest eigengap is larger than $\frac{K}{2}$ (Proposition \ref{prop:attack-type}). Specifically, when the position of the largest eigengap is larger than $\frac{K}{2}$, we pick the clusters with number of clients larger than two as mimic colluders sets (see line 12-15 in Algo. \ref{algo:para_deter}).

\section{Convergence Analysis} \label{sec:converg}
We prove the convergence of such an iterative FedCut algorithm. The proof for Theorem \ref{thm:thm2} uses the similar technique as \cite{data2021byzantine} (see details in Appendix C).


\begin{assumption} \label{assum: diff-cluster}
For malicious updates $\bg_b$ provided that $||\bg_b- \nabla F||> \ka$, the difference between the mean of benign updates and colluders' updates has at least $C\ka$ distance,  where $C$ is a large constant, i.e.,
\begin{equation*}
    \| \bg_b -  \nabla F \| > C\ka.
\end{equation*}

\end{assumption}



\begin{rmk}
$C$ in the Assumption \ref{assum: diff-cluster} is greatly larger than 1, which demonstrates a large distance between colluders and the mean of benign updates. If the single attacker uploads gradients within $\ka$ distance from the mean of benign updates, we can regard this attacker as one benign client. For mimic colluders who introduce the statistical bias, we detect these colluders and remove them according to spectral heuristics (see Sect. \ref{subsubsec:mimic}).
\end{rmk}

\begin{thm} \label{thm:thm1}
Suppose an $0< \alpha < \frac{1}{2}$ fraction of clients are Byzantine attackers. If Assumption \ref{assum: Bounded gradient dissimilarity} and \ref{assum: diff-cluster} holds, we can find the estimate of $\hat{\bg}$ according to line 11 in Algo. \ref{algo:byz-robust} and with the probability $1-\tilde{O}(\sqrt{Z_1})$, such that $||\hat{\bg} - \nabla F|| \leq \calO(\alpha \ka)$, where $Z_1 =  \frac{4K \sqrt{\alpha(1-\alpha)Z^{\frac{C^2}{4}}}}{\delta - K \sqrt{2\alpha(1-\alpha)Z^{\frac{C^2}{4}}}}, Z = \text{exp}(-2 \ka^2/\sigma^2)$.
\end{thm}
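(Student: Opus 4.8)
The plan is to reduce the estimation error to a \emph{clustering-accuracy} question and then control that accuracy through matrix perturbation theory, along the lines set up by Proposition~\ref{prop:matrix-perturb}. First I would construct the \emph{ideal} normalized adjacency matrix $\bar L$ corresponding to perfect block separation: the benign clients $V_R$ form one fully connected block while all benign--Byzantine edges are set to zero. Using Assumption~\ref{assum: Bounded gradient dissimilarity} (all benign updates lie within $\ka$ of $\nabla F$, so their pairwise similarities are at least $Z=\exp(-2\ka^2/\sigma^2)$) together with Assumption~\ref{assum: diff-cluster} (colluders sit beyond $C\ka$), $\bar L$ is block diagonal with a pronounced eigengap $\delta$, and the spectral/NCut step of Algo.~\ref{algo:fedcut} applied to $\bar L$ recovers $\calI_R=V_R$ exactly. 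This is the noise-free baseline against which the true matrix is compared.

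Second, I would write the true normalized adjacency matrix as $\tilde L=\bar L+E$, where $E$ collects the residual benign--Byzantine cross-similarities. The key estimate is a bound on $\|E\|$: each cross entry is a Gaussian kernel evaluated at a distance $\gtrsim C\ka$ (Assumption~\ref{assum: diff-cluster} combined with the benign radius $\ka$), hence bounded in expectation by a kernel value of order $\sqrt{Z^{C^2/4}}$, while the off-diagonal block has dimensions $(1-\alpha)K\times\alpha K$. Bounding the operator norm by $\sqrt{\#\text{rows}\cdot\#\text{cols}}$ times the largest entry yields $\EE\|E\|\lesssim K\sqrt{\alpha(1-\alpha)\,Z^{C^2/4}}$. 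Substituting this into Proposition~\ref{prop:matrix-perturb} collapses the eigenvector perturbation bound precisely to $Z_1=\tfrac{4\|E\|}{\delta-\sqrt{2}\,\|E\|}$, i.e. $\EE\|Y-\tilde Y\|\le Z_1$.

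Third, I would turn this expectation bound into the stated high-probability guarantee. Markov's inequality gives $\Pr[\|Y-\tilde Y\|\ge\sqrt{Z_1}]\le\sqrt{Z_1}$, so with probability $1-\calO(\sqrt{Z_1})$ the perturbed principal eigenvectors lie within $\sqrt{Z_1}$ of the ideal ones; since $\delta$ is large this margin is too small to flip any client's label in the Kmeans step, so $\calI_R$ coincides with $V_R$ up to at most an $O(\alpha)$ residual fraction. On this good event the aggregate $\hat\bg=\tfrac{1}{|\calI_R|}\sum_{i\in\calI_R}\bg_i$ (line~11 of Algo.~\ref{algo:byz-robust}) is a near-uniform average over benign updates; by Assumption~\ref{assum: Bounded gradient dissimilarity} each such update is within $\ka$ of $\nabla F$, and any residual $O(\alpha)$-fraction of misassignment is diluted by the $1/|\calI_R|$ averaging, giving $\|\hat\bg-\nabla F\|\le\calO(\alpha\ka)$.

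I expect the main obstacle to be the second step: rigorously passing from the geometric separation $\|\bg_b-\nabla F\|>C\ka$ to a spectral-norm bound on the \emph{random} cross block $E$ whose constants match $Z^{C^2/4}$ exactly. This requires care in (i) converting the $C\ka$ distance into a kernel-value bound after accounting for the benign radius $\ka$, and (ii) controlling the operator norm of an off-diagonal block of a random matrix rather than merely its Frobenius norm, so that the quantity fed into Proposition~\ref{prop:matrix-perturb} reproduces $Z_1$ without loose constants. A secondary subtlety is verifying in the third step that a $\sqrt{Z_1}$ eigenvector error stays below the classification margin induced by the eigengap $\delta$, which is what legitimately upgrades eigenvector closeness to exact cluster recovery and hence to the $\calO(\alpha\ka)$ aggregation bound.
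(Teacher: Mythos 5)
Your proposal follows the paper's proof in its skeleton: the same ideal block-diagonal matrix whose spectral step recovers the benign set exactly, the same decomposition $\tilde L = L + E$ with cross-block entries controlled through $Z^{C^2/4}$ and the operator-norm estimate $\|E\|\le K\sqrt{\alpha(1-\alpha)Z^{C^2/4}}$, the same substitution into the Stewart--Sun perturbation bound to produce $Z_1$, and the same dilution argument yielding $\|\hat\bg-\nabla F\|\le\calO(\alpha\ka)$. (Incidentally, the constant-matching worry you raise in your second step is not the real obstacle: the kernel value at distance $C\ka$ is exactly $Z^{C^2/4}$, and since $Z^{C^2/4}\le\sqrt{Z^{C^2/4}}$ for $Z<1$, the paper's expression is simply a valid relaxation.) The genuine divergence --- and the gap --- is in your third step, i.e.\ where the probability $1-\tilde{O}(\sqrt{Z_1})$ comes from.

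In the paper, that probability is over the \emph{internal randomization of the Kmeans step}: the proof invokes Theorem 4.14 of Ostrovsky et al.\ \cite{ostrovsky2013effectiveness} (stated as a lemma in the appendix), which says that Kmeans applied to perturbed eigenvectors $\tilde Y = Y+E_1$ returns a near-optimal clustering with probability $1-\calO(\sqrt{\epsilon})$ where $\epsilon=\|E_1\|\le Z_1$. Your proposal instead manufactures the probability by treating $\|E\|$ as a random quantity, bounding its expectation, and applying Markov's inequality. This fails on two counts. First, under Assumptions \ref{assum: Bounded gradient dissimilarity} and \ref{assum: diff-cluster} the bound $\|Y-\tilde Y\|\le Z_1$ is \emph{deterministic}; there is no randomness in the similarities for Markov to act on, your expectation step also runs against Jensen (the map $x\mapsto 4x/(\delta-\sqrt{2}\,x)$ is convex), and the detour even degrades the available bound from $Z_1$ to $\sqrt{Z_1}$. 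Second, and more importantly, your assertion that a $\sqrt{Z_1}$ eigenvector error ``is too small to flip any client's label in the Kmeans step'' is precisely the nontrivial content that the Ostrovsky et al.\ lemma supplies: Kmeans is a randomized procedure that can land in a bad local minimum no matter how small the perturbation is, so exact recovery cannot be deduced deterministically from a margin argument, and correspondingly the success guarantee is inherently probabilistic. You flagged this as a ``secondary subtlety,'' but it is the load-bearing step --- both the probability in the theorem statement and the cluster-recovery claim live there --- and your proposal offers no substitute for it.
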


\begin{rmk}
The Theorem \ref{thm:thm1} illustrates that the distance between estimated gradients $\hat{\bg}$ via FedCut and benign averaged gradients $ \bar{\bg}$ is bounded by $\calO(\alpha \ka)$. Moreover, $\alpha(1-\alpha)$ is increasing w.r.t. $\alpha$ when $\alpha <\frac{1}{2}$, and $Z_1$ is increasing w.r.t. $\alpha(1-\alpha)$ . Therefore, $Z_1$ is increasing w.r.t. $\alpha$, which indicates that the probability $1-\tilde{O}(Z_1)$ decreases as the number of Byzantine attackers increases. In addition, for the larger $C$, $Z_1$ tends to zero since $Z<1$. Consequently, $||\hat{\bg} - \nabla F|| \leq \calO( \ka)$ with a high probability close to 1. 
\end{rmk}

\begin{assumption} \label{assum:bound local variance}
The stochastic gradients sampled from any local dataset have uniformly bounded variance over $\calD_i$ for all benign clients, i.e., there exists a
finite $\sigma_0$, such that for all $\bx_{i,j} \in \calD_i , i\in [K]$, we have
 \begin{equation}
     \mathbb{E}_{j} ||[\nabla F_i(\bw_i, \bx_{i,j})) - \nabla F_i(\bw_i)||^2 \leq \sigma_0^2,
 \end{equation}
where $ \nabla F_i(\bw_i) = \mathbb{E}_{j}\nabla F_i(\bw_i, x_{i,j}))$.
\end{assumption}
\begin{rmk}
The difference between Assumption \ref{assum: Bounded gradient dissimilarity} and \ref{assum:bound local variance} is that the former bounds the variance across gradient estimates within the same client while the latter bounds the variance between model updates across clients.
\end{rmk}

\begin{assumption} \label{assu:l-smooth}
We assume that F(x) is L-smooth and has $\mu$-strong convex.
\end{assumption}


\begin{thm}\label{thm:thm2}
Suppose an $0< \alpha<\frac{1}{2}$ fraction of clients are corrupted.
For a global objective function $F:R^d\to R$, the server obtains a sequence of iterates $\{\bw^t: t\in[0:T]\}$ (see Algo. 1) when run with a fixed step-size $\eta< \min\{\frac{1}{4L},\frac{1}{\mu} \}$. If Assumption \ref{assum: Bounded gradient dissimilarity},  \ref{assum: diff-cluster}, \ref{assum:bound local variance} and \ref{assu:l-smooth} holds, the sequence of average iterates $\{\bw^t: t\in[0:T]\}$
satisfy the following convergence guarantees: 
\begin{equation}
    \left\|\bw^{T} - \bw^*\right\|^2 \leq (1-\frac{C_1\mu}{L})^T\left\|\bw^{0} - \bw^*\right\|^2 + \frac{\Gamma}{\mu^2},
\end{equation}

where $\Gamma = \calO(\sigma_0^2 + \ka^2 + \alpha^2 \ka^2)$, $C_1$ is a constant and $\bw^*$ is the global optimal weights in federated learning.
\end{thm}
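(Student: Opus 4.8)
The plan is to establish a one-step contraction for $\|\bw^{t+1}-\bw^*\|^2$ and then unroll it over the $T$ iterations to produce the geometric decay together with an additive error floor. First I would start from the server update $\bw^{t+1} = \bw^t - \eta\hat{\bg}^t$, where $\hat{\bg}^t$ is the gradient aggregated by FedCut over the clients retained by \textbf{C-Ncut}, and expand
\begin{equation*}
\|\bw^{t+1}-\bw^*\|^2 = \|\bw^t-\bw^*\|^2 - 2\eta\langle \hat{\bg}^t, \bw^t-\bw^*\rangle + \eta^2\|\hat{\bg}^t\|^2 .
\end{equation*}
The idea is to rewrite $\hat{\bg}^t = \nabla F(\bw^t) + \bm{e}^t$ with $\bm{e}^t := \hat{\bg}^t - \nabla F(\bw^t)$, so that the cross term splits into a curvature part driven by the true global gradient $\nabla F(\bw^t)$ and a perturbation part driven by $\bm{e}^t$.

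The central estimate is to bound $\|\bm{e}^t\|$ by collecting three separate error sources. The Byzantine-aggregation bias is controlled by Theorem \ref{thm:thm1}, which gives $\|\hat{\bg}^t - \nabla F\|\le\calO(\alpha\ka)$ on a good event of probability $1-\calO(\sqrt{Z_1})$; the gap between the benign mean $\nabla F$ (as in Assumption \ref{assum: Bounded gradient dissimilarity}) and the global objective gradient $\nabla F(\bw^t)$, together with the deviation introduced by the benign clients possibly retained or dropped by the cut, is bounded by $\ka$ through Assumption \ref{assum: Bounded gradient dissimilarity}; and the stochastic mini-batch noise is bounded in second moment by $\sigma_0^2$ via Assumption \ref{assum:bound local variance}. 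Applying the triangle inequality and then taking expectation over the sampling yields $\mathbb{E}\|\bm{e}^t\|^2 \le \calO(\alpha^2\ka^2 + \ka^2 + \sigma_0^2)$, which is exactly the combination that becomes $\Gamma=\calO(\sigma_0^2 + \ka^2 + \alpha^2\ka^2)$.

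For the curvature part I would invoke Assumption \ref{assu:l-smooth} in the standard coercivity form
\begin{equation*}
\langle \nabla F(\bw^t), \bw^t-\bw^*\rangle \ge \frac{\mu L}{\mu+L}\|\bw^t-\bw^*\|^2 + \frac{1}{\mu+L}\|\nabla F(\bw^t)\|^2,
\end{equation*}
using $\nabla F(\bw^*)=0$, and then handle the perturbation cross term $\langle \bm{e}^t, \bw^t-\bw^*\rangle$ by Young's inequality to trade a small multiple of $\|\bw^t-\bw^*\|^2$ against $\|\bm{e}^t\|^2/\mu$. The step-size restriction $\eta<\min\{\tfrac{1}{4L},\tfrac{1}{\mu}\}$ ensures the $\eta^2\|\hat{\bg}^t\|^2$ term is dominated by the negative curvature term, giving a one-step bound of the shape $\mathbb{E}\|\bw^{t+1}-\bw^*\|^2 \le (1-\tfrac{C_1\mu}{L})\,\mathbb{E}\|\bw^t-\bw^*\|^2 + \eta\,\calO(\Gamma)$. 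Summing the geometric series $\sum_{s=0}^{T-1}(1-\tfrac{C_1\mu}{L})^s \le \tfrac{L}{C_1\mu}$ converts the per-step additive term into the stationary floor $\Gamma/\mu^2$, completing the stated bound.

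I expect the main obstacle to be the interface between the high-probability guarantee of Theorem \ref{thm:thm1} and the in-expectation SGD recursion: the bound $\|\hat{\bg}^t-\nabla F\|\le\calO(\alpha\ka)$ holds only on an event of probability $1-\calO(\sqrt{Z_1})$, while on its complement $\hat{\bg}^t$ is uncontrolled. Making this rigorous requires either conditioning on the intersection of the good events across all $T$ iterations (and carrying the aggregated failure probability through the statement), or securing an almost-sure bound on $\|\hat{\bg}^t\|$ so that the rare bad event contributes only a lower-order term to $\Gamma$. The second delicate point is the bookkeeping that keeps the three error magnitudes separate, so that the Byzantine contribution enters as $\alpha^2\ka^2$ rather than $\ka^2$; this is precisely where following the technique of \cite{data2021byzantine} will be most useful.
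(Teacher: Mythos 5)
Your proposal follows essentially the same route as the paper's proof: a one-step recursion obtained by splitting the update into a true-gradient descent step plus a heterogeneity/variance term and an aggregation-error term (bounded respectively via Assumptions \ref{assum: Bounded gradient dissimilarity}, \ref{assum:bound local variance} and Theorem \ref{thm:thm1}), contracted through strong convexity and smoothness under $\eta<\min\{\frac{1}{4L},\frac{1}{\mu}\}$, and then unrolled into the geometric decay plus the $\Gamma/\mu^2$ floor. The only differences are cosmetic — you expand $\|\bw^{t+1}-\bw^*\|^2$ directly with $\hat{\bg}^t=\nabla F(\bw^t)+\bm{e}^t$ and use the coercivity form of strong convexity, whereas the paper decomposes the iterate into three vectors $U_1,U_2,U_3$ and bounds each — and you correctly flag the high-probability-versus-expectation interface of Theorem \ref{thm:thm1}, a point the paper's own proof silently glosses over by applying that bound deterministically inside the recursion.
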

Theorem \ref{thm:thm2} provides the convergence guarantee of FedCut framework in the strong convex case. As $T$ tends to infinity, the upper bound of $\|\bw^{T} - \bw^*\|^2$ becomes large with the increasing of $\sigma_0$ (variance of gradients within the same client), $\ka$ (variance between model updates across benign clients) and $\alpha$ (the ratio of Byzantine attackers).


\section{Experimental results} \label{sec:experiment}
\begin{table*}[bp] 

\vspace{-4pt}

\centering 
\caption{Model performances of different Byzantine-resilient methods under different Byzantine attacks (with IID setting 30 Byzantine clients for classification of MNIST, Fashion MNIST and CIFAR10). Moreover, the baseline of FedAvg \cite{mcmahan2017communication} without any attacks achieve the model performance 92.5\%, 90.1 \% and 69.4\% for MNIST, Fashion MNIST and CIFAR10 respectively.}
\footnotesize
\setlength{\tabcolsep}{0.9mm}
\renewcommand\arraystretch{1.5}
\begin{tabular}{|c||c|c|c|c|c|c|c|c|c|c|}
\hline

&& 
Krum\cite{blanchard2017machine} &GeoMedian\cite{chen2017distributed}&Median\cite{yin2018byzantine}&Trimmed\cite{yin2018byzantine}& Bulyan\cite{guerraoui2018hidden}&FLtrust\cite{cao2020fltrust}& DnC\cite{shejwalkar2021manipulating}&Kmeans\cite{shen2016auror}&FedCut(Ours)\\ \hline \hline

\multirow{9}{*}{\begin{tabular}[c]{@{}c@{}} 
M\\N
\\ I\\S\\T \end{tabular} }

 &No attack  &90.5$\pm${0.1}  &92.4$\pm${0.0}  &90.5$\pm${0.1}  &90.3$\pm${0.1}  &89.6$\pm${0.1}  &89.4$\pm${0.1}  &92.2$\pm${0.3}  &92.4$\pm${0.3}  &92.5$\pm${0.1}

 \\ \cline{2-11} &Lie \cite{baruch2019little}  &90.5$\pm${0.0}  &84.4$\pm${0.6}  &84.0$\pm${0.3}  &83.6$\pm${0.8}  &73.6$\pm${1.0}  &89.5$\pm${0.1}  &92.2$\pm${0.3}  &92.2$\pm${0.2}  &92.3$\pm${0.2}

 \\ \cline{2-11} &Fang-v1 \cite{fang2020local}  &90.2$\pm${0.1}  &45.8$\pm${0.5}  &43.4$\pm${1.5}  &37.8$\pm${2.6}  &85.5$\pm${0.3}  &75.8$\pm${3.5}  &92.3$\pm${0.3}  &92.3$\pm${0.2}  &92.3$\pm${0.1}

 \\ \cline{2-11} &Fang-v2 \cite{fang2020local}  &41.1$\pm${7.4}  &53.4$\pm${2.1}  &39.0$\pm${2.3}  &42.4$\pm${2.8}  &18.8$\pm${6.1}  &79.7$\pm${0.8}  &92.0$\pm${0.6}  &88.2$\pm${0.4}  &92.3$\pm${0.1}

 \\ \cline{2-11} &Same value \cite{li2019rsa}  &90.6$\pm${0.1}  &85.6$\pm${0.0}  &77.0$\pm${0.4}  &75.1$\pm${0.6}  &88.8$\pm${0.4}  &89.8$\pm${0.3}  &92.3$\pm${0.2}  &92.2$\pm${0.3}  &92.2$\pm${0.1}

 \\ \cline{2-11} &Gaussian \cite{blanchard2017machine}  &90.4$\pm${0.1}  &92.4$\pm${0.1}  &90.6$\pm${0.1}  &90.8$\pm${0.1}  &89.6$\pm${0.3}  &87.0$\pm${1.0}  &74.3$\pm${0.4}  &22.4$\pm${4.3}  &92.3$\pm${0.1}

 \\ \cline{2-11} &sign flipping \cite{data2021byzantine} &90.6$\pm${0.1}  &91.3$\pm${0.1}  &90.2$\pm${0.1}  &90.2$\pm${0.1}  &89.6$\pm${0.2}  &72.2$\pm${5.5}  &90.7$\pm${0.1}  &49.9$\pm${1.3}  &91.9$\pm${0.2}

 \\ \cline{2-11} &label flipping \cite{data2021byzantine} &90.4$\pm${0.1}  &89.4$\pm${0.1}  &85.6$\pm${0.2}  &85.4$\pm${0.4}  &89.5$\pm${0.4}  &89.3$\pm${0.6}  &92.2$\pm${0.2}  &92.1$\pm${0.2}  &92.1$\pm${0.4}

 \\ \cline{2-11} &minic  \cite{karimireddy2020byzantine} &86.1$\pm${0.4}  &86.3$\pm${0.9}  &88.3$\pm${0.3}  &89.7$\pm${0.2}  &85.5$\pm${0.5}  &90.5$\pm${0.1}  &92.2$\pm${0.4}  &92.1$\pm${0.1}  &92.3$\pm${0.1}

 \\ \cline{2-11} &Collusion (Ours)  &90.3$\pm${0.2}  &83.9$\pm${2.1}  &79.5$\pm${0.7}  &78.0$\pm${0.8}  &89.7$\pm${0.3}  &89.4$\pm${0.1}  &90.2$\pm${0.3}  &43.2$\pm${20.8}  &92.3$\pm${0.1} 

 \\ \cline{2-11} &Averaged 
 &85.1$\pm${0.9}  &80.5$\pm${0.7}  &76.8$\pm${0.6}  &76.3$\pm${0.9}  &80.0$\pm${1.0}  &85.3$\pm${1.2}  &90.1$\pm${0.3}  &75.7$\pm${2.8}  &\textbf{92.2$\pm${0.2} }

 \\ \cline{2-11} &Worst-case 
 &41.1$\pm${7.4}  &45.8$\pm${0.5}  &39.0$\pm${2.3}  &37.8$\pm${2.6}  &18.8$\pm${6.1}  &72.2$\pm${5.5}  &74.3$\pm${0.4}  &22.4$\pm${4.3}  &\textbf{91.9$\pm${0.2} }

\\ \hline \hline

\multirow{9}{*}{\begin{tabular}[c]{@{}c@{}} 
F\\M\\N
\\ I\\S\\T \end{tabular} }

 &No attack  &84.6$\pm${0.5}  &89.5$\pm${0.3}  &88.2$\pm${0.4}  &88.6$\pm${0.4}  &87.3$\pm${0.3}  &88.2$\pm${0.5}  &90.3$\pm${0.2}  &90.0$\pm${0.3}  &90.1$\pm${0.3}

 \\ \cline{2-11} &Lie \cite{baruch2019little}  &85.3$\pm${0.6}  &60.7$\pm${12.9}  &50.5$\pm${22.7}  &68.9$\pm${6.9}  &55.5$\pm${10.3}  &75.8$\pm${4.3}  &78.4$\pm${5.0}  &89.0$\pm${0.8}  &89.7$\pm${0.6}

 \\ \cline{2-11} &Fang-v1 \cite{fang2020local}  &85.4$\pm${0.3}  &82.5$\pm${1.0}  &73.1$\pm${4.1}  &69.1$\pm${4.6}  &84.8$\pm${0.9}  &86.6$\pm${1.3}  &89.5$\pm${0.6}  &89.7$\pm${0.3}  &89.7$\pm${0.5}

 \\ \cline{2-11} &Fang-v2 \cite{fang2020local}  &10.0$\pm${11.1}  &12.6$\pm${4.8}  &64.2$\pm${3.3}  &46.6$\pm${16.6}  &10.5$\pm${2.4}  &80.1$\pm${2.3}  &71.5$\pm${10.4}  &69.3$\pm${1.4}  &89.3$\pm${0.4}

 \\ \cline{2-11} &Same value \cite{li2019rsa}  &85.6$\pm${0.6}  &59.9$\pm${14.3}  &69.8$\pm${0.8}  &66.5$\pm${3.6}  &86.7$\pm${0.3}  &88.1$\pm${0.7}  &89.4$\pm${0.6}  &88.8$\pm${0.6}  &89.6$\pm${0.4}

 \\ \cline{2-11} &Gaussian \cite{blanchard2017machine}  &85.5$\pm${0.3}  &89.5$\pm${0.5}  &87.5$\pm${0.5}  &88.2$\pm${0.4}  &87.1$\pm${0.8}  &87.9$\pm${0.7}  &69.8$\pm${7.7}  &17.9$\pm${27.3}  &89.7$\pm${0.2}

 \\ \cline{2-11} &sign flipping \cite{data2021byzantine} &85.2$\pm${0.8}  &85.9$\pm${1.2}  &86.2$\pm${0.6}  &86.3$\pm${0.5}  &87.1$\pm${0.4}  &84.0$\pm${1.4}  &76.9$\pm${1.1}  &52.8$\pm${3.0}  &87.5$\pm${0.9}

 \\ \cline{2-11} &label flipping \cite{data2021byzantine} &85.5$\pm${0.2}  &89.6$\pm${0.3}  &72.6$\pm${5.1}  &78.5$\pm${4.8}  &87.1$\pm${0.3}  &87.9$\pm${0.6}  &89.8$\pm${0.3}  &89.9$\pm${0.4}  &89.1$\pm${0.3}

 \\ \cline{2-11} &minic  \cite{karimireddy2020byzantine} &79.7$\pm${0.6}  &81.2$\pm${0.4}  &83.0$\pm${1.3}  &86.7$\pm${0.8}  &79.8$\pm${0.7}  &87.7$\pm${0.2}  &88.8$\pm${0.2}  &89.1$\pm${0.4}  &89.4$\pm${0.7}

 \\ \cline{2-11} &Collusion (Ours)  &84.7$\pm${0.5}  &35.9$\pm${22.7}  &71.2$\pm${6.3}  &69.5$\pm${1.8}  &86.8$\pm${0.6}  &88.4$\pm${0.3}  &85.5$\pm${1.7}  &76.9$\pm${13.6}  &90.0$\pm${0.5} 

 \\ \cline{2-11} &Averaged 
 &77.2$\pm${1.6}  &68.7$\pm${5.8}  &74.6$\pm${4.5}  &74.9$\pm${4.0}  &75.3$\pm${1.7}  &85.5$\pm${1.2}  &83.0$\pm${2.8}  &75.3$\pm${4.8}  &\textbf{89.4$\pm${0.5} }

 \\ \cline{2-11} &Worst-case 
 &10.0$\pm${11.1}  &12.6$\pm${4.8}  &50.5$\pm${22.7}  &46.6$\pm${16.6}  &10.5$\pm${2.4}  &75.8$\pm${4.3}  &69.8$\pm${7.7}  &17.9$\pm${27.3}  &\textbf{87.5$\pm${0.9} }

\\ \hline \hline

\multirow{9}{*}{\begin{tabular}[c]{@{}c@{}} 
C\\I
\\ F\\A\\R\\10 \end{tabular} }

 &No attack  &64.6$\pm${0.1}  &64.9$\pm${2.1}  &39.0$\pm${25.2}  &66.6$\pm${1.6}  &29.5$\pm${6.6}  &68.3$\pm${0.0}  &69.7$\pm${0.9}  &68.6$\pm${0.6}  &68.0$\pm${0.4}

 \\ \cline{2-11} &Lie \cite{baruch2019little}  &63.9$\pm${2.1}  &10.0$\pm${0.2}  &10.2$\pm${0.2}  &9.8$\pm${0.2}  &10.0$\pm${0.1}  &12.4$\pm${4.2}  &11.9$\pm${3.3}  &20.4$\pm${22.0}  &68.4$\pm${0.4}

 \\ \cline{2-11} &Fang-v1 \cite{fang2020local}  &64.4$\pm${0.5}  &63.9$\pm${0.3}  &11.2$\pm${2.1}  &17.5$\pm${12.3}  &17.3$\pm${3.4}  &68.1$\pm${0.8}  &66.8$\pm${0.3}  &68.0$\pm${0.9}  &67.7$\pm${3.9}

 \\ \cline{2-11} &Fang-v2 \cite{fang2020local}  &9.9$\pm${0.1}  &15.0$\pm${4.1}  &11.0$\pm${1.3}  &11.3$\pm${1.4}  &10.0$\pm${0.1}  &61.8$\pm${4.2}  &67.0$\pm${0.6}  &53.4$\pm${12.7}  &68.5$\pm${1.5}

 \\ \cline{2-11} &Same value \cite{li2019rsa}  &64.8$\pm${0.7}  &50.2$\pm${0.3}  &10.0$\pm${0.0}  &10.0$\pm${0.0}  &54.8$\pm${0.2}  &29.2$\pm${33.3}  &65.1$\pm${5.2}  &65.3$\pm${11.3}  &66.4$\pm${1.8}

 \\ \cline{2-11} &Gaussian \cite{blanchard2017machine}  &63.5$\pm${1.1}  &62.1$\pm${4.3}  &48.4$\pm${4.8}  &65.4$\pm${0.2}  &17.1$\pm${6.2}  &68.1$\pm${0.6}  &31.8$\pm${2.3}  &12.6$\pm${1.3}  &65.3$\pm${0.7}

 \\ \cline{2-11} &sign flipping \cite{data2021byzantine} &65.0$\pm${0.5}  &58.6$\pm${12.5}  &16.6$\pm${5.6}  &51.9$\pm${1.7}  &40.2$\pm${14.9}  &28.7$\pm${32.4}  &47.5$\pm${9.4}  &17.7$\pm${22.6}  &63.1$\pm${5.5}

 \\ \cline{2-11} &label flipping \cite{data2021byzantine} &63.7$\pm${1.4}  &36.1$\pm${22.7}  &18.6$\pm${3.3}  &50.2$\pm${2.1}  &13.0$\pm${2.7}  &61.3$\pm${2.5}  &57.1$\pm${0.8}  &58.3$\pm${1.5}  &63.8$\pm${2.0}

 \\ \cline{2-11} &minic  \cite{karimireddy2020byzantine} &44.1$\pm${0.9}  &65.0$\pm${0.5}  &56.4$\pm${1.0}  &66.0$\pm${0.8}  &42.2$\pm${2.8}  &48.0$\pm${32.9}  &65.9$\pm${1.7}  &66.6$\pm${3.6}  &68.8$\pm${0.2}

 \\ \cline{2-11} &Collusion (Ours)  &63.3$\pm${0.6}  &10.0$\pm${0.1}  &9.7$\pm${0.1}  &10.2$\pm${0.0}  &19.4$\pm${1.2}  &67.9$\pm${0.7}  &29.5$\pm${13.4}  &10.5$\pm${0.3}  &67.0$\pm${1.7} 

 \\ \cline{2-11} &Averaged 
 &56.7$\pm${0.8}  &43.6$\pm${4.7}  &23.1$\pm${4.4}  &35.9$\pm${2.0}  &25.3$\pm${3.8}  &51.4$\pm${11.2}  &51.2$\pm${3.8}  &44.1$\pm${7.7}  &\textbf{66.7$\pm${1.8} }

 \\ \cline{2-11} &Worst-case 
 &9.9$\pm${0.1}  &10.0$\pm${0.2}  &9.7$\pm${0.1}  &9.8$\pm${0.2}  &10.0$\pm${0.1}  &12.4$\pm${4.2}  &11.9$\pm${3.3}  &10.5$\pm${0.3}  &\textbf{63.1$\pm${5.5}}

\\ \hline

\end{tabular}
\label{tab:main-result}
\vspace{-6pt}
\end{table*}

This section illustrates the proposed FedCut method's experimental results compared with eight existing Byzantine-robust methods. We refer to Appendix A and B for the full report of extensive experimental results. 

\subsection{Setup and Evaluation Metrics} \label{subsec:experiment-setup}

\begin{itemize}[itemsep=.0\baselineskip]
    \item \textbf{Models}: \textit{logistic regression}, \textit{LeNet} \cite{lecun2015lenet} and \textit{AlexNet} \cite{iandola2016squeezenet} three models are used in all experiment settings. (see results for other models and datasets in supplementary material). 

    \item \textbf{Datasets}: \textit{MNIST} \cite{lecun1998gradient}, \textit{Fashion-MNIST} \cite{xiao2017fashion} and \textit{CIFAR10} \cite{krizhevsky2009learning} are used for image classification tasks. To simulate Non-IID settings, class labels assigned to clients follows a Dirichlet distribution $Dir(\beta)$ \cite{li2021federated}. 

    \item \textbf{Federated Learning Settings:} We simulate a horizontal federated learning system with K = 100 clients in a stand-alone machine with 8 Tesla V100-SXM2 32 GB GPUs and 72 cores of Intel(R) Xeon(R) Gold 61xx CPUs. In each communication round, the clients update the weight updates, and the server adopts \textit{Fedavg} \cite{mcmahan2017communication} algorithm to aggregate the model updates. The detailed experimental hyper-parameters are listed in Appendix A.

    \item \textbf{Byzantine attacks}: 
   We set 10\%, 20\% and 30\% clients,
   i.e., 10, 20, and 30 out of 100 clients are Byzantine attackers. The following attacking methods are used in experiments: 
   \begin{itemize}
       \item the \textit{same value attack}: model updates of attackers
        are replaced by the all ones' vector;
      \item the \textit{sign flipping attack}: local gradients of attackers are shifted by a scaled value -4;
     \item the \textit{gaussian attack}: local gradients at clients are replaced by independent Gaussian random vectors $\calN(0,200)$;
     \item  the \textit{Lie attack}: it was designed in \cite{baruch2019little};
     \item  the \textit{Fang-v1 attack} and \textit{Fang-v2 attack}: they were designed in \cite{fang2020local} for coordinate-wise trimmed mean \cite{yin2018byzantine} and Krum \cite{blanchard2017machine} respectively;
     \item Mimic attack: Colluders may mimic the behaviours of some benign clients towards over-emphasizing some clients and under-representing others \cite{karimireddy2020byzantine}.
     \item Our designed \textit{multi-collusion attack}: adversaries are separated into 4 groups, and the same group has similar values. For example, each group is sampled from $\calN(\mu+ \mu_i, 0.0001)$, and different groups have different $\mu_i$, where $\mu$ is the mean of uploaded gradients of all other benign clients.
   \end{itemize}

    
    \item \textbf{Byzantine-resilient methods}: \textit{Nine} existing methods i.e., Statistic-based methods: Krum \cite{blanchard2017machine}, Median \cite{yin2018byzantine}, Trimmed Mean \cite{yin2018byzantine}, Bulyan \cite{guerraoui2018hidden} and DnC \cite{shejwalkar2021manipulating}, Serve-evaluating methods: FLtrust \cite{cao2020fltrust}, Clustering-based methods: Kmeans \cite{shen2016auror} and the proposed method \textit{FedCut} are compared in terms of following metrics. 
    
    \item \textbf{Gaussian kernel scaling parameters of FedCut}: the preselected set of $\sigma$: $\{\sigma_1, \cdots, \sigma_n\}$ in Algo. \ref{algo:para_deter} is the geometric sequence with common ratio $2$ of $(1, 10\sqrt{10})$ for MNIST, $(1, 10\sqrt{10})$ for Fashion-MNIST and $(0.1, 10\sqrt{10})$ for CIFAR10.

\end{itemize}
\begin{figure*}[bp]
\vspace{-3pt}
	\centering
	\subfigure{
		\begin{minipage}[b]{0.3\textwidth}
\includegraphics[width=1\textwidth]{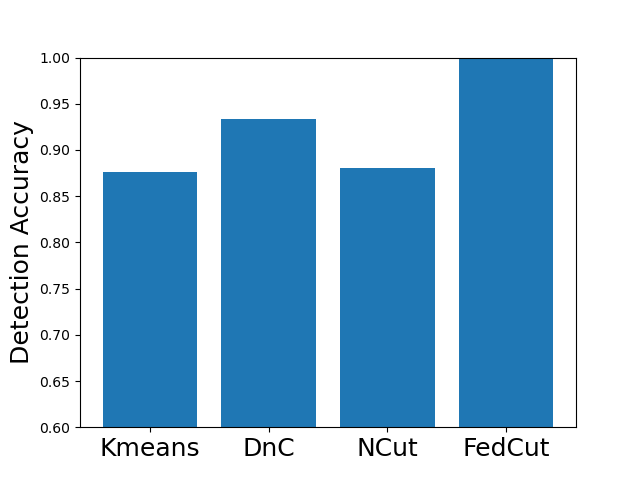}
		\end{minipage}
	}
    	\subfigure{
    		\begin{minipage}[b]{0.3\textwidth}
  		 	\includegraphics[width=1\textwidth]{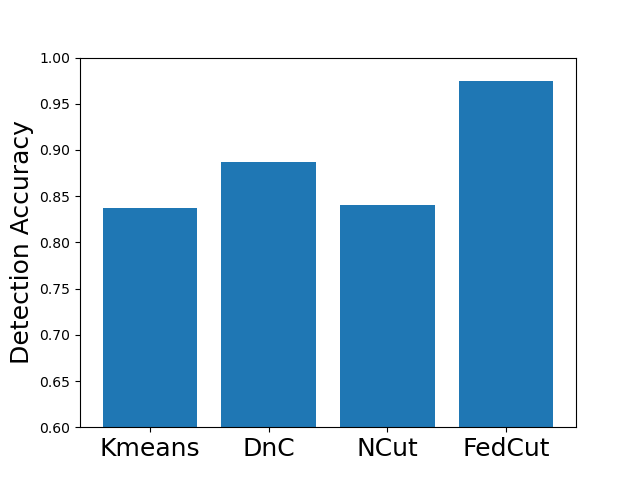}
    		\end{minipage}
    	}
    		\subfigure{
    		\begin{minipage}[b]{0.3\textwidth}
  		 	\includegraphics[width=1\textwidth]{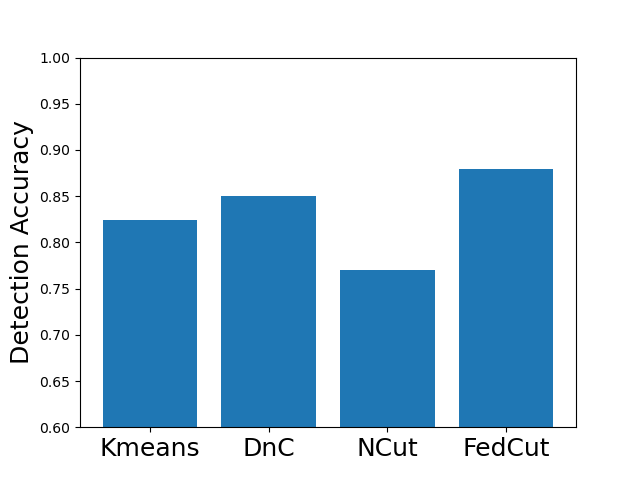}
    		\end{minipage}
    	}

	\subfigure{
		\begin{minipage}[b]{0.3\textwidth}
\includegraphics[width=1\textwidth]{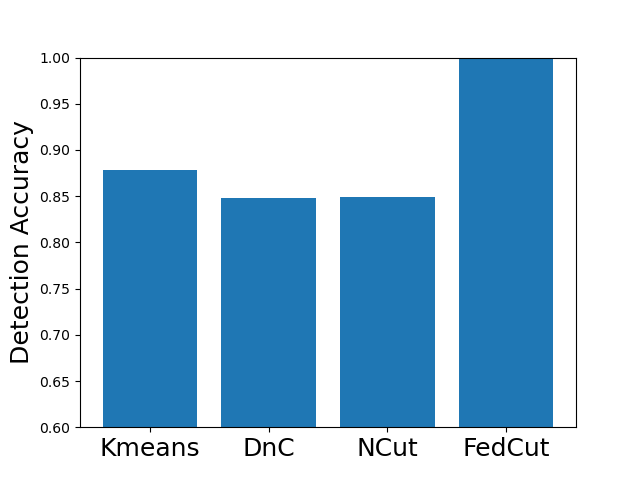}
		\end{minipage}
	}
    	\subfigure{
    		\begin{minipage}[b]{0.3\textwidth}
  		 	\includegraphics[width=1\textwidth]{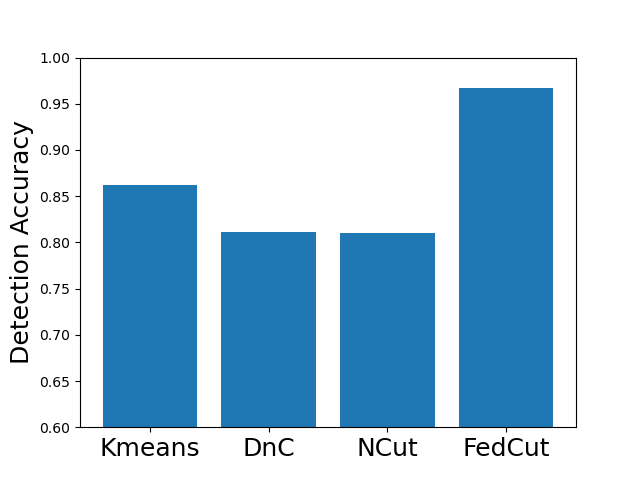}
    		\end{minipage}
    	}
    		\subfigure{
    		\begin{minipage}[b]{0.3\textwidth}
  		 	\includegraphics[width=1\textwidth]{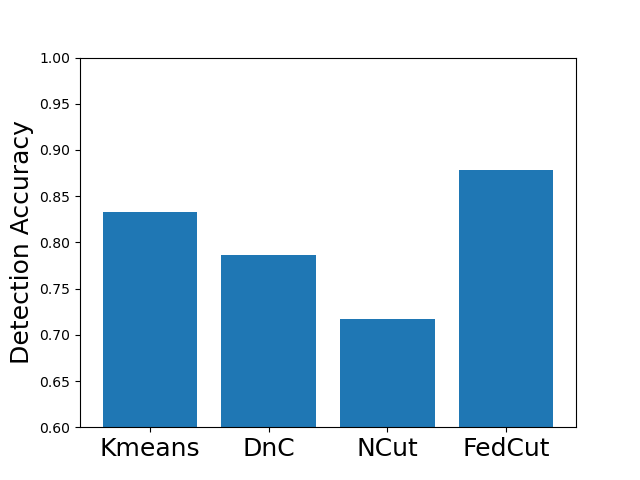}
    		\end{minipage}
    	}
    	
	\caption{The Detection accuracy of different Byzantine-resilient methods (Kmeans, DnC, NCut and FedCut) among all 8 attacks with different Non-IID extent (left: IID, middle: Non-IID with $\beta =0.5$, right: Non-IID with $\beta=0.1$) on MNIST (first row) and Fashion MNIST (second row). }
	\label{fig:detection-acc}
	\vspace{-3pt}
\end{figure*}

\textbf{Evaluation metric:} two types of metrics are used in our evaluation.
\begin{itemize}[itemsep=.0\baselineskip]
    \item \textit{Model Performance (MP)} of the federated model is used to evaluate defending capabilities of different methods. 
    In order to elucidate robustness of each defending method, we also report respective \textit{averaged} and \textit{worst-case} model performances under all possible attacks. 
    

    \item For methods including DnC \cite{shejwalkar2021manipulating}, Kmeans \cite{shen2016auror} and FedCut, which detect benign clients before aggregation, \textit{detection accuracy} among all iterations is used to quantify their defending capabilities. Note that this accuracy is measured against \textit{ground-truth Byzantine attackers' membership}, and it should not be confused with the image classification accuracy of the main task. 
\end{itemize}

\subsection{Comparison with Other Byzantine-resilient Methods}

Tab. \ref{tab:main-result} summarizes Model Performances (MP) of 8 existing methods as well as the proposed FedCut method for classification of MNIST, Fashion MNIST, and CIFAR10 using logistic regression, LeNet, and AlexNet respectively under IID setting and 30 attackers (see Appendix B for more results with other settings). There are four noticeable observations. 
\begin{itemize}
    \item FedCut performs robustly under almost all attacks with worst-case MP above 92.0\% for MNIST, 87.5\% for Fashion MNIST and 63.1\% for CIFAR10.
    This robust performance is in sharp contrast with eight existing methods, each of which has a \textit{worst-case MP degradation} ranging from 18.2\% (i.e., DnC by Gaussian attack reaching 74.3\% MP)
    to 73.7\% (i.e., Bulyan by Fang-v2 merely having 18.8\% MP) on MNIST. 
    \item  In terms of averaged MP, it is clearly observed that FedCut outperformed eight existing methods by noticeable margins ranging between 2.1\% to 16.5\% on MNIST, 3.9\% to 20.7\% on Fashion MNIST and 10.0\% to 43.6\% on CIFAR10. 
    \item Robust Statistics based methods (e.g., Krum, GeoMedian, Median, and Trimmed Mean) were overwhelmed by Collusion-diff and Collusion-mimic attacks (such as the MP drops 49.1\% for Median under Fang-v1 attack and the MP drops 51.4\% for Krum under Fang-v2 attack), incurred a significant MP degradation on MNIST. In contrast, all Collusion attacks cause a minor MP loss of less than 0.5\% to FedCut. 
    \item Existing clustering-based method, i.e., Kmeans performs robustly in the presence of single-group colluders but with MP degradation less than 2\% (e.g., Kmeans for Same Value attack on MNIST, Fashion MNIST and CIFAR10), but incurred significant MP degradation more than 50\% in the face of multi-group colluders attackers (e.g., Kmeans for the Multi-Collusion attack on MNIST and Gaussian attack on Fashion MNIST). In contrast, multi-group collusion doesn't cause significant MP loss to the proposed FedCut, which adopts spectral heuristics to make an estimation of the number of colluder groups (see Sect. \ref{subsec: spectal-heu}). Correspondingly, the detection accuracy of FedCut is higher than other methods, as Fig. \ref{fig:detection-acc} shows. For instance, the detection accuracy of FedCut on MNIST in the IID setting is 99.9\%, while the detection accuracy of DnC and Kmeans drop to 93.6\% and 87.5\% separately.

\end{itemize}



    	
\begin{figure*}[bp]
\vspace{-2pt}
	\centering
	\subfigure{
		\begin{minipage}[b]{0.4\textwidth}
\includegraphics[width=1\textwidth]{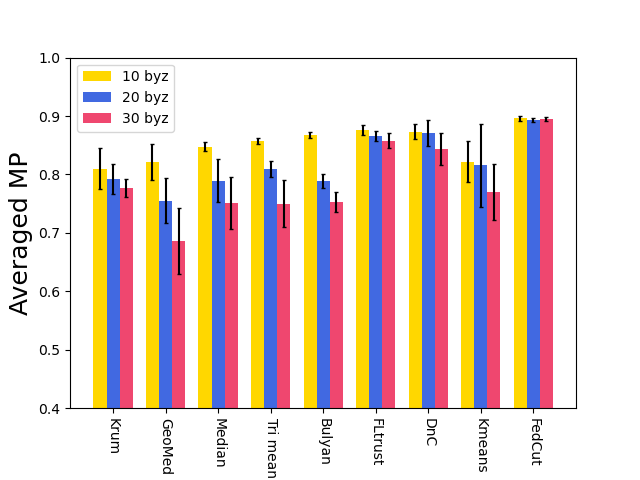}
		\end{minipage}
	}
    	\subfigure{
    		\begin{minipage}[b]{0.4\textwidth}
  		 	\includegraphics[width=1\textwidth]{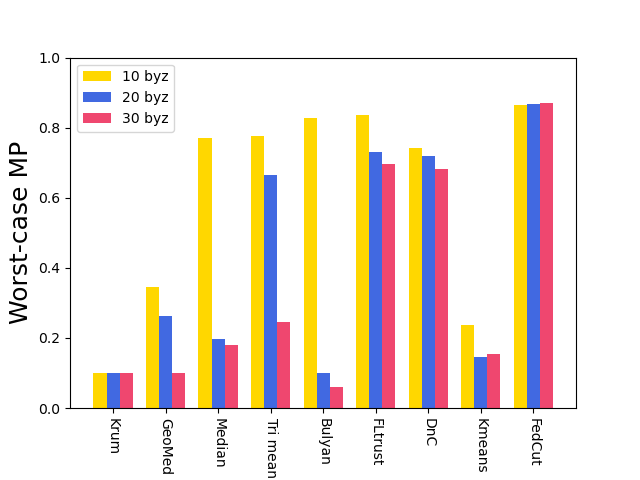}
    		\end{minipage}
    	}
    	

	\caption{ Averaged and worst-case model performance among all 8 attacks of different Byzantine-resilient methods with IID setting for different byzantine numbers (yellow: 10 attackers, blue: 20 attackers, green: 30 attackers) on Fashion MNIST. }
	\vspace{-2pt}
	\label{fig:robust-byznum}
\end{figure*}

\begin{figure*}[bp]
\vspace{-3pt}
	\centering
    	
    		\subfigure{
		\begin{minipage}[b]{0.4\textwidth}
\includegraphics[width=1\textwidth]{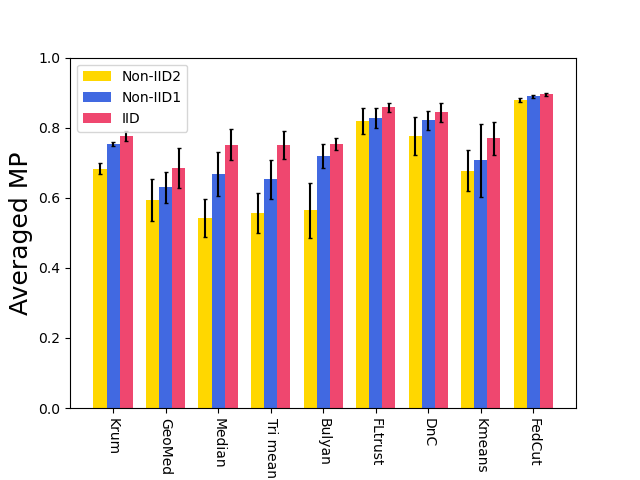}
		\end{minipage}
	}
    	\subfigure{
    		\begin{minipage}[b]{0.4\textwidth}
  		 	\includegraphics[width=1\textwidth]{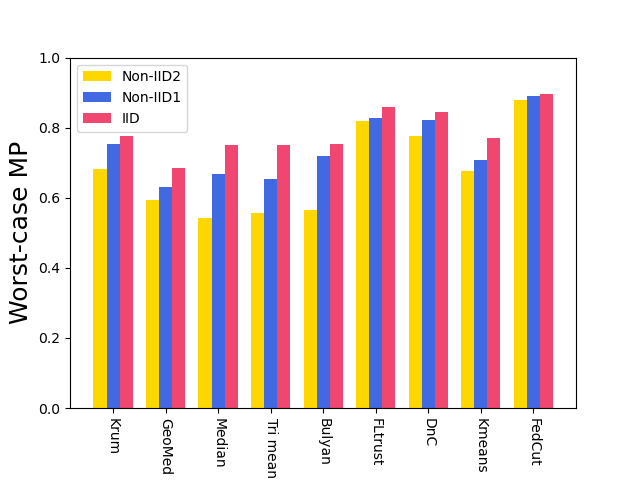}
    		\end{minipage}
    	}
    	

	\caption{ Averaged and worst-case model performance among all 8 attacks of different Byzantine-resilient methods with 30 Byzantine attackers for different Non-IID extent (yellow: Non-IID with $\beta = 0.1$, blue: Non-IID with $\beta = 0.5$, green: IID) on Fashion MNIST. }
	\label{fig:robust-heter}
	\vspace{-2pt}
\end{figure*}

\subsection{Robustness}
In this subsection, we demonstrate the robustness of our FedCut framework under different byzantine numbers, the Non-IID extent of clients' local data, and multiple collusion attacks.

\subsubsection{Robustness under Varying Numbers of Attackers.}
Fig. \ref{fig:robust-byznum} shows the model performance (MP) for different Byzantine-resilient methods under different types of attacks for different Byzantine attackers, i.e., 10, 20, and 30 attackers. The result shows that the MP of FedCut doesn't drop with the increase of Byzantine numbers while the MP of others drops seriously (e.g., the averaged MP of Median \cite{yin2018byzantine} drops to 74.6\% from 88.6\% in Fashion MNIST dataset). Clearly, our proposed method, FedCut, is robust for the the number of byzantine attackers.

\subsubsection{Robustness under Heterogeneous dataset}
Fig. \ref{fig:robust-heter} shows the model performance (MP) for different Byzantine resilient methods under different types of attacks for different Non-IID extent of clients' datasets. The result demonstrates that the MP of FedCut drops no more than 3\% as the clients' local dataset becomes more heterogeneous while the MP of others drops seriously (e.g., the averaged MP of trimmed mean \cite{yin2018byzantine} drops to 55.4\% from 74.9\% in Fashion MNIST dataset)

\begin{table*}[htbp] 
\vspace{-6pt}
\centering 
\setlength{\tabcolsep}{1.0mm}
\renewcommand\arraystretch{1.5}
\caption{Computation Time complexity for different Byzantine resilient methods, where $K$ is the number of client, $d$ is the dimension of updates, $T'$ is training time of the server' data for FLtrust.}
\begin{tabular}{|c||c|c|c|c|c|}
\hline

\textbf{Defense}&
Krum  \cite{blanchard2017machine}  &GeoMedian \cite{chen2017distributed}&Median \cite{yin2018byzantine}&Trimmed \cite{yin2018byzantine}&Bulyan  \cite{guerraoui2018hidden} \\ \hline 
\begin{tabular}[c]{@{}c@{}}\textbf{Time}\\ \textbf{Complexity}\end{tabular}&$\calO(K^2d)$&$\calO(K^2d)$&$\calO(K^2d +)$&$\calO(K^2d)$&$\calO( Kd)$\\ \hline \hline
\textbf{Defense}&FLtrust  \cite{cao2020fltrust}&DnC \cite{shejwalkar2021manipulating}&Kmeans \cite{shen2016auror}&FedCut (Ours)&  \\ \hline 

\begin{tabular}[c]{@{}c@{}}\textbf{Time}\\ \textbf{Complexity}\end{tabular}&$\calO( Kd + T’)$&$\calO(d^3+ K^2d)$&$\calO(K^2d)$&$\calO(K^2d + K^3)$& \\ \hline

\end{tabular}
\label{tab:time-complexity}
\vspace{-6pt}
\end{table*}

\subsubsection{Robustness against the Multi-Collusion attack}
In the former part, we design the 'Multi-Collusion attack' in which adversaries are separated into four groups, and the same group has similar values. For example, each group is sampled from $\calN(\mu+ \mu_i, 0.0001)$, and different groups have different $\mu_i$, where $\mu$ is the mean of uploaded gradients of all other benign clients. We further design more 'collusion attacks' with a various number of colluders' parties as follows: 30 attackers are separated into 1) two groups with each group having 15 attackers; 2) three groups with each group having ten attackers 3) four groups with each group having 8, 8, 7, 7 attackers respectively; 4) five groups with each group has six attackers.

Tab. \ref{tab:collusion1} and \ref{tab:collusion2} displays the model performances under different 'collusion attacks,' which illustrates that various collusion attacks do not influence FedCut and NCut while other byzantine resilient methods such as Kmeans are affected seriously by collusion attack. Note that DnC performs well with 2-groups collusion, but MP drops when colluders' groups increase.

\begin{table}[htbp] 
\vspace{-5pt}
\tiny
\centering 
\setlength{\tabcolsep}{0.5mm}
\renewcommand\arraystretch{2.0}
\caption{Model performances of different Byzantine-resilient methods under different groups of collusion attacks (with Non-IID setting $\beta = 0.5$ and 30 attackers for classification of MNIST). $i-$ groups represent the number of colluders.}
\begin{tabular}{|c||c|c|c|c|c|c|c|c|c|}
\hline

&\begin{tabular}[c]{@{}c@{}} 
Krum \\ \cite{blanchard2017machine} \end{tabular} &\begin{tabular}[c]{@{}c@{}}GeoMedian\\ \cite{chen2017distributed}\end{tabular}&\begin{tabular}[c]{@{}c@{}}Median\\ \cite{yin2018byzantine}\end{tabular}&\begin{tabular}[c]{@{}c@{}}Trimmed\\ \cite{yin2018byzantine}\end{tabular}&\begin{tabular}[c]{@{}c@{}} Bulyan \\ \cite{guerraoui2018hidden}\end{tabular}&\begin{tabular}[c]{@{}c@{}} FLtrust \\ \cite{cao2020fltrust}\end{tabular}&\begin{tabular}[c]{@{}c@{}} DnC\\ \cite{shejwalkar2021manipulating}\end{tabular}&\begin{tabular}[c]{@{}c@{}}Kmeans\\ \cite{shen2016auror}\end{tabular}&\begin{tabular}[c]{@{}c@{}}FedCut\\ (Ours)\end{tabular} \\ \hline \hline

2-groups
 &89.2$\pm${0.4}  &79.2$\pm${0.0}  &71.9$\pm${0.1}  &69.0$\pm${0.2}  &89.1$\pm${0.2}  &88.5$\pm${0.1}  &\textbf{91.9$\pm${0.1}}  &45.7$\pm${0.0}   &91.7$\pm${0.0}

\\ \hline 3-groups
  &88.9$\pm${0.1}  &81.1$\pm${0.0}  &73.4$\pm${0.0}  &70.7$\pm${0.1}  &88.9$\pm${0.1}  &88.0$\pm${0.1}  &87.2$\pm${0.8}  &45.9$\pm${0.1} &\textbf{91.6$\pm${0.1}}

\\ \hline 4-groups
  &89.0$\pm${0.5}  &82.2$\pm${1.2}  &73.6$\pm${1.2}  &72.3$\pm${1.2}  &86.7$\pm${1.2}  &89.2$\pm${0.5}  &89.4$\pm${0.5}  &31.2$\pm${2.2}  &\textbf{92.1$\pm${0.1}}

\\ \hline 5-groups
 &88.8$\pm${1.1}  &84.2$\pm${0.1}  &78.0$\pm${0.1}  &75.1$\pm${0.0}  &88.7$\pm${0.1}  &88.4$\pm${0.1}  &89.0$\pm${0.3}  &54.1$\pm${0.0}   &\textbf{91.7$\pm${0.1}}

\\ \hline

\end{tabular}
\label{tab:collusion2}
\vspace{-6pt}
\end{table}

\begin{table}[htbp] 
\tiny
\centering 
\setlength{\tabcolsep}{0.5mm}
\renewcommand\arraystretch{2.0}
\caption{Model performances of different Byzantine-resilient methods under different groups of collusion attacks (with IID and 30 attackers for classification of MNIST). $i-$ groups represent the number of colluders' clusters.}
\begin{tabular}{|c||c|c|c|c|c|c|c|c|c|c|}
\hline

&\begin{tabular}[c]{@{}c@{}} 
Krum \\ \cite{blanchard2017machine} \end{tabular} &\begin{tabular}[c]{@{}c@{}}GeoMedian\\ \cite{chen2017distributed}\end{tabular}&\begin{tabular}[c]{@{}c@{}}Median\\ \cite{yin2018byzantine}\end{tabular}&\begin{tabular}[c]{@{}c@{}}Trimmed\\ \cite{yin2018byzantine}\end{tabular}&\begin{tabular}[c]{@{}c@{}} Bulyan \\ \cite{guerraoui2018hidden}\end{tabular}&\begin{tabular}[c]{@{}c@{}} FLtrust \\ \cite{cao2020fltrust}\end{tabular}&\begin{tabular}[c]{@{}c@{}} DnC\\ \cite{shejwalkar2021manipulating}\end{tabular}&\begin{tabular}[c]{@{}c@{}}Kmeans\\ \cite{shen2016auror}\end{tabular}&\begin{tabular}[c]{@{}c@{}}FedCut\\ (Ours)\end{tabular} \\ \hline \hline

2-groups
 &90.8$\pm${0.0}  &80.2$\pm${0.4}  &73.3$\pm${0.1}  &72.7$\pm${0.3}  &90.1$\pm${0.0}  &88.8$\pm${0.1}  &\textbf{92.0$\pm${0.1}} &46.4$\pm${0.1}  &\textbf{91.8$\pm${0.0}}

\\ \hline 3-groups
  &90.6$\pm${0.1}  &82.3$\pm${0.4}  &77.2$\pm${0.7}  &75.7$\pm${0.3}  &90.1$\pm${0.1}  &88.9$\pm${0.1}  &89.0$\pm${1.4}  &49.5$\pm${0.1} &\textbf{91.8$\pm${0.0}}

\\ \hline 4-groups
 &90.3$\pm${0.2}  &83.9$\pm${2.1}  &79.5$\pm${0.7}  &78.0$\pm${0.8}  &89.7$\pm${0.3}  &89.4$\pm${0.1}  &90.2$\pm${0.3}  &43.2$\pm${8.4}  &\textbf{92.2$\pm${0.2}}

\\ \hline 5-groups
&90.7$\pm${0.1}  &84.4$\pm${0.4}  &79.3$\pm${0.1}  &78.9$\pm${0.1}  &90.1$\pm${0.1}  &88.6$\pm${0.6}  &89.4$\pm${0.1}  &51.9$\pm${0.1} &\textbf{91.9$\pm${0.0}}

\\ \hline

\end{tabular}

\label{tab:collusion1}

\end{table}

\subsection{Computation Complexity}

We compare the computation time complexity for different Byzantine resilient methods for each iteration on the server as Tab. \ref{tab:time-complexity}. It shows that
the computation complexity of our proposed method FedCut is $\calO(K^2d + K^3)$, where implementing the singular vector decomposition (SVD) needs $\calO(K^3)$ \cite{halko2011finding} and computing the adjacency matrix requires $\calO(K^2d)$.
The time complexity of FedCut is comparable to statistic-based methods such as Krum \cite{blanchard2017machine} ($\calO(K^2d)$) and median \cite{yin2018byzantine} ($\calO(K^2d)$) when the dimension of model $d$ is greatly larger than number of participants $K$.


Empirically, the proposed FedCut method allows the training of a DNN model (AlexNet) with 100 clients to run in three hours, under a simulated environment (see Appendix A for details). We regard this time complexity is reasonable for practical federated learning applications across multiple institutions.  For cross-devices application scenarios in which $K$ might be up to millions, randomized SVD \cite{halko2011finding} can be adopted to improve the time complexity from $\calO(K^3)$ to $\calO(K)$ if the normalized adjacency matrix has low rank. Applying randomized SVD to cross-device FL scenarios is one of our future work.


\section{Discussion and Conclusion}
This paper proposed a novel spectral analysis framework, called FedCut, to detect Byzantine colluders robustly and efficiently from the graph perspective. Specifically, our proposed algorithm FedCut ensures the optimal separation of benign clients and colluders in the spatial-temporal graph constructed from uploaded model updates over different iterations. We analyze existing Byzantine attacks and Byzantine-resilient methods in the lens of spectral analysis. It was shown that existing Byzantine-resilient methods may suffer from failure cases in the face of Byzantine colluders. Moreover, spectral heuristics are used in FedCut to determine the number of colluder groups, the scaling factor and mimic colluders, which significantly improves the Byzantine tolerance in colluders detection. 


Our extensive experiments on multiple datasets and theoretical convergence analysis demonstrate that our proposed framework can achieve drastic improvements over the FL baseline in terms of model performance under various Byzantine attacks.

Finally, the proposed and many existing Byzantine-resilient methods assume that the server could access to clients' model updates to compute the similarity. However, the leakage of model updates allows a semi-honest server to infer some information of the private data \cite{zhu2019deep}. Therefore, Byzantine problem should be considered when implementing FL for privacy-preserving applications. In future work, we will explore to what extent FedCut can be used in conjunction with Differential Privacy \cite{abadi2016deep} or Homomorphic Encryption mechanisms \cite{zhang2020batchcrypt}. 

\newpage 
\newpage

\bibliographystyle{plain}
\bibliography{refer}

\clearpage
\begin{appendices}

\section{Experiment Setting}
\label{appendix:expr}
This section illustrates the experiment settings of the empirical study of the proposed FedCut framework in comparison with existing Byzantine resilient methods. 

\textbf{Model Architectures} The architectures we investigated include the well-known Logistic regression, LeNet \cite{lecun2015lenet} and AlexNet \cite{krizhevsky2012imagenet}. 

\textbf{Dataset} We evaluate Federated learning model performances with classification tasks on standard MNIST, Fashion MNIST \cite{xiao2017fashion} and CIFAR10 dataset. The MNIST database of 10-class handwritten digits has a training set of 60,000 examples, and a test set of 10,000 examples. Fashion MNIST is fashion product of MNIST including 60,000 images and the test set has 10,000 images. The CIFAR-10 dataset consists of 60000 $32\times 32$ colour images in 10 classes, with 6000 images per class. Respectively, we conduct stand image classification tasks of MNIST, Fashion MNIST and CIFAR10 with logistic regression, LeNet and AlexNet. 
According to the way we split the dataset for clients in federated learning, the experiments are divided into IID setting and Non-IID setting. Specifically, We consider \textit{lable-skew} Non-IID federated learning setting, where we assume each client's training examples are drawn with class labels following a \textit{dirichlet distribution} ($Dir(\beta)$) \cite{li2021federated}, with which $\beta > 0$ is the concentration parameter controlling the identicalness among users (we use $\beta=0.1$ \& 0.5 in the following).

\textbf{Federated Learning Settings} We simulate a horizontal federated learning system with K = 100 clients in a stand-alone machine with 8 Tesla V100-SXM2 32 GB GPUs and 72 cores of Intel(R) Xeon(R) Gold 61xx CPUs. In each communication round, the clients update the weight updates, and the server adopts \textit{Fedavg} \cite{mcmahan2017communication} algorithm to aggregate the model updates. The detailed experimental hyper-parameters are listed in Tab. \ref{tab:train-params}.

\begin{table*}[htbp] \huge \renewcommand\arraystretch{1.5}
	\resizebox{1\textwidth}{!}{
		\begin{tabular}{l|ccc}
			\hline
			Hyper-parameter & Logistic Regression& LeNet & AlexNet \\ \hline
			Activation function & ReLU & ReLU & ReLU \\
			Optimization method & Adam & Adam& Adam\\
			
			Learning rate & 0.0005 & 0.001& 0.001\\
			Weight Decay & 0.001 & 0.002& 0.001\\
			Batch size & 32 & 32& 64\\
			Data Distribution & IID and Non-IID ($\beta=0.5,0.1)$ & IID and Non-IID ($\beta=0.5,0.1)$&  IID\\
			Iterations & 2000 & 3000& 3000 \\
			Number of Clients & 100 & 100& 30 \\
			 Numbers of Attackers & 10, 20, 30 & 10, 20, 30& 6\\
			\hline
	\end{tabular}}
	
	\caption{Training parameters}
	\label{tab:train-params}
\end{table*}

\textbf{Attack:}
We set 10\%, 20\% and 30\% clients,
i.e., 10, 20 and 30 out of 100 clients be Byzantine attackers. The following attacking methods are used in experiments:
\begin{enumerate}
\item the 'same value attack', where model updates of attackers
are replaced by the all ones vector; 
\item the 'label flipping attack', where attackers use the wrong label to generate the gradients to upload; 
\item the 'sign flipping attack', where local gradients of attackers are shifted by a scaled value -4; 
\item 
the 'gaussian attack', where local gradients at clients are replaced by independent Gaussian random vectors $\calN(0,200)$. 
\item the ‘Lie attack’, which was designed in \cite{baruch2019little}; 
\item  the 'Fang-v1 attack', which was designed in \cite{fang2020local}  for coordinate-wise trimmed mean \cite{yin2018byzantine} and Krum \cite{blanchard2017machine} (Fang-v2 attack);
\item the 'Fang-v2 attack', which was designed in \cite{fang2020local}  for Krum \cite{blanchard2017machine};
\item Our designed 'collusion attack' that adversaries are separated into 4 groups and same group has the similar values. For example, each groups is sampled from $\calN(\mu+ \mu_i, 0.0001)$ and different groups has different $\mu_i$, where $\mu$ is mean of uploaded gradients of all other benign clients.
\end{enumerate}


\textbf{Byzantine-resilient methods}
The following Byzantine resilent methods are evaluated in experiments:
\begin{enumerate}
    \item Krum \cite{blanchard2017machine}, we adopt codes from \url{https://github.com/vrt1shjwlkr/NDSS21-Model-Poisoning/tree/main/femnist}
    \item Median \cite{yin2018byzantine}, we adopt codes from \url{https://github.com/Liepill/RSA-Byzantine}
    \item Trimmed Mean \cite{yin2018byzantine}, we adopt codes from \url{https://github.com/vrt1shjwlkr/NDSS21-Model-Poisoning/tree/main/femnist}
    \item Bulyan \cite{guerraoui2018hidden}, we adopt codes from \url{https://github.com/vrt1shjwlkr/NDSS21-Model-Poisoning/tree/main/femnist}
    \item DnC \cite{shejwalkar2021manipulating}, we adopt codes from \url{https://github.com/vrt1shjwlkr/NDSS21-Model-Poisoning/tree/main/femnist}
    \item FLtrust \cite{cao2020fltrust}, we adopt codes from \url{https://people.duke.edu/~zg70/code/fltrust.zip}
    \item Kmeans \cite{shen2016auror}, we adopt codes from \url{https://scikit-learn.org/stable/modules/generated/sklearn.cluster.KMeans.html}
    \item NCut \cite{ng2001spectral}, we adopt NCut in each iteration  separately for ablation study.
    \item FedCut, the proposed FedCut method applied to the Spatial-Temporal network.
\end{enumerate}

\clearpage

\section{Ablation Study} \label{sec:appB}

In this section, we report more experimental results on different extent Non-IID (e.g., $\beta$=0.1 and $\beta$=0.5), number of byzantine attackers (e.g., 10, 20 and 30), model (e.g., logistic regression, LeNet and AlexNet) and dataset (e.g., MNIST, Fashion MNIST and CIFAR10). 
\subsection{More Experimental Results}
Tab. \ref{tab:MP1}-\ref{tab:MP19} show the model performance (MP) for different Byzantine resilient methods under different types of attacks in various settings. All results show that the proposed method FedCut achieve the best model performances compared to other byzantine resilient methods when the Non-IID extent increases from $\beta=0.5$ to 0.1 and byzantine attackers numbers increases from 10 to 30. As shown by the ablation study, the MP of NCut is is not as stable as that of the proposed FedCut (e.g., MP drops to 31.7 \% with Non-IID setting and 30 attackers under Fang-v1 attack).

\begin{figure*}[bp]
	\centering
	\vspace{-5pt}
	\subfigure{
		\begin{minipage}[b]{0.3\textwidth}
\includegraphics[width=1\textwidth]{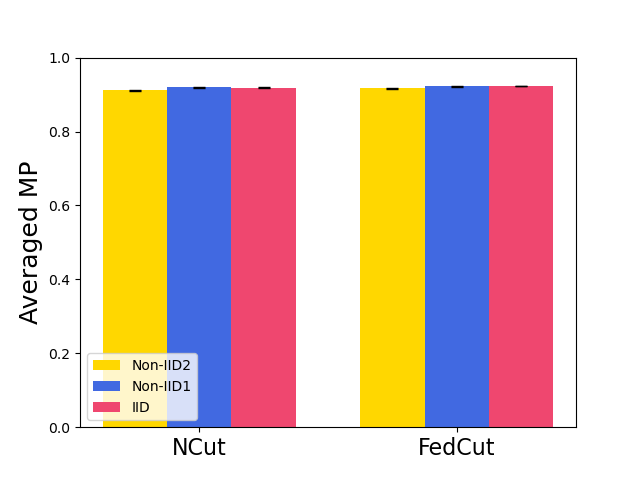}
		\end{minipage}
	}
    	\subfigure{
    		\begin{minipage}[b]{0.3\textwidth}
  		 	\includegraphics[width=1\textwidth]{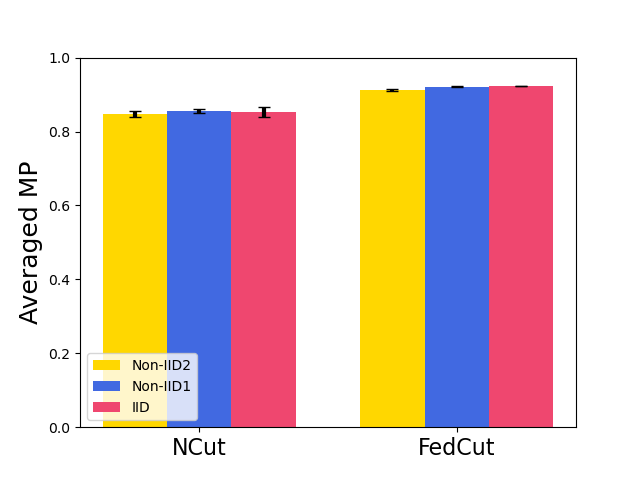}
    		\end{minipage}
    	}
    		\subfigure{
    		\begin{minipage}[b]{0.3\textwidth}
  		 	\includegraphics[width=1\textwidth]{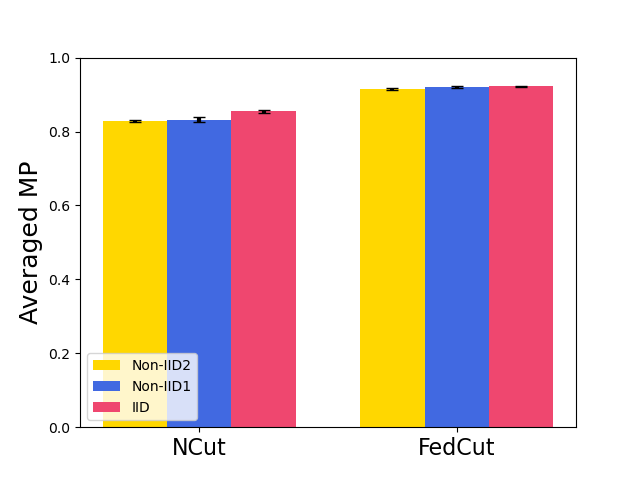}
    		\end{minipage}
    	}

	\subfigure{
		\begin{minipage}[b]{0.3\textwidth}
\includegraphics[width=1\textwidth]{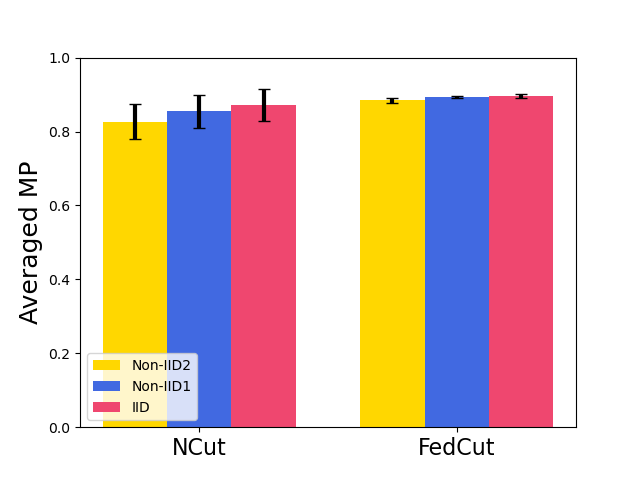}
		\end{minipage}
	}
    	\subfigure{
    		\begin{minipage}[b]{0.3\textwidth}
  		 	\includegraphics[width=1\textwidth]{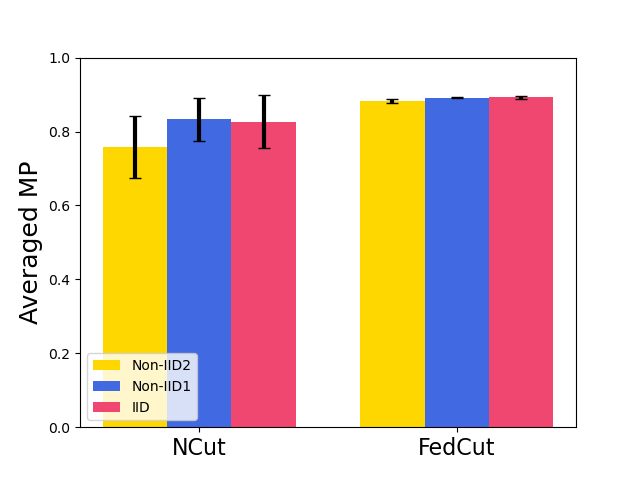}
    		\end{minipage}
    	}
    		\subfigure{
    		\begin{minipage}[b]{0.3\textwidth}
  		 	\includegraphics[width=1\textwidth]{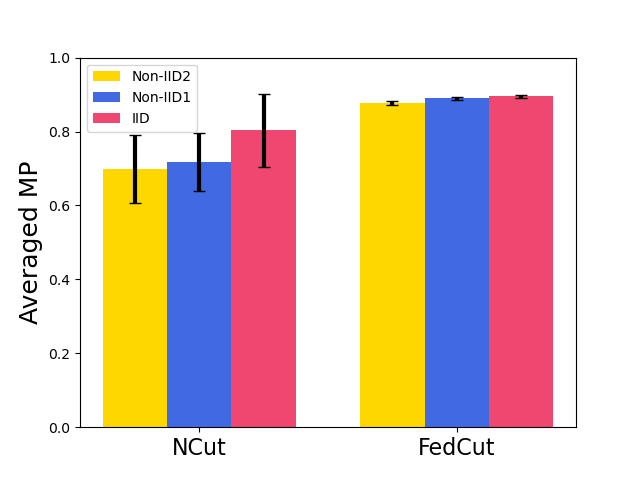}
    		\end{minipage}
    	}
    	
	\caption{Average MP of FedCut and Ncut (FedCut without temporal consistency) among all 8 attacks with different byzantine numbers (left: 10 attackers, middle: 20 attackers and right: 30 attackers) on MNIST (first row) and Fashion MNIST (second row). }
	\label{fig:fedcutvsncut}
	\vspace{-5pt}
\end{figure*}

\begin{figure*}[htbp]
	\centering
	\subfigure{
		\begin{minipage}[b]{0.22\textwidth}
			\includegraphics[width=1\textwidth]{imgs/clusternum-eigengap/LR_mnist_all_ones_noniid_1000000.0_spectral_iteration_1000gamma_0.0008.png}
		\end{minipage}
	}
    	\subfigure{
    		\begin{minipage}[b]{0.22\textwidth}
  		 	\includegraphics[width=1\textwidth]{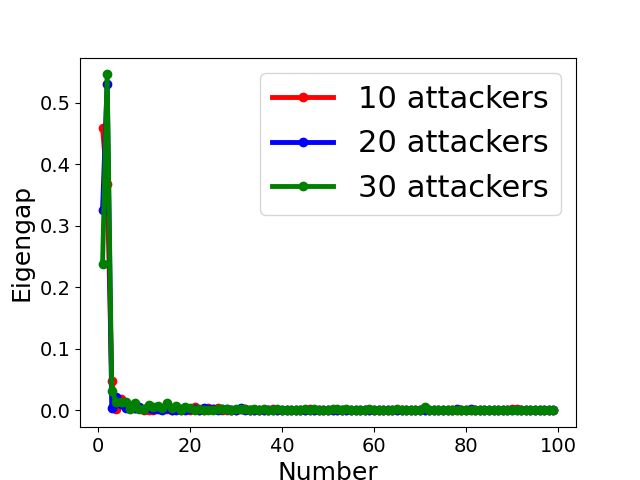}
    		\end{minipage}
    	}
    		\subfigure{
    		\begin{minipage}[b]{0.22\textwidth}
  		 	\includegraphics[width=1\textwidth]{imgs/clusternum-eigengap/LR_mnist_trimmed_attack_noniid_1000000.0_spectral_iteration_1000gamma_0.06.png}
    		\end{minipage}
    	}
    	\subfigure{
		\begin{minipage}[b]{0.22\textwidth}
			\includegraphics[width=1\textwidth]{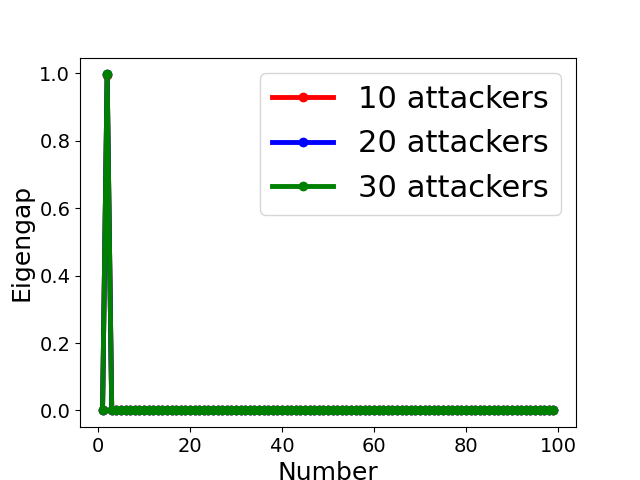}
		\end{minipage}
	}
	
		\subfigure{
		\begin{minipage}[b]{0.22\textwidth}
			\includegraphics[width=1\textwidth]{imgs/clusternum-eigengap/LR_mnist_gaussian_noniid_1000000.0_spectral_iteration_1000gamma_0.0008.png}
		\end{minipage}
	}
    	\subfigure{
    		\begin{minipage}[b]{0.22\textwidth}
  		 	\includegraphics[width=1\textwidth]{imgs/clusternum-eigengap/LR_mnist_label_flipping_noniid_1000000.0_spectral_iteration_1000gamma_0.09.png}
    		\end{minipage}
    	}
    		\subfigure{
    		\begin{minipage}[b]{0.22\textwidth}
  		 	\includegraphics[width=1\textwidth]{imgs/clusternum-eigengap/LR_mnist_sign_flipping_noniid_1000000.0_spectral_iteration_1000gamma_0.2.png}
    		\end{minipage}
    	}
    	\subfigure{
		\begin{minipage}[b]{0.22\textwidth}
			\includegraphics[width=1\textwidth]{imgs/clusternum-eigengap/LR_mnist_collusion_noniid_1000000.0_spectral_iteration_1000gamma_0.01.png}
		\end{minipage}
	}
	\caption{The change of eigengap for different attacks with different number of attackers under IID setting for MNIST dataset.}
	\vspace{-10pt}
	\label{fig:eigengap-clusternum1-app}
\end{figure*}

\begin{figure*}[htbp]
	\centering
	\subfigure{
		\begin{minipage}[b]{0.22\textwidth}
			\includegraphics[width=1\textwidth]{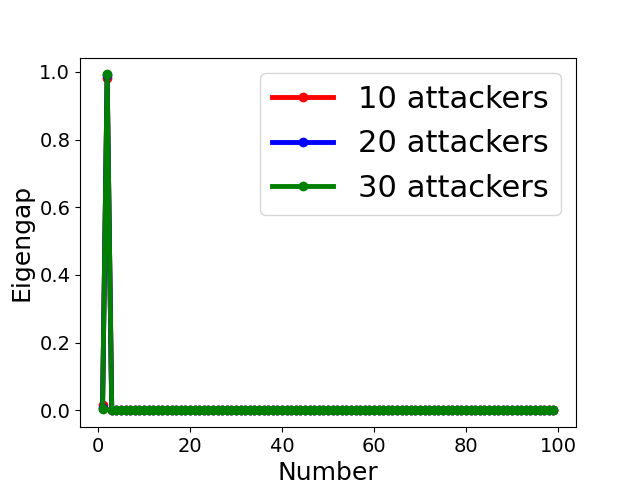}
		\end{minipage}
	}
    	\subfigure{
    		\begin{minipage}[b]{0.22\textwidth}
  		 	\includegraphics[width=1\textwidth]{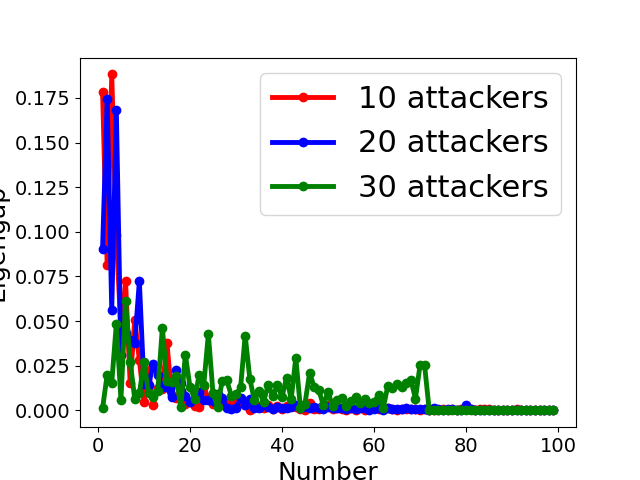}
    		\end{minipage}
    	}
    		\subfigure{
    		\begin{minipage}[b]{0.22\textwidth}
  		 	\includegraphics[width=1\textwidth]{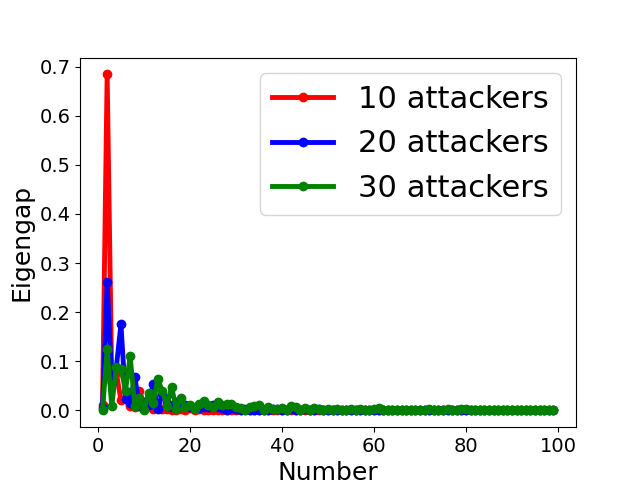}
    		\end{minipage}
    	}
    	\subfigure{
		\begin{minipage}[b]{0.22\textwidth}
			\includegraphics[width=1\textwidth]{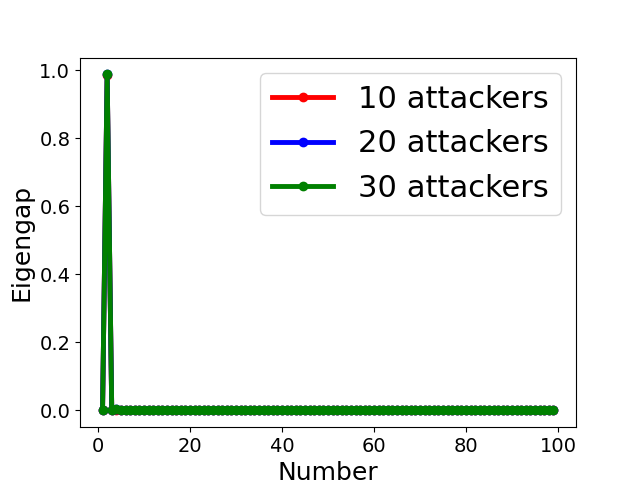}
		\end{minipage}
	}
	
		\subfigure{
		\begin{minipage}[b]{0.22\textwidth}
			\includegraphics[width=1\textwidth]{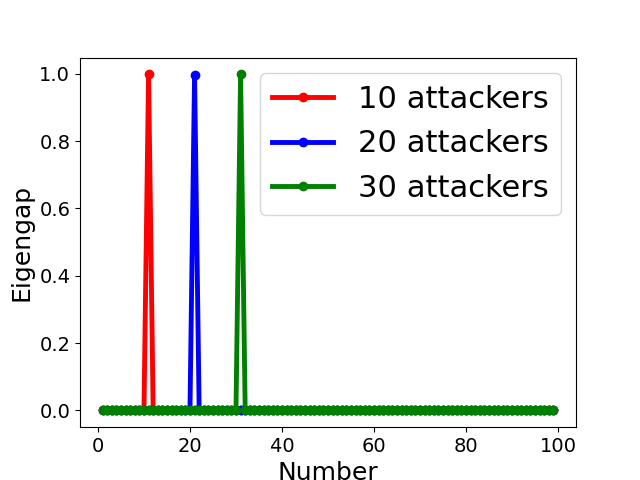}
		\end{minipage}
	}
    	\subfigure{
    		\begin{minipage}[b]{0.22\textwidth}
  		 	\includegraphics[width=1\textwidth]{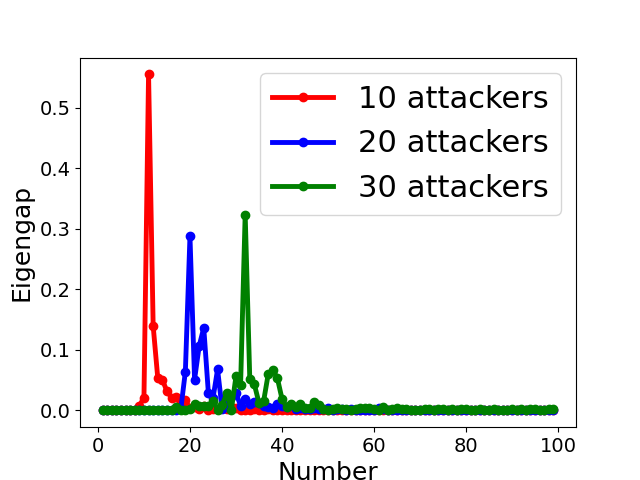}
    		\end{minipage}
    	}
    		\subfigure{
    		\begin{minipage}[b]{0.22\textwidth}
  		 	\includegraphics[width=1\textwidth]{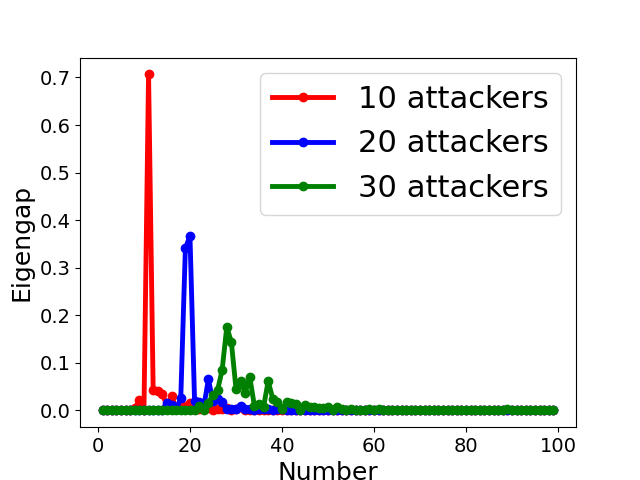}
    		\end{minipage}
    	}
    	\subfigure{
		\begin{minipage}[b]{0.22\textwidth}
			\includegraphics[width=1\textwidth]{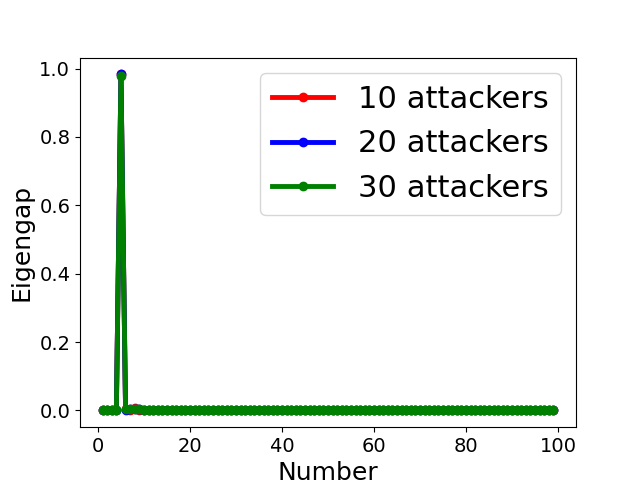}
		\end{minipage}
	}
	\caption{The change of eigengap for different attacks with different number of attackers under Non-IID with $\beta =0.1$ setting for MNIST dataset.}
	\vspace{-10pt}
	\label{fig:eigengap-clusternum2-app}
\end{figure*}

\begin{figure*}[htbp]
\vspace{-6pt}
	\centering
	\subfigure{
		\begin{minipage}[b]{0.23\textwidth}
			\includegraphics[width=1\textwidth]{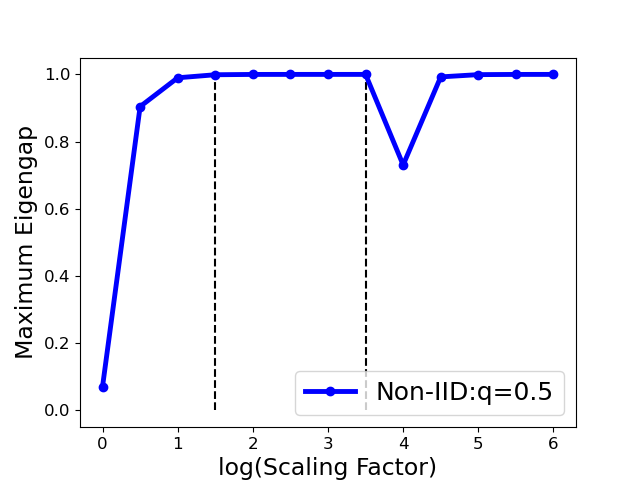}
		\end{minipage}
	}
    	\subfigure{
    		\begin{minipage}[b]{0.23\textwidth}
  		 	\includegraphics[width=1\textwidth]{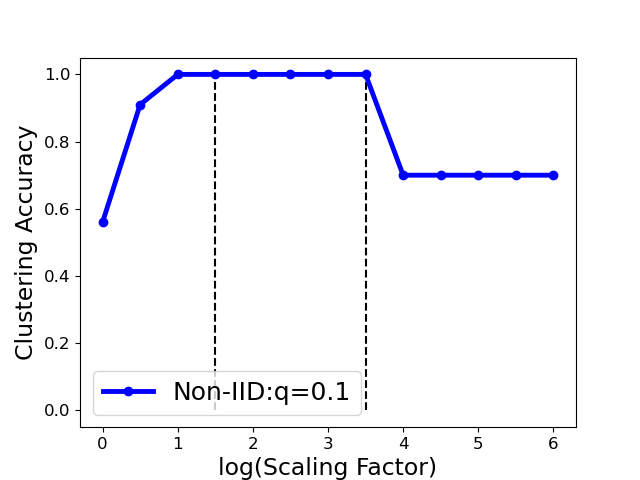}
    		\end{minipage}
    	}
    \subfigure{
		\begin{minipage}[b]{0.23\textwidth}
			\includegraphics[width=1\textwidth]{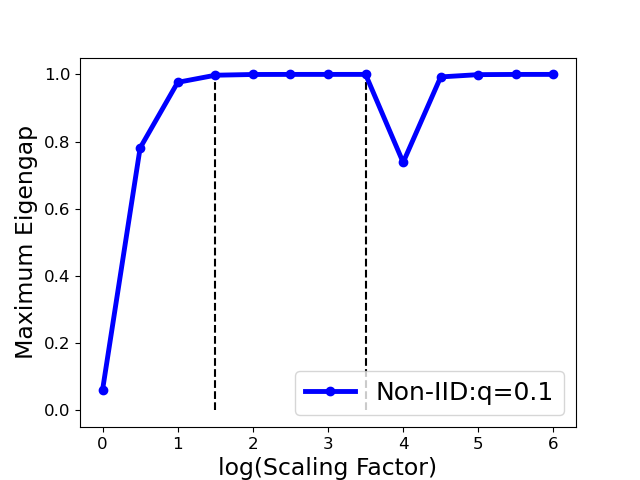}
		\end{minipage}
	}
    	\subfigure{
    		\begin{minipage}[b]{0.23\textwidth}
  		 	\includegraphics[width=1\textwidth]{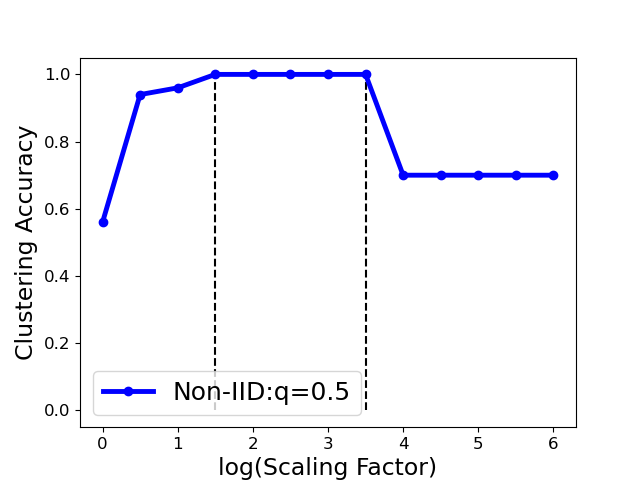}
    		\end{minipage}
    	}
	
	\caption{The change of the maximum eigengap of normalized adjacency matrix and clustering accuracy with different scaling factor $\sigma$ for the Gaussian attack \cite{li2019rsa} on different Non-IID dataset ($q=0.1,0.5)$, where the normalized adjacency matrix $L = D^{-1/2}AD^{-1/2}$, $D = \text{diag}(\text{Sum}(A))$ and $A_{ij}= \text{exp}(-||\bg_i - \bg_j||^2/2\sigma^2)$. The clustering accuracy is calculated by NCut on the normalized adjacency matrix with 100 client including 70 benign clients and 30 Byzantine clients.}
	\vspace{-10pt}
	\label{fig:eigengap-gamma-app}
\end{figure*}

\begin{table*}[bp]

\centering 
\scriptsize
\setlength{\tabcolsep}{1.0mm}
\renewcommand\arraystretch{1.5}
\begin{tabular}{|c||c|c|c|c|c|c|c|c|c|c|c|}
\hline

&&\begin{tabular}[c]{@{}c@{}} 
Krum \\ \cite{blanchard2017machine} \end{tabular} &\begin{tabular}[c]{@{}c@{}}GeoMeidan\\ \cite{chen2017distributed}\end{tabular}&\begin{tabular}[c]{@{}c@{}}Median\\ \cite{yin2018byzantine}\end{tabular}&\begin{tabular}[c]{@{}c@{}}Trimmed\\ \cite{yin2018byzantine}\end{tabular}&\begin{tabular}[c]{@{}c@{}} Bulyan \\ \cite{guerraoui2018hidden}\end{tabular}&\begin{tabular}[c]{@{}c@{}} FLtrust \\ \cite{cao2020fltrust}\end{tabular}&\begin{tabular}[c]{@{}c@{}} DnC\\ \cite{shejwalkar2021manipulating}\end{tabular}&\begin{tabular}[c]{@{}c@{}}Kmeans\\ \cite{shen2016auror}\end{tabular}&\begin{tabular}[c]{@{}c@{}}FedCut\\ (Ours)\end{tabular} \\ \hline \hline

\multirow{9}{*}{\begin{tabular}[c]{@{}c@{}} 
F\\M\\N
\\ I\\S\\T \end{tabular} }

 &No attack  &75.7$\pm${1.1}  &88.5$\pm${0.9}  &76.5$\pm${4.0}  &81.6$\pm${1.4}  &75.5$\pm${1.5}  &85.6$\pm${1.6}  &89.1$\pm${0.4}  &88.7$\pm${0.7}  &88.9$\pm${1.0}

 \\ \cline{2-11} &Lie \cite{baruch2019little}  &73.8$\pm${0.7}  &78.9$\pm${1.7}  &40.0$\pm${28.6}  &47.4$\pm${22.8}  &72.1$\pm${5.0}  &83.0$\pm${2.9}  &75.3$\pm${3.8}  &81.1$\pm${0.9}  &88.7$\pm${0.8}

 \\ \cline{2-11} &Fang-v1 \cite{fang2020local}  &75.4$\pm${2.3}  &86.7$\pm${1.1}  &64.7$\pm${7.7}  &62.3$\pm${17.1}  &74.4$\pm${3.3}  &83.9$\pm${3.0}  &89.9$\pm${0.2}  &76.6$\pm${8.1}  &89.2$\pm${1.4}

 \\ \cline{2-11} &Fang-v2 \cite{fang2020local}  &10.0$\pm${0.0}  &43.4$\pm${5.7}  &68.8$\pm${3.6}  &59.5$\pm${13.0}  &56.8$\pm${11.3}  &84.9$\pm${0.7}  &88.4$\pm${1.9}  &88.8$\pm${0.4}  &88.9$\pm${0.1}

 \\ \cline{2-11} &Same value \cite{li2019rsa}  &75.8$\pm${1.2}  &75.0$\pm${1.6}  &78.3$\pm${4.9}  &78.5$\pm${3.5}  &76.7$\pm${2.9}  &85.9$\pm${2.0}  &89.5$\pm${0.2}  &89.2$\pm${0.5}  &89.3$\pm${0.6}

 \\ \cline{2-11} &Gaussian \cite{blanchard2017machine}  &77.2$\pm${0.8}  &88.2$\pm${0.7}  &78.2$\pm${30.8}  &74.8$\pm${7.9}  &64.8$\pm${15.5}  &86.3$\pm${0.9}  &80.4$\pm${4.2}  &40.6$\pm${24.2}  &89.6$\pm${0.4}

 \\ \cline{2-11} &sign flipping \cite{data2021byzantine} &75.9$\pm${2.4}  &87.4$\pm${1.1}  &78.9$\pm${2.9}  &72.4$\pm${14.3}  &67.7$\pm${15.6}  &85.3$\pm${1.3}  &85.5$\pm${1.3}  &54.9$\pm${10.0}  &84.4$\pm${0.9}

 \\ \cline{2-11} &label flipping \cite{data2021byzantine} &76.6$\pm${0.9}  &88.1$\pm${0.9}  &69.0$\pm${7.5}  &78.7$\pm${1.6}  &60.2$\pm${20.9}  &85.8$\pm${1.3}  &89.4$\pm${0.5}  &87.8$\pm${1.0}  &88.0$\pm${0.4}

 \\ \cline{2-11} &minic  \cite{karimireddy2020byzantine} &75.5$\pm${1.4}  &87.9$\pm${0.6}  &77.1$\pm${2.1}  &86.0$\pm${0.3}  &79.3$\pm${0.3}  &87.4$\pm${0.4}  &88.9$\pm${0.3}  &88.9$\pm${0.2}  &88.9$\pm${0.7}

 \\ \cline{2-11} &Collusion (Ours)  &75.9$\pm${1.6}  &34.9$\pm${11.2}  &72.7$\pm${1.1}  &69.3$\pm${2.2}  &71.8$\pm${4.4}  &87.1$\pm${0.1}  &66.6$\pm${6.6}  &56.6$\pm${22.3}  &89.1$\pm${0.4} 

 \\ \cline{2-11} &Averaged 
 &69.2$\pm${1.2}  &75.9$\pm${2.6}  &70.4$\pm${9.3}  &71.1$\pm${8.4}  &69.9$\pm${8.1}  &85.5$\pm${1.4}  &84.3$\pm${1.9}  &75.3$\pm${6.8}  &88.5$\pm${0.7} 

 \\ \cline{2-11} &Worst-case 
 &10.0$\pm${0.0}  &34.9$\pm${11.2}  &40.0$\pm${28.6}  &47.4$\pm${22.8}  &56.8$\pm${11.3}  &83.0$\pm${2.9}  &66.6$\pm${6.6}  &40.6$\pm${24.2}  &84.4$\pm${0.9}

\\ \hline

\end{tabular}
\caption{Model performances of different Byzantine-resilient methods under different attacks (with Non-IID setting $\beta = 0.1$ and 10 attackers for classification of Fashion MNIST).}
\label{tab:MP1}

\end{table*}

\begin{table*}[bp]

\centering 
\scriptsize
\setlength{\tabcolsep}{1.0mm}
\renewcommand\arraystretch{1.5}
\begin{tabular}{|c||c|c|c|c|c|c|c|c|c|c|c|}
\hline

&&\begin{tabular}[c]{@{}c@{}} 
Krum \\ \cite{blanchard2017machine} \end{tabular} &\begin{tabular}[c]{@{}c@{}}GeoMeidan\\ \cite{chen2017distributed}\end{tabular}&\begin{tabular}[c]{@{}c@{}}Median\\ \cite{yin2018byzantine}\end{tabular}&\begin{tabular}[c]{@{}c@{}}Trimmed\\ \cite{yin2018byzantine}\end{tabular}&\begin{tabular}[c]{@{}c@{}} Bulyan \\ \cite{guerraoui2018hidden}\end{tabular}&\begin{tabular}[c]{@{}c@{}} FLtrust \\ \cite{cao2020fltrust}\end{tabular}&\begin{tabular}[c]{@{}c@{}} DnC\\ \cite{shejwalkar2021manipulating}\end{tabular}&\begin{tabular}[c]{@{}c@{}}Kmeans\\ \cite{shen2016auror}\end{tabular}&\begin{tabular}[c]{@{}c@{}}FedCut\\ (Ours)\end{tabular} \\ \hline \hline

\multirow{9}{*}{\begin{tabular}[c]{@{}c@{}} 
F\\M\\N
\\ I\\S\\T \end{tabular} }

 &No attack  &83.2$\pm${0.6}  &89.9$\pm${0.3}  &84.5$\pm${0.9}  &84.3$\pm${1.9}  &83.6$\pm${2.3}  &87.4$\pm${0.9}  &90.0$\pm${0.4}  &89.6$\pm${0.5}  &90.0$\pm${0.4}

 \\ \cline{2-11} &Lie \cite{baruch2019little}  &83.5$\pm${0.7}  &84.3$\pm${0.6}  &55.1$\pm${18.7}  &75.8$\pm${6.0}  &82.6$\pm${1.8}  &83.6$\pm${2.5}  &71.0$\pm${29.0}  &80.2$\pm${1.3}  &89.9$\pm${0.6}

 \\ \cline{2-11} &Fang-v1 \cite{fang2020local}  &83.7$\pm${0.4}  &89.1$\pm${0.4}  &79.8$\pm${3.4}  &79.7$\pm${3.5}  &84.1$\pm${1.8}  &85.3$\pm${1.6}  &90.2$\pm${0.3}  &89.6$\pm${1.5}  &89.6$\pm${0.3}

 \\ \cline{2-11} &Fang-v2 \cite{fang2020local}  &10.0$\pm${0.0}  &60.3$\pm${6.3}  &72.2$\pm${2.3}  &73.1$\pm${1.5}  &73.0$\pm${2.0}  &85.7$\pm${1.0}  &89.0$\pm${1.3}  &88.7$\pm${0.5}  &89.2$\pm${0.4}

 \\ \cline{2-11} &Same value \cite{li2019rsa}  &83.5$\pm${0.3}  &76.3$\pm${0.5}  &84.2$\pm${0.6}  &83.4$\pm${1.4}  &83.8$\pm${1.3}  &87.1$\pm${0.5}  &89.8$\pm${0.2}  &89.9$\pm${0.2}  &90.0$\pm${0.5}

 \\ \cline{2-11} &Gaussian \cite{blanchard2017machine}  &83.1$\pm${0.4}  &89.7$\pm${0.3}  &84.5$\pm${1.3}  &85.3$\pm${1.2}  &79.9$\pm${4.6}  &87.2$\pm${0.6}  &78.1$\pm${4.6}  &49.6$\pm${16.0}  &89.8$\pm${0.1}

 \\ \cline{2-11} &sign flipping \cite{data2021byzantine} &83.7$\pm${0.7}  &88.9$\pm${0.7}  &82.8$\pm${3.8}  &84.6$\pm${1.8}  &80.8$\pm${2.7}  &86.7$\pm${1.5}  &88.2$\pm${0.5}  &71.7$\pm${10.6}  &86.4$\pm${0.6}

 \\ \cline{2-11} &label flipping \cite{data2021byzantine} &83.6$\pm${0.8}  &89.5$\pm${0.4}  &81.1$\pm${3.4}  &83.4$\pm${3.4}  &81.1$\pm${2.5}  &87.5$\pm${0.6}  &90.0$\pm${0.2}  &89.6$\pm${0.2}  &89.8$\pm${0.3}

 \\ \cline{2-11} &minic  \cite{karimireddy2020byzantine} &83.6$\pm${0.5}  &89.4$\pm${0.1}  &81.5$\pm${6.3}  &88.4$\pm${0.6}  &85.1$\pm${0.8}  &88.7$\pm${0.6}  &89.4$\pm${0.1}  &89.7$\pm${0.2}  &89.2$\pm${0.3}

 \\ \cline{2-11} &Collusion (Ours)  &84.3$\pm${0.6}  &27.8$\pm${24.8}  &75.2$\pm${1.9}  &75.5$\pm${1.4}  &81.9$\pm${2.8}  &88.7$\pm${0.6}  &84.2$\pm${0.7}  &54.5$\pm${12.6}  &90.1$\pm${0.1} 

 \\ \cline{2-11} &Averaged 
 &76.2$\pm${0.5}  &78.5$\pm${3.4}  &78.1$\pm${4.3}  &81.4$\pm${2.3}  &81.6$\pm${2.3}  &86.8$\pm${1.0}  &86.0$\pm${3.7}  &79.3$\pm${4.4}  &89.4$\pm${0.4} 

 \\ \cline{2-11} &Worst-case 
 &10.0$\pm${0.0}  &27.8$\pm${24.8}  &55.1$\pm${18.7}  &73.1$\pm${1.5}  &73.0$\pm${2.0}  &83.6$\pm${2.5}  &71.0$\pm${29.0}  &49.6$\pm${16.0}  &86.4$\pm${0.6}

\\ \hline

\end{tabular}
\caption{Model performances of different Byzantine-resilient methods under different attacks (with Non-IID setting $\beta = 0.5$ and 10 attackers for classification of Fashion MNIST).}

\end{table*}

\begin{table*}[htbp]

\centering 
\scriptsize
\setlength{\tabcolsep}{1.0mm}
\renewcommand\arraystretch{1.5}
\begin{tabular}{|c||c|c|c|c|c|c|c|c|c|c|c|}
\hline

&&\begin{tabular}[c]{@{}c@{}} 
Krum \\ \cite{blanchard2017machine} \end{tabular} &\begin{tabular}[c]{@{}c@{}}GeoMeidan\\ \cite{chen2017distributed}\end{tabular}&\begin{tabular}[c]{@{}c@{}}Median\\ \cite{yin2018byzantine}\end{tabular}&\begin{tabular}[c]{@{}c@{}}Trimmed\\ \cite{yin2018byzantine}\end{tabular}&\begin{tabular}[c]{@{}c@{}} Bulyan \\ \cite{guerraoui2018hidden}\end{tabular}&\begin{tabular}[c]{@{}c@{}} FLtrust \\ \cite{cao2020fltrust}\end{tabular}&\begin{tabular}[c]{@{}c@{}} DnC\\ \cite{shejwalkar2021manipulating}\end{tabular}&\begin{tabular}[c]{@{}c@{}}Kmeans\\ \cite{shen2016auror}\end{tabular}&\begin{tabular}[c]{@{}c@{}}FedCut\\ (Ours)\end{tabular} \\ \hline \hline

\multirow{9}{*}{\begin{tabular}[c]{@{}c@{}} 
F\\M\\N
\\ I\\S\\T \end{tabular} }

 &No attack  &84.7$\pm${0.7}  &90.0$\pm${0.4}  &87.9$\pm${0.4}  &88.2$\pm${0.4}  &87.0$\pm${0.6}  &88.0$\pm${0.6}  &90.2$\pm${0.2}  &89.8$\pm${0.4}  &90.2$\pm${0.1}

 \\ \cline{2-11} &Lie \cite{baruch2019little}  &85.1$\pm${0.3}  &66.8$\pm${22.2}  &81.9$\pm${2.2}  &82.7$\pm${0.9}  &85.1$\pm${0.5}  &85.2$\pm${1.1}  &80.0$\pm${3.8}  &78.6$\pm${3.2}  &90.0$\pm${0.4}

 \\ \cline{2-11} &Fang-v1 \cite{fang2020local}  &84.9$\pm${0.6}  &88.7$\pm${0.7}  &86.5$\pm${0.4}  &86.6$\pm${0.6}  &86.9$\pm${0.4}  &87.4$\pm${1.0}  &89.9$\pm${0.4}  &90.1$\pm${0.2}  &87.8$\pm${2.1}

 \\ \cline{2-11} &Fang-v2 \cite{fang2020local}  &47.7$\pm${31.0}  &78.2$\pm${1.8}  &78.0$\pm${0.3}  &84.7$\pm${0.8}  &84.8$\pm${1.0}  &84.3$\pm${0.8}  &89.6$\pm${0.9}  &89.3$\pm${0.2}  &90.3$\pm${0.4}

 \\ \cline{2-11} &Same value \cite{li2019rsa}  &85.3$\pm${0.5}  &76.9$\pm${0.4}  &85.5$\pm${0.5}  &85.1$\pm${0.7}  &87.2$\pm${0.5}  &88.2$\pm${0.9}  &90.2$\pm${0.6}  &90.3$\pm${0.2}  &90.3$\pm${0.1}

 \\ \cline{2-11} &Gaussian \cite{blanchard2017machine}  &85.1$\pm${0.9}  &90.0$\pm${0.4}  &87.8$\pm${0.5}  &88.1$\pm${0.4}  &87.5$\pm${0.4}  &88.3$\pm${0.8}  &76.3$\pm${5.2}  &30.2$\pm${26.5}  &90.3$\pm${0.2}

 \\ \cline{2-11} &sign flipping \cite{data2021byzantine} &85.0$\pm${0.4}  &89.4$\pm${0.3}  &87.9$\pm${0.4}  &88.2$\pm${0.4}  &87.1$\pm${0.4}  &86.5$\pm${1.9}  &89.3$\pm${0.6}  &87.7$\pm${0.6}  &89.4$\pm${1.0}

 \\ \cline{2-11} &label flipping \cite{data2021byzantine} &84.9$\pm${0.5}  &89.9$\pm${0.2}  &85.6$\pm${1.1}  &85.1$\pm${0.7}  &87.1$\pm${0.4}  &88.2$\pm${0.8}  &90.4$\pm${0.4}  &90.1$\pm${0.2}  &88.0$\pm${0.2}

 \\ \cline{2-11} &minic  \cite{karimireddy2020byzantine} &85.3$\pm${0.2}  &89.4$\pm${0.4}  &87.9$\pm${0.7}  &89.4$\pm${0.5}  &86.9$\pm${0.3}  &89.1$\pm${0.3}  &89.8$\pm${0.4}  &89.9$\pm${0.2}  &89.8$\pm${0.2}

 \\ \cline{2-11} &Collusion (Ours)  &85.3$\pm${0.5}  &54.7$\pm${4.5}  &79.1$\pm${1.3}  &78.1$\pm${0.4}  &87.5$\pm${0.4}  &88.7$\pm${0.2}  &85.0$\pm${0.5}  &67.6$\pm${2.9}  &90.1$\pm${0.0} 

 \\ \cline{2-11} &Averaged 
 &81.3$\pm${3.6}  &81.4$\pm${3.1}  &84.8$\pm${0.8}  &85.6$\pm${0.6}  &86.7$\pm${0.5}  &87.4$\pm${0.8}  &87.1$\pm${1.3}  &80.4$\pm${3.5}  &89.6$\pm${0.5} 

 \\ \cline{2-11} &Worst-case 
 &47.7$\pm${31.0}  &54.7$\pm${4.5}  &78.0$\pm${0.3}  &78.1$\pm${0.4}  &84.8$\pm${1.0}  &84.3$\pm${0.8}  &76.3$\pm${5.2}  &30.2$\pm${26.5}  &87.8$\pm${2.1}

\\ \hline

\end{tabular}
\caption{Model performances of different Byzantine-resilient methods under different attacks (with IID and 10 attackers for classification of Fashion MNIST).}

\end{table*}

\begin{table*}[htbp]

\centering 
\scriptsize
\setlength{\tabcolsep}{1.0mm}
\renewcommand\arraystretch{1.5}
\begin{tabular}{|c||c|c|c|c|c|c|c|c|c|c|c|}
\hline

&&\begin{tabular}[c]{@{}c@{}} 
Krum \\ \cite{blanchard2017machine} \end{tabular} &\begin{tabular}[c]{@{}c@{}}GeoMeidan\\ \cite{chen2017distributed}\end{tabular}&\begin{tabular}[c]{@{}c@{}}Median\\ \cite{yin2018byzantine}\end{tabular}&\begin{tabular}[c]{@{}c@{}}Trimmed\\ \cite{yin2018byzantine}\end{tabular}&\begin{tabular}[c]{@{}c@{}} Bulyan \\ \cite{guerraoui2018hidden}\end{tabular}&\begin{tabular}[c]{@{}c@{}} FLtrust \\ \cite{cao2020fltrust}\end{tabular}&\begin{tabular}[c]{@{}c@{}} DnC\\ \cite{shejwalkar2021manipulating}\end{tabular}&\begin{tabular}[c]{@{}c@{}}Kmeans\\ \cite{shen2016auror}\end{tabular}&\begin{tabular}[c]{@{}c@{}}FedCut\\ (Ours)\end{tabular} \\ \hline \hline

\multirow{9}{*}{\begin{tabular}[c]{@{}c@{}} 
F\\M\\N
\\ I\\S\\T \end{tabular} }

 &No attack  &72.0$\pm${2.5}  &88.4$\pm${0.7}  &70.1$\pm${11.0}  &81.8$\pm${0.3}  &81.8$\pm${0.4}  &86.1$\pm${1.1}  &88.9$\pm${0.8}  &88.7$\pm${0.6}  &89.4$\pm${0.4}

 \\ \cline{2-11} &Lie \cite{baruch2019little}  &75.4$\pm${1.2}  &64.4$\pm${9.9}  &35.5$\pm${25.0}  &41.3$\pm${19.5}  &26.9$\pm${28.5}  &76.8$\pm${4.0}  &72.1$\pm${6.8}  &88.1$\pm${0.6}  &88.1$\pm${2.3}

 \\ \cline{2-11} &Fang-v1 \cite{fang2020local}  &76.4$\pm${1.5}  &78.5$\pm${0.8}  &27.9$\pm${10.9}  &35.8$\pm${9.2}  &71.6$\pm${5.5}  &84.1$\pm${2.9}  &88.7$\pm${0.8}  &89.7$\pm${0.4}  &89.2$\pm${0.1}

 \\ \cline{2-11} &Fang-v2 \cite{fang2020local}  &10.0$\pm${0.0}  &10.9$\pm${1.2}  &18.8$\pm${18.4}  &32.9$\pm${5.8}  &23.6$\pm${9.0}  &84.9$\pm${0.4}  &82.2$\pm${8.1}  &65.8$\pm${10.3}  &89.0$\pm${0.3}

 \\ \cline{2-11} &Same value \cite{li2019rsa}  &76.6$\pm${1.9}  &64.5$\pm${8.9}  &57.8$\pm${11.9}  &66.7$\pm${3.7}  &81.2$\pm${0.7}  &86.1$\pm${1.0}  &89.5$\pm${0.4}  &89.5$\pm${0.4}  &89.4$\pm${0.2}

 \\ \cline{2-11} &Gaussian \cite{blanchard2017machine}  &76.4$\pm${0.4}  &88.0$\pm${0.7}  &76.4$\pm${2.7}  &79.4$\pm${1.4}  &78.2$\pm${3.5}  &85.7$\pm${1.9}  &76.4$\pm${6.6}  &20.7$\pm${28.9}  &89.3$\pm${0.1}

 \\ \cline{2-11} &sign flipping \cite{data2021byzantine} &75.5$\pm${1.2}  &85.3$\pm${5.5}  &73.0$\pm${5.9}  &54.5$\pm${25.3}  &80.5$\pm${8.5}  &83.4$\pm${2.5}  &76.7$\pm${1.2}  &61.1$\pm${3.4}  &82.9$\pm${0.5}

 \\ \cline{2-11} &label flipping \cite{data2021byzantine} &76.5$\pm${0.7}  &88.3$\pm${1.0}  &66.8$\pm${13.1}  &75.5$\pm${2.9}  &77.3$\pm${10.4}  &85.6$\pm${1.2}  &87.3$\pm${2.0}  &87.0$\pm${1.2}  &88.5$\pm${0.5}

 \\ \cline{2-11} &minic  \cite{karimireddy2020byzantine} &75.3$\pm${1.4}  &84.6$\pm${1.4}  &60.4$\pm${13.3}  &81.1$\pm${3.6}  &66.9$\pm${4.7}  &87.4$\pm${0.5}  &88.0$\pm${0.1}  &88.7$\pm${0.4}  &89.1$\pm${0.2}

 \\ \cline{2-11} &Collusion (Ours)  &75.0$\pm${1.6}  &55.9$\pm${11.4}  &67.7$\pm${2.0}  &70.7$\pm${0.4}  &71.7$\pm${2.2}  &86.4$\pm${0.5}  &86.1$\pm${0.4}  &32.4$\pm${20.0}  &89.4$\pm${0.4} 

 \\ \cline{2-11} &Averaged 
 &68.9$\pm${1.2}  &70.9$\pm${4.2}  &55.4$\pm${11.4}  &62.0$\pm${7.2}  &66.0$\pm${7.3}  &84.7$\pm${1.6}  &83.6$\pm${2.7}  &71.2$\pm${6.6}  &88.4$\pm${0.5} 

 \\ \cline{2-11} &Worst-case 
 &10.0$\pm${0.0}  &10.9$\pm${1.2}  &18.8$\pm${18.4}  &32.9$\pm${5.8}  &23.6$\pm${9.0}  &76.8$\pm${4.0}  &72.1$\pm${6.8}  &20.7$\pm${28.9}  &82.9$\pm${0.5}

\\ \hline

\end{tabular}
\caption{Model performances of different Byzantine-resilient methods under different attacks (with Non-IID setting $\beta = 0.1$ and 20 attackers for classification of Fashion MNIST).}

\end{table*}

\begin{table*}[htbp]

\centering 
\scriptsize
\setlength{\tabcolsep}{1.0mm}
\renewcommand\arraystretch{1.5}
\begin{tabular}{|c||c|c|c|c|c|c|c|c|c|c|c|}
\hline

&&\begin{tabular}[c]{@{}c@{}} 
Krum \\ \cite{blanchard2017machine} \end{tabular} &\begin{tabular}[c]{@{}c@{}}GeoMeidan\\ \cite{chen2017distributed}\end{tabular}&\begin{tabular}[c]{@{}c@{}}Median\\ \cite{yin2018byzantine}\end{tabular}&\begin{tabular}[c]{@{}c@{}}Trimmed\\ \cite{yin2018byzantine}\end{tabular}&\begin{tabular}[c]{@{}c@{}} Bulyan \\ \cite{guerraoui2018hidden}\end{tabular}&\begin{tabular}[c]{@{}c@{}} FLtrust \\ \cite{cao2020fltrust}\end{tabular}&\begin{tabular}[c]{@{}c@{}} DnC\\ \cite{shejwalkar2021manipulating}\end{tabular}&\begin{tabular}[c]{@{}c@{}}Kmeans\\ \cite{shen2016auror}\end{tabular}&\begin{tabular}[c]{@{}c@{}}FedCut\\ (Ours)\end{tabular} \\ \hline \hline

\multirow{9}{*}{\begin{tabular}[c]{@{}c@{}} 
F\\M\\N
\\ I\\S\\T \end{tabular} }

 &No attack  &82.3$\pm${0.7}  &90.0$\pm${0.3}  &85.6$\pm${1.0}  &86.5$\pm${0.4}  &85.5$\pm${2.8}  &88.1$\pm${0.6}  &89.4$\pm${1.0}  &88.4$\pm${1.0}  &90.2$\pm${0.2}

 \\ \cline{2-11} &Lie \cite{baruch2019little}  &83.5$\pm${1.0}  &53.7$\pm${13.5}  &36.1$\pm${15.6}  &51.9$\pm${26.7}  &54.5$\pm${12.9}  &76.3$\pm${6.0}  &73.3$\pm${7.8}  &87.9$\pm${1.0}  &89.9$\pm${0.1}

 \\ \cline{2-11} &Fang-v1 \cite{fang2020local}  &84.8$\pm${0.7}  &85.5$\pm${1.1}  &75.5$\pm${2.5}  &71.8$\pm${7.3}  &83.1$\pm${0.9}  &84.4$\pm${1.4}  &89.2$\pm${1.3}  &88.8$\pm${1.2}  &89.4$\pm${0.2}

 \\ \cline{2-11} &Fang-v2 \cite{fang2020local}  &10.0$\pm${8.3}  &18.9$\pm${7.1}  &62.4$\pm${5.1}  &55.8$\pm${2.8}  &59.3$\pm${5.4}  &84.9$\pm${0.8}  &83.0$\pm${8.3}  &72.4$\pm${5.2}  &89.5$\pm${0.2}

 \\ \cline{2-11} &Same value \cite{li2019rsa}  &83.7$\pm${0.8}  &64.8$\pm${8.2}  &77.7$\pm${2.7}  &71.5$\pm${5.5}  &86.2$\pm${1.0}  &86.9$\pm${0.7}  &88.8$\pm${1.2}  &89.0$\pm${1.2}  &89.4$\pm${0.1}

 \\ \cline{2-11} &Gaussian \cite{blanchard2017machine}  &82.9$\pm${0.4}  &89.6$\pm${0.5}  &84.6$\pm${2.5}  &86.2$\pm${1.1}  &85.7$\pm${0.7}  &88.4$\pm${0.4}  &77.5$\pm${7.2}  &38.2$\pm${34.5}  &89.5$\pm${0.3}

 \\ \cline{2-11} &sign flipping \cite{data2021byzantine} &82.9$\pm${1.4}  &87.7$\pm${0.9}  &85.0$\pm${1.0}  &85.0$\pm${0.9}  &85.1$\pm${1.4}  &84.4$\pm${1.8}  &85.0$\pm${1.5}  &74.3$\pm${7.4}  &86.0$\pm${0.1}

 \\ \cline{2-11} &label flipping \cite{data2021byzantine} &82.4$\pm${1.1}  &89.9$\pm${0.4}  &79.4$\pm${3.4}  &80.4$\pm${1.8}  &80.3$\pm${6.6}  &88.1$\pm${0.2}  &88.4$\pm${1.2}  &88.6$\pm${1.5}  &89.3$\pm${0.3}

 \\ \cline{2-11} &minic  \cite{karimireddy2020byzantine} &81.1$\pm${2.6}  &87.5$\pm${1.1}  &83.0$\pm${0.4}  &87.8$\pm${0.3}  &76.3$\pm${1.5}  &88.7$\pm${0.3}  &89.3$\pm${0.3}  &89.4$\pm${0.3}  &88.9$\pm${0.6}

 \\ \cline{2-11} &Collusion (Ours)  &84.5$\pm${1.0}  &41.7$\pm${17.0}  &72.0$\pm${1.2}  &61.4$\pm${15.3}  &81.4$\pm${5.0}  &88.7$\pm${0.7}  &86.9$\pm${0.5}  &56.5$\pm${9.7}  &89.9$\pm${0.3} 

 \\ \cline{2-11} &Averaged 
 &75.8$\pm${1.8}  &70.9$\pm${5.0}  &74.1$\pm${3.5}  &73.8$\pm${6.2}  &77.7$\pm${3.8}  &85.9$\pm${1.3}  &85.1$\pm${3.0}  &77.4$\pm${6.3}  &89.2$\pm${0.2} 

 \\ \cline{2-11} &Worst-case 
 &10.0$\pm${8.3}  &18.9$\pm${7.1}  &36.1$\pm${15.6}  &51.9$\pm${26.7}  &54.5$\pm${12.9}  &76.3$\pm${6.0}  &73.3$\pm${7.8}  &38.2$\pm${34.5}  &86.0$\pm${0.1}

\\ \hline

\end{tabular}
\caption{Model performances of different Byzantine-resilient methods under different attacks (with Non-IID setting $\beta = 0.5$ and 20 attackers for classification of Fashion MNIST).}

\end{table*}

\begin{table*}[htbp]

\centering 
\scriptsize
\setlength{\tabcolsep}{1.0mm}
\renewcommand\arraystretch{1.5}
\begin{tabular}{|c||c|c|c|c|c|c|c|c|c|c|c|}
\hline

&&\begin{tabular}[c]{@{}c@{}} 
Krum \\ \cite{blanchard2017machine} \end{tabular} &\begin{tabular}[c]{@{}c@{}}GeoMeidan\\ \cite{chen2017distributed}\end{tabular}&\begin{tabular}[c]{@{}c@{}}Median\\ \cite{yin2018byzantine}\end{tabular}&\begin{tabular}[c]{@{}c@{}}Trimmed\\ \cite{yin2018byzantine}\end{tabular}&\begin{tabular}[c]{@{}c@{}} Bulyan \\ \cite{guerraoui2018hidden}\end{tabular}&\begin{tabular}[c]{@{}c@{}} FLtrust \\ \cite{cao2020fltrust}\end{tabular}&\begin{tabular}[c]{@{}c@{}} DnC\\ \cite{shejwalkar2021manipulating}\end{tabular}&\begin{tabular}[c]{@{}c@{}}Kmeans\\ \cite{shen2016auror}\end{tabular}&\begin{tabular}[c]{@{}c@{}}FedCut\\ (Ours)\end{tabular} \\ \hline \hline

\multirow{9}{*}{\begin{tabular}[c]{@{}c@{}} 
F\\M\\N
\\ I\\S\\T \end{tabular} }

 &No attack  &84.6$\pm${0.6}  &90.1$\pm${0.4}  &88.3$\pm${0.2}  &88.0$\pm${0.6}  &87.5$\pm${0.2}  &88.3$\pm${0.6}  &89.3$\pm${1.0}  &89.2$\pm${0.7}  &89.1$\pm${0.9}

 \\ \cline{2-11} &Lie \cite{baruch2019little}  &84.7$\pm${0.7}  &76.3$\pm${18.9}  &69.0$\pm${26.6}  &75.2$\pm${3.1}  &82.4$\pm${0.2}  &79.0$\pm${4.2}  &79.7$\pm${4.6}  &89.0$\pm${1.0}  &88.8$\pm${1.0}

 \\ \cline{2-11} &Fang-v1 \cite{fang2020local}  &85.3$\pm${0.4}  &86.4$\pm${0.7}  &80.8$\pm${1.4}  &80.3$\pm${2.0}  &87.5$\pm${0.3}  &86.6$\pm${0.9}  &88.5$\pm${1.0}  &89.2$\pm${0.6}  &89.7$\pm${0.2}

 \\ \cline{2-11} &Fang-v2 \cite{fang2020local}  &38.0$\pm${21.1}  &46.4$\pm${9.7}  &70.9$\pm${1.4}  &71.0$\pm${2.1}  &17.2$\pm${12.5}  &82.9$\pm${0.8}  &85.0$\pm${6.4}  &86.6$\pm${2.3}  &90.0$\pm${0.2}

 \\ \cline{2-11} &Same value \cite{li2019rsa}  &85.0$\pm${0.5}  &74.0$\pm${2.2}  &78.9$\pm${2.0}  &77.8$\pm${1.2}  &87.1$\pm${0.4}  &88.4$\pm${0.4}  &89.3$\pm${0.8}  &89.2$\pm${0.7}  &89.5$\pm${0.2}

 \\ \cline{2-11} &Gaussian \cite{blanchard2017machine}  &85.2$\pm${0.5}  &89.9$\pm${0.4}  &88.1$\pm${0.3}  &88.5$\pm${0.3}  &87.1$\pm${0.4}  &88.1$\pm${0.5}  &78.6$\pm${6.6}  &40.7$\pm${33.8}  &89.4$\pm${0.2}

 \\ \cline{2-11} &sign flipping \cite{data2021byzantine} &84.5$\pm${0.5}  &88.5$\pm${0.1}  &87.4$\pm${0.3}  &87.4$\pm${0.5}  &87.4$\pm${0.2}  &86.4$\pm${0.6}  &87.3$\pm${0.8}  &75.7$\pm${6.8}  &87.3$\pm${0.5}

 \\ \cline{2-11} &label flipping \cite{data2021byzantine} &85.5$\pm${0.6}  &89.9$\pm${0.6}  &82.3$\pm${0.8}  &82.0$\pm${0.5}  &87.3$\pm${0.2}  &87.3$\pm${0.9}  &89.2$\pm${1.1}  &89.8$\pm${0.7}  &89.7$\pm${0.1}

 \\ \cline{2-11} &minic  \cite{karimireddy2020byzantine} &81.8$\pm${0.8}  &86.3$\pm${0.8}  &86.4$\pm${0.3}  &88.5$\pm${0.2}  &80.0$\pm${0.4}  &88.4$\pm${0.3}  &89.1$\pm${0.2}  &89.2$\pm${0.9}  &89.8$\pm${0.4}

 \\ \cline{2-11} &Collusion (Ours)  &84.9$\pm${0.7}  &35.7$\pm${5.0}  &76.7$\pm${1.1}  &69.7$\pm${2.9}  &87.4$\pm${0.2}  &88.6$\pm${0.2}  &86.9$\pm${0.4}  &66.7$\pm${22.7}  &89.6$\pm${1.0} 

 \\ \cline{2-11} &Averaged 
 &79.9$\pm${2.6}  &76.4$\pm${3.9}  &80.9$\pm${3.4}  &80.8$\pm${1.3}  &79.1$\pm${1.5}  &86.4$\pm${0.9}  &86.3$\pm${2.3}  &80.5$\pm${7.0}  &89.3$\pm${0.5} 

 \\ \cline{2-11} &Worst-case 
 &38.0$\pm${21.1}  &35.7$\pm${5.0}  &69.0$\pm${26.6}  &69.7$\pm${2.9}  &17.2$\pm${12.5}  &79.0$\pm${4.2}  &78.6$\pm${6.6}  &40.7$\pm${33.8}  &87.3$\pm${0.5}

\\ \hline

\end{tabular}
\caption{Model performances of different Byzantine-resilient methods under different attacks (with IID setting and 20 attackers for classification of Fashion MNIST).}

\end{table*}

\begin{table*}[htbp]

\centering 
\scriptsize
\setlength{\tabcolsep}{1.0mm}
\renewcommand\arraystretch{1.5}
\begin{tabular}{|c||c|c|c|c|c|c|c|c|c|c|c|}
\hline

&&\begin{tabular}[c]{@{}c@{}} 
Krum \\ \cite{blanchard2017machine} \end{tabular} &\begin{tabular}[c]{@{}c@{}}GeoMeidan\\ \cite{chen2017distributed}\end{tabular}&\begin{tabular}[c]{@{}c@{}}Median\\ \cite{yin2018byzantine}\end{tabular}&\begin{tabular}[c]{@{}c@{}}Trimmed\\ \cite{yin2018byzantine}\end{tabular}&\begin{tabular}[c]{@{}c@{}} Bulyan \\ \cite{guerraoui2018hidden}\end{tabular}&\begin{tabular}[c]{@{}c@{}} FLtrust \\ \cite{cao2020fltrust}\end{tabular}&\begin{tabular}[c]{@{}c@{}} DnC\\ \cite{shejwalkar2021manipulating}\end{tabular}&\begin{tabular}[c]{@{}c@{}}Kmeans\\ \cite{shen2016auror}\end{tabular}&\begin{tabular}[c]{@{}c@{}}FedCut\\ (Ours)\end{tabular} \\ \hline \hline

\multirow{9}{*}{\begin{tabular}[c]{@{}c@{}} 
F\\M\\N
\\ I\\S\\T \end{tabular} }

 &No attack  &71.7$\pm${1.9}  &88.6$\pm${0.7}  &80.6$\pm${1.2}  &79.5$\pm${3.7}  &79.4$\pm${0.9}  &86.2$\pm${1.5}  &88.6$\pm${0.2}  &89.3$\pm${0.2}  &89.3$\pm${0.5}

 \\ \cline{2-11} &Lie \cite{baruch2019little}  &74.4$\pm${0.8}  &37.9$\pm${17.2}  &26.9$\pm${14.5}  &30.1$\pm${22.4}  &48.0$\pm${10.9}  &66.5$\pm${26.1}  &61.4$\pm${22.2}  &87.7$\pm${1.8}  &88.1$\pm${2.3}

 \\ \cline{2-11} &Fang-v1 \cite{fang2020local}  &74.5$\pm${1.9}  &61.3$\pm${4.6}  &19.5$\pm${5.4}  &21.3$\pm${5.3}  &16.7$\pm${33.8}  &82.0$\pm${3.0}  &79.1$\pm${2.4}  &89.1$\pm${0.4}  &88.6$\pm${0.6}

 \\ \cline{2-11} &Fang-v2 \cite{fang2020local}  &10.0$\pm${0.0}  &10.0$\pm${0.0}  &15.3$\pm${7.7}  &17.6$\pm${3.1}  &12.3$\pm${3.0}  &84.4$\pm${0.4}  &78.0$\pm${14.6}  &15.0$\pm${18.6}  &88.7$\pm${0.2}

 \\ \cline{2-11} &Same value \cite{li2019rsa}  &75.6$\pm${1.0}  &43.2$\pm${19.2}  &44.5$\pm${8.2}  &41.0$\pm${3.5}  &79.6$\pm${1.1}  &86.0$\pm${0.4}  &89.2$\pm${0.7}  &89.4$\pm${0.3}  &89.0$\pm${0.5}

 \\ \cline{2-11} &Gaussian \cite{blanchard2017machine}  &76.1$\pm${2.1}  &87.9$\pm${0.8}  &75.2$\pm${3.7}  &80.9$\pm${1.0}  &78.2$\pm${3.7}  &85.9$\pm${1.4}  &76.7$\pm${11.1}  &13.3$\pm${29.5}  &88.8$\pm${0.1}

 \\ \cline{2-11} &sign flipping \cite{data2021byzantine} &74.1$\pm${1.0}  &79.3$\pm${5.6}  &71.1$\pm${6.1}  &70.0$\pm${5.1}  &79.6$\pm${1.0}  &80.7$\pm${2.2}  &50.2$\pm${3.2}  &39.4$\pm${3.8}  &82.1$\pm${0.5}

 \\ \cline{2-11} &label flipping \cite{data2021byzantine} &76.5$\pm${1.5}  &87.3$\pm${1.0}  &67.7$\pm${7.0}  &70.2$\pm${5.1}  &80.1$\pm${14.9}  &85.9$\pm${1.6}  &83.7$\pm${6.8}  &80.6$\pm${3.2}  &86.7$\pm${1.9}

 \\ \cline{2-11} &minic  \cite{karimireddy2020byzantine} &72.6$\pm${5.5}  &70.3$\pm${2.2}  &73.8$\pm${0.7}  &75.9$\pm${7.8}  &53.6$\pm${3.6}  &86.9$\pm${0.7}  &86.9$\pm${0.5}  &88.9$\pm${0.2}  &88.9$\pm${0.1}

 \\ \cline{2-11} &Collusion (Ours)  &76.4$\pm${1.5}  &39.7$\pm${9.0}  &67.8$\pm${2.2}  &67.5$\pm${3.4}  &19.7$\pm${6.3}  &86.1$\pm${0.2}  &82.1$\pm${1.1}  &54.0$\pm${3.1}  &89.2$\pm${0.3} 

 \\ \cline{2-11} &Averaged 
 &68.2$\pm${1.7}  &60.6$\pm${6.0}  &54.2$\pm${5.7}  &55.4$\pm${6.0}  &54.7$\pm${7.9}  &83.1$\pm${3.8}  &77.6$\pm${6.3}  &64.7$\pm${6.1}  &87.9$\pm${0.7} 

 \\ \cline{2-11} &Worst-case 
 &10.0$\pm${0.0}  &10.0$\pm${0.0}  &15.3$\pm${7.7}  &17.6$\pm${3.1}  &12.3$\pm${3.0}  &66.5$\pm${26.1}  &50.2$\pm${3.2}  &13.3$\pm${29.5}  &82.1$\pm${0.5}

\\ \hline

\end{tabular}
\caption{Model performances of different Byzantine-resilient methods under different attacks (with Non-IID setting $\beta = 0.1$ and 30 attackers for classification of Fashion MNIST).}

\end{table*}

\begin{table*}[htbp]

\centering 
\scriptsize
\setlength{\tabcolsep}{1.0mm}
\renewcommand\arraystretch{1.5}
\begin{tabular}{|c||c|c|c|c|c|c|c|c|c|c|c|}
\hline

&&\begin{tabular}[c]{@{}c@{}} 
Krum \\ \cite{blanchard2017machine} \end{tabular} &\begin{tabular}[c]{@{}c@{}}GeoMeidan\\ \cite{chen2017distributed}\end{tabular}&\begin{tabular}[c]{@{}c@{}}Median\\ \cite{yin2018byzantine}\end{tabular}&\begin{tabular}[c]{@{}c@{}}Trimmed\\ \cite{yin2018byzantine}\end{tabular}&\begin{tabular}[c]{@{}c@{}} Bulyan \\ \cite{guerraoui2018hidden}\end{tabular}&\begin{tabular}[c]{@{}c@{}} FLtrust \\ \cite{cao2020fltrust}\end{tabular}&\begin{tabular}[c]{@{}c@{}} DnC\\ \cite{shejwalkar2021manipulating}\end{tabular}&\begin{tabular}[c]{@{}c@{}}Kmeans\\ \cite{shen2016auror}\end{tabular}&\begin{tabular}[c]{@{}c@{}}FedCut\\ (Ours)\end{tabular} \\ \hline \hline

\multirow{9}{*}{\begin{tabular}[c]{@{}c@{}} 
F\\M\\N
\\ I\\S\\T \end{tabular} }

 &No attack  &82.6$\pm${0.8}  &89.6$\pm${0.7}  &85.4$\pm${1.5}  &84.4$\pm${1.9}  &83.7$\pm${4.3}  &87.3$\pm${0.8}  &89.5$\pm${0.4}  &89.1$\pm${0.6}  &89.7$\pm${0.2}

 \\ \cline{2-11} &Lie \cite{baruch2019little}  &83.0$\pm${1.0}  &45.4$\pm${14.6}  &39.0$\pm${15.9}  &50.9$\pm${10.5}  &59.5$\pm${11.1}  &52.6$\pm${19.5}  &79.8$\pm${2.7}  &89.6$\pm${0.7}  &89.3$\pm${0.3}

 \\ \cline{2-11} &Fang-v1 \cite{fang2020local}  &83.3$\pm${0.3}  &76.6$\pm${2.2}  &47.6$\pm${10.2}  &52.1$\pm${5.9}  &81.2$\pm${3.1}  &85.0$\pm${1.6}  &85.5$\pm${0.2}  &89.6$\pm${0.3}  &88.8$\pm${0.8}

 \\ \cline{2-11} &Fang-v2 \cite{fang2020local}  &10.0$\pm${0.0}  &10.0$\pm${0.0}  &47.4$\pm${5.5}  &32.4$\pm${14.8}  &12.3$\pm${3.3}  &84.6$\pm${0.9}  &80.8$\pm${11.8}  &34.6$\pm${24.0}  &88.8$\pm${0.6}

 \\ \cline{2-11} &Same value \cite{li2019rsa}  &84.2$\pm${0.7}  &49.9$\pm${15.9}  &53.5$\pm${8.0}  &48.3$\pm${10.3}  &85.8$\pm${0.9}  &85.4$\pm${1.7}  &89.4$\pm${0.1}  &89.3$\pm${0.4}  &89.1$\pm${0.5}

 \\ \cline{2-11} &Gaussian \cite{blanchard2017machine}  &83.4$\pm${1.0}  &89.0$\pm${0.4}  &84.1$\pm${1.9}  &86.3$\pm${1.3}  &85.1$\pm${1.2}  &87.6$\pm${0.9}  &73.0$\pm${8.9}  &17.5$\pm${29.4}  &89.5$\pm${0.1}

 \\ \cline{2-11} &sign flipping \cite{data2021byzantine} &83.0$\pm${0.4}  &85.3$\pm${1.7}  &83.6$\pm${0.8}  &83.5$\pm${2.0}  &83.1$\pm${1.6}  &79.7$\pm${2.1}  &59.6$\pm${4.3}  &29.7$\pm${23.1}  &87.9$\pm${1.6}

 \\ \cline{2-11} &label flipping \cite{data2021byzantine} &83.6$\pm${0.3}  &89.7$\pm${0.5}  &67.4$\pm${15.1}  &75.5$\pm${1.6}  &85.1$\pm${1.1}  &87.4$\pm${1.1}  &88.9$\pm${1.2}  &89.3$\pm${0.4}  &88.5$\pm${0.4}

 \\ \cline{2-11} &minic  \cite{karimireddy2020byzantine} &77.0$\pm${1.7}  &78.3$\pm${0.7}  &80.6$\pm${1.0}  &84.1$\pm${0.7}  &63.0$\pm${7.4}  &88.3$\pm${0.3}  &87.9$\pm${0.6}  &88.9$\pm${0.6}  &89.4$\pm${0.5}

 \\ \cline{2-11} &Collusion (Ours)  &83.6$\pm${0.4}  &10.4$\pm${8.3}  &65.2$\pm${4.8}  &63.3$\pm${7.1}  &85.3$\pm${0.6}  &88.6$\pm${0.5}  &83.6$\pm${0.3}  &63.0$\pm${24.1}  &89.0$\pm${0.5} 

 \\ \cline{2-11} &Averaged 
 &75.4$\pm${0.7}  &62.4$\pm${4.5}  &65.4$\pm${6.5}  &66.1$\pm${5.6}  &72.4$\pm${3.5}  &82.7$\pm${2.9}  &81.8$\pm${3.1}  &68.1$\pm${10.4}  &89.0$\pm${0.6} 

 \\ \cline{2-11} &Worst-case 
 &10.0$\pm${0.0}  &10.0$\pm${0.0}  &39.0$\pm${15.9}  &32.4$\pm${14.8}  &12.3$\pm${3.3}  &52.6$\pm${19.5}  &59.6$\pm${4.3}  &17.5$\pm${29.4}  &87.9$\pm${1.6}

\\ \hline

\end{tabular}
\caption{Model performances of different Byzantine-resilient methods under different attacks (with Non-IID setting $\beta = 0.5$ and 30 attackers for classification of Fashion MNIST).}

\end{table*}

\begin{table*}[htbp]

\centering 
\scriptsize
\setlength{\tabcolsep}{1.0mm}
\renewcommand\arraystretch{1.5}
\begin{tabular}{|c||c|c|c|c|c|c|c|c|c|c|c|}
\hline

&&\begin{tabular}[c]{@{}c@{}} 
Krum \\ \cite{blanchard2017machine} \end{tabular} &\begin{tabular}[c]{@{}c@{}}GeoMeidan\\ \cite{chen2017distributed}\end{tabular}&\begin{tabular}[c]{@{}c@{}}Median\\ \cite{yin2018byzantine}\end{tabular}&\begin{tabular}[c]{@{}c@{}}Trimmed\\ \cite{yin2018byzantine}\end{tabular}&\begin{tabular}[c]{@{}c@{}} Bulyan \\ \cite{guerraoui2018hidden}\end{tabular}&\begin{tabular}[c]{@{}c@{}} FLtrust \\ \cite{cao2020fltrust}\end{tabular}&\begin{tabular}[c]{@{}c@{}} DnC\\ \cite{shejwalkar2021manipulating}\end{tabular}&\begin{tabular}[c]{@{}c@{}}Kmeans\\ \cite{shen2016auror}\end{tabular}&\begin{tabular}[c]{@{}c@{}}FedCut\\ (Ours)\end{tabular} \\ \hline \hline

\multirow{9}{*}{\begin{tabular}[c]{@{}c@{}} 
M\\N
\\ I\\S\\T \end{tabular} }

 &No attack  &64.2$\pm${6.8}  &92.1$\pm${0.1}  &66.9$\pm${2.3}  &66.1$\pm${2.1}  &55.6$\pm${5.9}  &89.8$\pm${0.2}  &91.7$\pm${0.2}  &91.8$\pm${0.3}  &92.2$\pm${0.0}

 \\ \cline{2-11} &Lie \cite{baruch2019little}  &61.6$\pm${7.7}  &90.5$\pm${0.2}  &75.8$\pm${2.8}  &73.8$\pm${3.7}  &63.1$\pm${2.2}  &89.6$\pm${0.1}  &92.1$\pm${0.2}  &89.8$\pm${0.2}  &92.2$\pm${0.1}

 \\ \cline{2-11} &Fang-v1 \cite{fang2020local}  &62.7$\pm${5.9}  &89.8$\pm${0.2}  &50.9$\pm${4.5}  &42.9$\pm${6.6}  &57.5$\pm${2.0}  &79.3$\pm${1.5}  &92.0$\pm${0.3}  &68.4$\pm${5.3}  &92.0$\pm${0.1}

 \\ \cline{2-11} &Fang-v2 \cite{fang2020local}  &58.4$\pm${4.2}  &88.2$\pm${0.1}  &40.9$\pm${3.4}  &45.0$\pm${2.7}  &41.6$\pm${5.2}  &78.5$\pm${2.4}  &91.4$\pm${0.4}  &91.6$\pm${0.1}  &92.3$\pm${0.2}

 \\ \cline{2-11} &Same value \cite{li2019rsa}  &63.0$\pm${6.6}  &90.5$\pm${0.2}  &58.8$\pm${2.2}  &58.7$\pm${2.2}  &55.2$\pm${3.2}  &89.8$\pm${0.2}  &92.1$\pm${0.2}  &92.1$\pm${0.3}  &92.2$\pm${0.1}

 \\ \cline{2-11} &Gaussian \cite{blanchard2017machine}  &63.6$\pm${4.5}  &91.9$\pm${0.1}  &68.5$\pm${3.1}  &67.1$\pm${4.9}  &54.5$\pm${3.8}  &89.6$\pm${0.4}  &83.3$\pm${0.3}  &43.8$\pm${3.4}  &92.3$\pm${0.1}

 \\ \cline{2-11} &sign flipping \cite{data2021byzantine} &66.2$\pm${7.9}  &91.5$\pm${0.1}  &66.8$\pm${3.4}  &68.5$\pm${5.1}  &53.6$\pm${3.0}  &79.6$\pm${0.6}  &90.3$\pm${0.2}  &64.7$\pm${1.3}  &91.5$\pm${0.2}

 \\ \cline{2-11} &label flipping \cite{data2021byzantine} &64.8$\pm${2.1}  &91.3$\pm${0.1}  &57.3$\pm${2.5}  &58.1$\pm${5.4}  &53.0$\pm${7.2}  &89.9$\pm${0.4}  &90.5$\pm${0.2}  &86.4$\pm${0.7}  &88.8$\pm${0.1}

 \\ \cline{2-11} &minic  \cite{karimireddy2020byzantine} &72.5$\pm${3.1}  &91.3$\pm${0.1}  &74.8$\pm${0.9}  &88.0$\pm${0.4}  &65.1$\pm${4.2}  &90.2$\pm${0.1}  &91.4$\pm${0.1}  &91.1$\pm${0.2}  &92.1$\pm${0.2}

 \\ \cline{2-11} &Collusion (Ours)  &66.6$\pm${4.9}  &89.7$\pm${0.6}  &69.7$\pm${7.3}  &70.8$\pm${6.4}  &58.5$\pm${9.0}  &89.6$\pm${0.4}  &89.2$\pm${1.9}  &79.1$\pm${0.3}  &92.1$\pm${0.2} 

 \\ \cline{2-11} &Averaged 
 &64.4$\pm${5.4}  &90.7$\pm${0.2}  &63.0$\pm${3.2}  &63.9$\pm${4.0}  &55.8$\pm${4.6}  &86.6$\pm${0.6}  &90.4$\pm${0.4}  &79.9$\pm${1.2}  &91.8$\pm${0.1} 

 \\ \cline{2-11} &Worst-case 
 &58.4$\pm${4.2}  &88.2$\pm${0.1}  &40.9$\pm${3.4}  &42.9$\pm${6.6}  &41.6$\pm${5.2}  &78.5$\pm${2.4}  &83.3$\pm${0.3}  &43.8$\pm${3.4}  &88.8$\pm${0.1}

\\ \hline

\end{tabular}
\caption{Model performances of different Byzantine-resilient methods under different attacks (with Non-IID setting $\beta = 0.1$ and 10 attackers for classification of MNIST).}

\end{table*}

\begin{table*}[htbp]

\centering 
\scriptsize
\setlength{\tabcolsep}{1.0mm}
\renewcommand\arraystretch{1.5}
\begin{tabular}{|c||c|c|c|c|c|c|c|c|c|c|c|}
\hline

&&\begin{tabular}[c]{@{}c@{}} 
Krum \\ \cite{blanchard2017machine} \end{tabular} &\begin{tabular}[c]{@{}c@{}}GeoMeidan\\ \cite{chen2017distributed}\end{tabular}&\begin{tabular}[c]{@{}c@{}}Median\\ \cite{yin2018byzantine}\end{tabular}&\begin{tabular}[c]{@{}c@{}}Trimmed\\ \cite{yin2018byzantine}\end{tabular}&\begin{tabular}[c]{@{}c@{}} Bulyan \\ \cite{guerraoui2018hidden}\end{tabular}&\begin{tabular}[c]{@{}c@{}} FLtrust \\ \cite{cao2020fltrust}\end{tabular}&\begin{tabular}[c]{@{}c@{}} DnC\\ \cite{shejwalkar2021manipulating}\end{tabular}&\begin{tabular}[c]{@{}c@{}}Kmeans\\ \cite{shen2016auror}\end{tabular}&\begin{tabular}[c]{@{}c@{}}FedCut\\ (Ours)\end{tabular} \\ \hline \hline

\multirow{9}{*}{\begin{tabular}[c]{@{}c@{}} 
M\\N
\\ I\\S\\T \end{tabular} }

 &No attack  &89.0$\pm${0.3}  &92.4$\pm${0.1}  &85.3$\pm${0.5}  &84.9$\pm${0.1}  &81.7$\pm${1.0}  &89.0$\pm${0.4}  &92.2$\pm${0.3}  &92.0$\pm${0.4}  &92.4$\pm${0.1}

 \\ \cline{2-11} &Lie \cite{baruch2019little}  &89.3$\pm${0.3}  &91.5$\pm${0.2}  &87.7$\pm${0.2}  &88.1$\pm${0.5}  &85.1$\pm${0.7}  &88.8$\pm${0.7}  &92.3$\pm${0.2}  &91.6$\pm${0.0}  &92.4$\pm${0.1}

 \\ \cline{2-11} &Fang-v1 \cite{fang2020local}  &89.2$\pm${0.3}  &91.4$\pm${0.2}  &72.5$\pm${1.6}  &72.3$\pm${1.3}  &80.1$\pm${0.7}  &78.2$\pm${2.0}  &92.3$\pm${0.2}  &92.2$\pm${0.2}  &92.3$\pm${0.0}

 \\ \cline{2-11} &Fang-v2 \cite{fang2020local}  &38.1$\pm${5.5}  &88.2$\pm${0.2}  &70.2$\pm${1.6}  &70.2$\pm${2.3}  &60.5$\pm${1.5}  &80.9$\pm${0.6}  &91.9$\pm${0.3}  &91.9$\pm${0.2}  &92.4$\pm${0.1}

 \\ \cline{2-11} &Same value \cite{li2019rsa}  &89.3$\pm${0.4}  &91.2$\pm${0.2}  &79.3$\pm${0.3}  &78.5$\pm${1.1}  &81.1$\pm${0.6}  &88.5$\pm${0.9}  &92.3$\pm${0.3}  &92.3$\pm${0.2}  &92.4$\pm${0.1}

 \\ \cline{2-11} &Gaussian \cite{blanchard2017machine}  &89.2$\pm${0.4}  &92.3$\pm${0.2}  &85.7$\pm${0.2}  &85.5$\pm${0.3}  &80.6$\pm${0.6}  &88.5$\pm${0.5}  &83.5$\pm${0.7}  &43.3$\pm${5.3}  &92.3$\pm${0.1}

 \\ \cline{2-11} &sign flipping \cite{data2021byzantine} &89.1$\pm${0.2}  &92.2$\pm${0.1}  &86.0$\pm${0.1}  &85.6$\pm${0.5}  &81.0$\pm${0.8}  &78.1$\pm${0.8}  &92.0$\pm${0.1}  &81.9$\pm${8.3}  &91.9$\pm${0.3}

 \\ \cline{2-11} &label flipping \cite{data2021byzantine} &89.3$\pm${0.5}  &91.8$\pm${0.2}  &82.2$\pm${0.2}  &82.6$\pm${1.0}  &81.3$\pm${1.6}  &88.9$\pm${0.3}  &91.8$\pm${0.1}  &90.4$\pm${0.5}  &91.3$\pm${1.3}

 \\ \cline{2-11} &minic  \cite{karimireddy2020byzantine} &89.9$\pm${0.3}  &92.1$\pm${0.2}  &88.6$\pm${0.4}  &91.4$\pm${0.1}  &85.4$\pm${0.3}  &90.6$\pm${0.1}  &92.2$\pm${0.1}  &92.1$\pm${0.1}  &92.4$\pm${0.1}

 \\ \cline{2-11} &Collusion (Ours)  &89.2$\pm${0.4}  &90.7$\pm${0.7}  &84.5$\pm${2.4}  &84.1$\pm${2.6}  &82.3$\pm${3.7}  &88.7$\pm${0.2}  &90.1$\pm${1.7}  &78.1$\pm${5.3}  &92.2$\pm${0.0} 

 \\ \cline{2-11} &Averaged 
 &84.2$\pm${0.9}  &91.4$\pm${0.2}  &82.2$\pm${0.8}  &82.3$\pm${1.0}  &79.9$\pm${1.2}  &86.0$\pm${0.7}  &91.1$\pm${0.4}  &84.6$\pm${2.1}  &92.2$\pm${0.2} 

 \\ \cline{2-11} &Worst-case 
 &38.1$\pm${5.5}  &88.2$\pm${0.2}  &70.2$\pm${1.6}  &70.2$\pm${2.3}  &60.5$\pm${1.5}  &78.1$\pm${0.8}  &83.5$\pm${0.7}  &43.3$\pm${5.3}  &91.3$\pm${1.3}

\\ \hline

\end{tabular}
\caption{Model performances of different Byzantine-resilient methods under different attacks (with Non-IID setting $\beta = 0.5$ and 10 attackers for classification of MNIST).}

\end{table*}

\begin{table*}[htbp]

\centering 
\scriptsize
\setlength{\tabcolsep}{1.0mm}
\renewcommand\arraystretch{1.5}
\begin{tabular}{|c||c|c|c|c|c|c|c|c|c|c|c|}
\hline

&&\begin{tabular}[c]{@{}c@{}} 
Krum \\ \cite{blanchard2017machine} \end{tabular} &\begin{tabular}[c]{@{}c@{}}GeoMeidan\\ \cite{chen2017distributed}\end{tabular}&\begin{tabular}[c]{@{}c@{}}Median\\ \cite{yin2018byzantine}\end{tabular}&\begin{tabular}[c]{@{}c@{}}Trimmed\\ \cite{yin2018byzantine}\end{tabular}&\begin{tabular}[c]{@{}c@{}} Bulyan \\ \cite{guerraoui2018hidden}\end{tabular}&\begin{tabular}[c]{@{}c@{}} FLtrust \\ \cite{cao2020fltrust}\end{tabular}&\begin{tabular}[c]{@{}c@{}} DnC\\ \cite{shejwalkar2021manipulating}\end{tabular}&\begin{tabular}[c]{@{}c@{}}Kmeans\\ \cite{shen2016auror}\end{tabular}&\begin{tabular}[c]{@{}c@{}}FedCut\\ (Ours)\end{tabular} \\ \hline \hline

\multirow{9}{*}{\begin{tabular}[c]{@{}c@{}} 
M\\N
\\ I\\S\\T \end{tabular} }

 &No attack  &90.3$\pm${0.1}  &92.5$\pm${0.1}  &90.3$\pm${0.1}  &90.1$\pm${0.2}  &88.3$\pm${0.3}  &89.6$\pm${0.2}  &92.3$\pm${0.3}  &92.3$\pm${0.2}  &92.4$\pm${0.1}

 \\ \cline{2-11} &Lie \cite{baruch2019little}  &90.5$\pm${0.1}  &91.8$\pm${0.2}  &90.7$\pm${0.2}  &90.9$\pm${0.1}  &89.1$\pm${0.1}  &89.6$\pm${0.1}  &92.3$\pm${0.2}  &91.8$\pm${0.3}  &92.3$\pm${0.1}

 \\ \cline{2-11} &Fang-v1 \cite{fang2020local}  &90.4$\pm${0.1}  &91.7$\pm${0.2}  &83.6$\pm${0.2}  &82.8$\pm${0.4}  &88.0$\pm${0.2}  &77.1$\pm${2.1}  &92.4$\pm${0.2}  &91.9$\pm${0.3}  &92.3$\pm${0.1}

 \\ \cline{2-11} &Fang-v2 \cite{fang2020local}  &43.8$\pm${2.0}  &89.5$\pm${0.6}  &83.6$\pm${0.5}  &83.2$\pm${0.3}  &71.2$\pm${0.6}  &81.6$\pm${1.4}  &92.2$\pm${0.3}  &92.1$\pm${0.1}  &92.4$\pm${0.0}

 \\ \cline{2-11} &Same value \cite{li2019rsa}  &90.4$\pm${0.2}  &91.6$\pm${0.1}  &87.9$\pm${0.2}  &87.6$\pm${0.1}  &88.6$\pm${0.2}  &89.5$\pm${0.1}  &92.2$\pm${0.1}  &92.3$\pm${0.3}  &92.4$\pm${0.1}

 \\ \cline{2-11} &Gaussian \cite{blanchard2017machine}  &90.5$\pm${0.2}  &92.4$\pm${0.1}  &90.4$\pm${0.3}  &90.4$\pm${0.1}  &88.5$\pm${0.2}  &88.7$\pm${0.2}  &83.6$\pm${0.5}  &50.7$\pm${0.9}  &92.4$\pm${0.1}

 \\ \cline{2-11} &sign flipping \cite{data2021byzantine} &90.4$\pm${0.1}  &92.3$\pm${0.1}  &90.4$\pm${0.2}  &90.3$\pm${0.1}  &88.6$\pm${0.1}  &75.6$\pm${3.8}  &92.2$\pm${0.2}  &91.5$\pm${0.1}  &92.2$\pm${0.1}

 \\ \cline{2-11} &label flipping \cite{data2021byzantine} &90.4$\pm${0.2}  &91.8$\pm${0.1}  &89.4$\pm${0.3}  &89.2$\pm${0.2}  &88.6$\pm${0.2}  &89.2$\pm${0.5}  &92.2$\pm${0.1}  &92.2$\pm${0.3}  &92.2$\pm${0.1}

 \\ \cline{2-11} &minic  \cite{karimireddy2020byzantine} &90.7$\pm${0.2}  &92.1$\pm${0.2}  &90.9$\pm${0.2}  &91.7$\pm${0.1}  &89.6$\pm${0.1}  &90.4$\pm${0.2}  &92.5$\pm${0.1}  &92.2$\pm${0.1}  &92.4$\pm${0.1}

 \\ \cline{2-11} &Collusion (Ours)  &90.5$\pm${0.2}  &91.0$\pm${0.4}  &88.9$\pm${0.8}  &88.6$\pm${0.8}  &88.7$\pm${0.6}  &89.0$\pm${0.4}  &91.7$\pm${0.4}  &76.5$\pm${0.7}  &92.2$\pm${0.2} 

 \\ \cline{2-11} &Averaged 
 &85.8$\pm${0.3}  &91.7$\pm${0.2}  &88.6$\pm${0.3}  &88.5$\pm${0.2}  &86.9$\pm${0.3}  &86.0$\pm${0.9}  &91.4$\pm${0.2}  &86.4$\pm${0.3}  &92.3$\pm${0.1} 

 \\ \cline{2-11} &Worst-case 
 &43.8$\pm${2.0}  &89.5$\pm${0.6}  &83.6$\pm${0.2}  &82.8$\pm${0.4}  &71.2$\pm${0.6}  &75.6$\pm${3.8}  &83.6$\pm${0.5}  &50.7$\pm${0.9}  &92.2$\pm${0.1}

\\ \hline

\end{tabular}
\caption{Model performances of different Byzantine-resilient methods under different attacks (with IID and 10 attackers for classification of MNIST).}

\end{table*}

\begin{table*}[htbp]

\centering 
\scriptsize
\setlength{\tabcolsep}{1.0mm}
\renewcommand\arraystretch{1.5}
\begin{tabular}{|c||c|c|c|c|c|c|c|c|c|c|c|}
\hline

&&\begin{tabular}[c]{@{}c@{}} 
Krum \\ \cite{blanchard2017machine} \end{tabular} &\begin{tabular}[c]{@{}c@{}}GeoMeidan\\ \cite{chen2017distributed}\end{tabular}&\begin{tabular}[c]{@{}c@{}}Median\\ \cite{yin2018byzantine}\end{tabular}&\begin{tabular}[c]{@{}c@{}}Trimmed\\ \cite{yin2018byzantine}\end{tabular}&\begin{tabular}[c]{@{}c@{}} Bulyan \\ \cite{guerraoui2018hidden}\end{tabular}&\begin{tabular}[c]{@{}c@{}} FLtrust \\ \cite{cao2020fltrust}\end{tabular}&\begin{tabular}[c]{@{}c@{}} DnC\\ \cite{shejwalkar2021manipulating}\end{tabular}&\begin{tabular}[c]{@{}c@{}}Kmeans\\ \cite{shen2016auror}\end{tabular}&\begin{tabular}[c]{@{}c@{}}FedCut\\ (Ours)\end{tabular} \\ \hline \hline

\multirow{9}{*}{\begin{tabular}[c]{@{}c@{}} 
M\\N
\\ I\\S\\T \end{tabular} }

 &No attack  &63.2$\pm${8.0}  &92.1$\pm${0.1}  &63.3$\pm${3.4}  &66.0$\pm${0.7}  &75.6$\pm${2.6}  &89.9$\pm${0.2}  &91.7$\pm${0.3}  &91.8$\pm${0.2}  &92.3$\pm${0.1}

 \\ \cline{2-11} &Lie \cite{baruch2019little}  &66.6$\pm${2.7}  &86.4$\pm${0.2}  &81.6$\pm${0.2}  &83.5$\pm${0.4}  &79.2$\pm${0.5}  &88.7$\pm${0.5}  &92.0$\pm${0.3}  &92.0$\pm${0.2}  &92.1$\pm${0.1}

 \\ \cline{2-11} &Fang-v1 \cite{fang2020local}  &68.4$\pm${2.5}  &79.0$\pm${0.8}  &32.3$\pm${4.2}  &30.7$\pm${5.5}  &63.2$\pm${6.2}  &78.8$\pm${2.2}  &92.0$\pm${0.3}  &92.0$\pm${0.4}  &92.2$\pm${0.1}

 \\ \cline{2-11} &Fang-v2 \cite{fang2020local}  &52.1$\pm${3.9}  &62.5$\pm${4.0}  &26.4$\pm${1.9}  &28.2$\pm${3.1}  &39.5$\pm${9.1}  &81.0$\pm${3.3}  &89.0$\pm${1.3}  &91.5$\pm${0.1}  &92.2$\pm${0.1}

 \\ \cline{2-11} &Same value \cite{li2019rsa}  &66.3$\pm${6.7}  &86.8$\pm${0.5}  &52.7$\pm${5.1}  &47.9$\pm${2.2}  &71.1$\pm${4.4}  &89.8$\pm${0.4}  &91.9$\pm${0.3}  &92.0$\pm${0.2}  &92.1$\pm${0.1}

 \\ \cline{2-11} &Gaussian \cite{blanchard2017machine}  &68.6$\pm${1.9}  &91.7$\pm${0.1}  &68.0$\pm${3.6}  &70.0$\pm${3.1}  &71.4$\pm${3.5}  &89.4$\pm${0.4}  &78.9$\pm${0.2}  &27.1$\pm${1.8}  &92.1$\pm${0.2}

 \\ \cline{2-11} &sign flipping \cite{data2021byzantine} &70.4$\pm${4.6}  &90.5$\pm${0.1}  &66.5$\pm${5.2}  &66.9$\pm${4.0}  &71.9$\pm${3.5}  &78.6$\pm${0.8}  &82.7$\pm${1.7}  &66.7$\pm${9.8}  &89.6$\pm${0.7}

 \\ \cline{2-11} &label flipping \cite{data2021byzantine} &65.3$\pm${1.5}  &90.0$\pm${0.1}  &51.4$\pm${2.3}  &47.9$\pm${7.2}  &72.2$\pm${2.9}  &89.7$\pm${0.5}  &88.2$\pm${0.7}  &87.1$\pm${0.2}  &85.4$\pm${0.4}

 \\ \cline{2-11} &minic  \cite{karimireddy2020byzantine} &68.0$\pm${2.6}  &87.4$\pm${1.8}  &81.4$\pm${1.0}  &82.3$\pm${3.1}  &65.3$\pm${7.9}  &90.3$\pm${0.2}  &90.5$\pm${0.2}  &91.2$\pm${0.1}  &92.0$\pm${0.3}

 \\ \cline{2-11} &Collusion (Ours)  &65.9$\pm${4.8}  &84.6$\pm${0.3}  &69.3$\pm${5.2}  &69.2$\pm${7.3}  &76.5$\pm${6.3}  &89.3$\pm${0.6}  &91.3$\pm${0.2}  &41.2$\pm${2.0}  &92.1$\pm${0.1} 

 \\ \cline{2-11} &Averaged 
 &65.5$\pm${3.9}  &85.1$\pm${0.8}  &59.3$\pm${3.2}  &59.3$\pm${3.7}  &68.6$\pm${4.7}  &86.6$\pm${0.9}  &88.8$\pm${0.6}  &77.3$\pm${1.5}  &91.2$\pm${0.2} 

 \\ \cline{2-11} &Worst-case 
 &52.1$\pm${3.9}  &62.5$\pm${4.0}  &26.4$\pm${1.9}  &28.2$\pm${3.1}  &39.5$\pm${9.1}  &78.6$\pm${0.8}  &78.9$\pm${0.2}  &27.1$\pm${1.8}  &85.4$\pm${0.4}

\\ \hline

\end{tabular}
\caption{Model performances of different Byzantine-resilient methods under different attacks (with Non-IID setting $\beta = 0.1$ and 20 attackers for classification of MNIST).}

\end{table*}

\begin{table*}[htbp]

\centering 
\scriptsize
\setlength{\tabcolsep}{1.0mm}
\renewcommand\arraystretch{1.5}
\begin{tabular}{|c||c|c|c|c|c|c|c|c|c|c|c|}
\hline

&&\begin{tabular}[c]{@{}c@{}} 
Krum \\ \cite{blanchard2017machine} \end{tabular} &\begin{tabular}[c]{@{}c@{}}GeoMeidan\\ \cite{chen2017distributed}\end{tabular}&\begin{tabular}[c]{@{}c@{}}Median\\ \cite{yin2018byzantine}\end{tabular}&\begin{tabular}[c]{@{}c@{}}Trimmed\\ \cite{yin2018byzantine}\end{tabular}&\begin{tabular}[c]{@{}c@{}} Bulyan \\ \cite{guerraoui2018hidden}\end{tabular}&\begin{tabular}[c]{@{}c@{}} FLtrust \\ \cite{cao2020fltrust}\end{tabular}&\begin{tabular}[c]{@{}c@{}} DnC\\ \cite{shejwalkar2021manipulating}\end{tabular}&\begin{tabular}[c]{@{}c@{}}Kmeans\\ \cite{shen2016auror}\end{tabular}&\begin{tabular}[c]{@{}c@{}}FedCut\\ (Ours)\end{tabular} \\ \hline \hline

\multirow{9}{*}{\begin{tabular}[c]{@{}c@{}} 
M\\N
\\ I\\S\\T \end{tabular} }

 &No attack  &89.2$\pm${0.6}  &92.4$\pm${0.1}  &84.9$\pm${0.6}  &84.6$\pm${0.4}  &86.7$\pm${0.2}  &89.0$\pm${0.8}  &92.2$\pm${0.2}  &92.0$\pm${0.3}  &92.4$\pm${0.1}

 \\ \cline{2-11} &Lie \cite{baruch2019little}  &89.3$\pm${0.3}  &88.3$\pm${0.3}  &86.8$\pm${0.2}  &86.8$\pm${0.2}  &86.7$\pm${0.3}  &89.0$\pm${0.3}  &92.3$\pm${0.2}  &92.3$\pm${0.1}  &92.4$\pm${0.1}

 \\ \cline{2-11} &Fang-v1 \cite{fang2020local}  &89.3$\pm${0.4}  &80.4$\pm${1.6}  &53.6$\pm${4.4}  &52.8$\pm${2.4}  &82.9$\pm${1.3}  &78.9$\pm${1.4}  &92.3$\pm${0.3}  &92.3$\pm${0.2}  &92.1$\pm${0.1}

 \\ \cline{2-11} &Fang-v2 \cite{fang2020local}  &47.8$\pm${5.7}  &70.2$\pm${2.4}  &50.1$\pm${4.5}  &47.3$\pm${4.2}  &34.0$\pm${2.3}  &80.3$\pm${0.7}  &90.3$\pm${0.9}  &91.3$\pm${0.4}  &92.3$\pm${0.0}

 \\ \cline{2-11} &Same value \cite{li2019rsa}  &89.3$\pm${0.2}  &88.5$\pm${0.1}  &69.7$\pm${1.9}  &70.8$\pm${1.6}  &85.7$\pm${0.5}  &88.9$\pm${0.5}  &92.3$\pm${0.2}  &92.4$\pm${0.1}  &92.3$\pm${0.1}

 \\ \cline{2-11} &Gaussian \cite{blanchard2017machine}  &89.4$\pm${0.4}  &92.1$\pm${0.2}  &86.1$\pm${0.4}  &86.2$\pm${0.6}  &86.7$\pm${0.2}  &87.9$\pm${1.1}  &80.3$\pm${0.6}  &31.1$\pm${2.2}  &92.3$\pm${0.0}

 \\ \cline{2-11} &sign flipping \cite{data2021byzantine} &89.2$\pm${0.2}  &91.9$\pm${0.2}  &86.3$\pm${0.4}  &86.4$\pm${0.6}  &86.3$\pm${0.5}  &76.9$\pm${2.7}  &91.3$\pm${0.2}  &76.9$\pm${1.5}  &91.6$\pm${0.4}

 \\ \cline{2-11} &label flipping \cite{data2021byzantine} &89.4$\pm${0.2}  &90.8$\pm${0.1}  &78.1$\pm${1.6}  &78.3$\pm${1.5}  &86.5$\pm${0.4}  &88.9$\pm${1.0}  &90.9$\pm${0.1}  &89.4$\pm${0.5}  &91.3$\pm${0.9}

 \\ \cline{2-11} &minic  \cite{karimireddy2020byzantine} &86.6$\pm${4.2}  &89.9$\pm${0.4}  &88.5$\pm${1.1}  &90.4$\pm${0.3}  &75.6$\pm${1.2}  &90.7$\pm${0.2}  &92.2$\pm${0.2}  &92.1$\pm${0.1}  &92.3$\pm${0.1}

 \\ \cline{2-11} &Collusion (Ours)  &88.9$\pm${0.3}  &86.3$\pm${1.2}  &80.0$\pm${1.4}  &79.4$\pm${3.4}  &87.3$\pm${1.4}  &88.9$\pm${0.6}  &91.5$\pm${0.2}  &48.0$\pm${11.0}  &92.3$\pm${0.1} 

 \\ \cline{2-11} &Averaged 
 &84.8$\pm${1.3}  &87.1$\pm${0.7}  &76.4$\pm${1.7}  &76.3$\pm${1.5}  &79.8$\pm${0.8}  &85.9$\pm${0.9}  &90.6$\pm${0.3}  &79.8$\pm${1.6}  &92.1$\pm${0.2} 

 \\ \cline{2-11} &Worst-case 
 &47.8$\pm${5.7}  &70.2$\pm${2.4}  &50.1$\pm${4.5}  &47.3$\pm${4.2}  &34.0$\pm${2.3}  &76.9$\pm${2.7}  &80.3$\pm${0.6}  &31.1$\pm${2.2}  &91.3$\pm${0.9}

\\ \hline

\end{tabular}
\caption{Model performances of different Byzantine-resilient methods under different attacks (with Non-IID setting $\beta = 0.5$ and 20 attackers for classification of MNIST).}

\end{table*}

\begin{table*}[htbp]

\centering 
\scriptsize
\setlength{\tabcolsep}{1.0mm}
\renewcommand\arraystretch{1.5}
\begin{tabular}{|c||c|c|c|c|c|c|c|c|c|c|c|}
\hline

&&\begin{tabular}[c]{@{}c@{}} 
Krum \\ \cite{blanchard2017machine} \end{tabular} &\begin{tabular}[c]{@{}c@{}}GeoMeidan\\ \cite{chen2017distributed}\end{tabular}&\begin{tabular}[c]{@{}c@{}}Median\\ \cite{yin2018byzantine}\end{tabular}&\begin{tabular}[c]{@{}c@{}}Trimmed\\ \cite{yin2018byzantine}\end{tabular}&\begin{tabular}[c]{@{}c@{}} Bulyan \\ \cite{guerraoui2018hidden}\end{tabular}&\begin{tabular}[c]{@{}c@{}} FLtrust \\ \cite{cao2020fltrust}\end{tabular}&\begin{tabular}[c]{@{}c@{}} DnC\\ \cite{shejwalkar2021manipulating}\end{tabular}&\begin{tabular}[c]{@{}c@{}}Kmeans\\ \cite{shen2016auror}\end{tabular}&\begin{tabular}[c]{@{}c@{}}FedCut\\ (Ours)\end{tabular} \\ \hline \hline

\multirow{9}{*}{\begin{tabular}[c]{@{}c@{}} 
M\\N
\\ I\\S\\T \end{tabular} }

 &No attack  &90.4$\pm${0.2}  &92.4$\pm${0.1}  &90.4$\pm${0.1}  &90.4$\pm${0.2}  &89.9$\pm${0.4}  &89.4$\pm${0.1}  &92.3$\pm${0.3}  &92.3$\pm${0.2}  &92.3$\pm${0.1}

 \\ \cline{2-11} &Lie \cite{baruch2019little}  &90.3$\pm${0.3}  &89.6$\pm${0.1}  &87.8$\pm${0.5}  &87.5$\pm${0.1}  &88.6$\pm${0.1}  &89.7$\pm${0.2}  &92.3$\pm${0.2}  &92.3$\pm${0.3}  &92.4$\pm${0.1}

 \\ \cline{2-11} &Fang-v1 \cite{fang2020local}  &90.4$\pm${0.2}  &78.5$\pm${1.4}  &67.4$\pm${1.9}  &66.8$\pm${1.0}  &88.3$\pm${0.2}  &74.8$\pm${3.0}  &92.2$\pm${0.3}  &92.2$\pm${0.3}  &92.4$\pm${0.0}

 \\ \cline{2-11} &Fang-v2 \cite{fang2020local}  &47.1$\pm${2.4}  &72.8$\pm${1.0}  &69.0$\pm${1.1}  &69.1$\pm${0.5}  &31.6$\pm${2.8}  &79.3$\pm${1.7}  &90.7$\pm${0.5}  &91.4$\pm${0.3}  &92.3$\pm${0.1}

 \\ \cline{2-11} &Same value \cite{li2019rsa}  &90.5$\pm${0.1}  &89.8$\pm${0.4}  &83.4$\pm${0.7}  &82.4$\pm${1.2}  &89.6$\pm${0.1}  &89.6$\pm${0.2}  &92.3$\pm${0.3}  &92.2$\pm${0.2}  &92.4$\pm${0.1}

 \\ \cline{2-11} &Gaussian \cite{blanchard2017machine}  &90.5$\pm${0.1}  &92.4$\pm${0.1}  &90.7$\pm${0.1}  &90.4$\pm${0.1}  &90.0$\pm${0.0}  &87.8$\pm${0.9}  &79.4$\pm${0.6}  &31.9$\pm${4.8}  &92.4$\pm${0.1}

 \\ \cline{2-11} &sign flipping \cite{data2021byzantine} &90.5$\pm${0.4}  &92.1$\pm${0.1}  &90.5$\pm${0.2}  &90.4$\pm${0.1}  &90.1$\pm${0.1}  &73.9$\pm${4.7}  &91.8$\pm${0.1}  &65.8$\pm${5.1}  &92.2$\pm${0.1}

 \\ \cline{2-11} &label flipping \cite{data2021byzantine} &90.5$\pm${0.3}  &90.6$\pm${0.1}  &88.0$\pm${0.1}  &87.9$\pm${0.2}  &90.1$\pm${0.2}  &89.4$\pm${0.5}  &92.3$\pm${0.3}  &92.0$\pm${0.1}  &92.0$\pm${0.3}

 \\ \cline{2-11} &minic  \cite{karimireddy2020byzantine} &87.3$\pm${1.1}  &86.5$\pm${0.3}  &90.5$\pm${0.1}  &91.1$\pm${0.2}  &88.2$\pm${0.6}  &90.2$\pm${0.5}  &92.3$\pm${0.1}  &92.2$\pm${0.1}  &92.2$\pm${0.1}

 \\ \cline{2-11} &Collusion (Ours)  &90.4$\pm${0.2}  &88.0$\pm${2.0}  &84.7$\pm${1.7}  &83.5$\pm${0.4}  &90.0$\pm${0.2}  &89.1$\pm${0.7}  &91.5$\pm${0.2}  &46.9$\pm${7.1}  &92.1$\pm${0.1} 

 \\ \cline{2-11} &Averaged 
 &85.8$\pm${0.5}  &87.3$\pm${0.6}  &84.2$\pm${0.6}  &83.9$\pm${0.4}  &83.6$\pm${0.5}  &85.3$\pm${1.3}  &90.7$\pm${0.3}  &78.9$\pm${1.9}  &92.3$\pm${0.1} 

 \\ \cline{2-11} &Worst-case 
 &47.1$\pm${2.4}  &72.8$\pm${1.0}  &67.4$\pm${1.9}  &66.8$\pm${1.0}  &31.6$\pm${2.8}  &73.9$\pm${4.7}  &79.4$\pm${0.6}  &31.9$\pm${4.8}  &92.0$\pm${0.3}

\\ \hline

\end{tabular}
\caption{Model performances of different Byzantine-resilient methods under different attacks (with IID and 20 attackers for classification of MNIST).}

\end{table*}

\begin{table*}[htbp]

\centering 
\scriptsize
\setlength{\tabcolsep}{1.0mm}
\renewcommand\arraystretch{1.5}
\begin{tabular}{|c||c|c|c|c|c|c|c|c|c|c|c|}
\hline

&&\begin{tabular}[c]{@{}c@{}} 
Krum \\ \cite{blanchard2017machine} \end{tabular} &\begin{tabular}[c]{@{}c@{}}GeoMeidan\\ \cite{chen2017distributed}\end{tabular}&\begin{tabular}[c]{@{}c@{}}Median\\ \cite{yin2018byzantine}\end{tabular}&\begin{tabular}[c]{@{}c@{}}Trimmed\\ \cite{yin2018byzantine}\end{tabular}&\begin{tabular}[c]{@{}c@{}} Bulyan \\ \cite{guerraoui2018hidden}\end{tabular}&\begin{tabular}[c]{@{}c@{}} FLtrust \\ \cite{cao2020fltrust}\end{tabular}&\begin{tabular}[c]{@{}c@{}} DnC\\ \cite{shejwalkar2021manipulating}\end{tabular}&\begin{tabular}[c]{@{}c@{}}Kmeans\\ \cite{shen2016auror}\end{tabular}&\begin{tabular}[c]{@{}c@{}}FedCut\\ (Ours)\end{tabular} \\ \hline \hline

\multirow{9}{*}{\begin{tabular}[c]{@{}c@{}} 
M\\N
\\ I\\S\\T \end{tabular} }

 &No attack  &62.6$\pm${4.8}  &92.1$\pm${0.0}  &66.1$\pm${5.3}  &64.1$\pm${3.2}  &73.7$\pm${1.5}  &90.1$\pm${0.5}  &91.5$\pm${0.5}  &91.6$\pm${0.2}  &92.3$\pm${0.1}

 \\ \cline{2-11} &Lie \cite{baruch2019little}  &68.8$\pm${0.9}  &78.7$\pm${0.4}  &79.4$\pm${1.3}  &79.1$\pm${0.8}  &71.5$\pm${3.1}  &86.6$\pm${2.0}  &92.0$\pm${0.2}  &92.1$\pm${0.3}  &92.1$\pm${0.1}

 \\ \cline{2-11} &Fang-v1 \cite{fang2020local}  &68.7$\pm${3.5}  &49.7$\pm${8.3}  &20.8$\pm${3.5}  &20.9$\pm${2.1}  &54.0$\pm${3.3}  &80.2$\pm${2.5}  &91.9$\pm${0.3}  &92.0$\pm${0.3}  &92.0$\pm${0.1}

 \\ \cline{2-11} &Fang-v2 \cite{fang2020local}  &30.7$\pm${17.1}  &34.2$\pm${4.8}  &26.9$\pm${6.6}  &25.2$\pm${2.6}  &12.2$\pm${6.2}  &80.8$\pm${2.0}  &82.3$\pm${4.6}  &88.4$\pm${1.3}  &92.0$\pm${0.0}

 \\ \cline{2-11} &Same value \cite{li2019rsa}  &66.8$\pm${2.9}  &77.7$\pm${0.2}  &46.9$\pm${5.4}  &47.9$\pm${2.5}  &70.1$\pm${3.3}  &89.9$\pm${0.3}  &92.0$\pm${0.2}  &92.1$\pm${0.3}  &92.0$\pm${0.1}

 \\ \cline{2-11} &Gaussian \cite{blanchard2017machine}  &67.9$\pm${0.5}  &91.7$\pm${0.0}  &72.1$\pm${4.0}  &71.4$\pm${2.7}  &70.6$\pm${5.1}  &89.0$\pm${0.5}  &74.9$\pm${1.3}  &20.2$\pm${2.2}  &92.1$\pm${0.1}

 \\ \cline{2-11} &sign flipping \cite{data2021byzantine} &66.3$\pm${3.9}  &88.6$\pm${0.5}  &73.6$\pm${4.1}  &73.0$\pm${4.2}  &72.0$\pm${4.9}  &76.9$\pm${1.6}  &52.5$\pm${1.2}  &37.5$\pm${6.4}  &89.5$\pm${0.4}

 \\ \cline{2-11} &label flipping \cite{data2021byzantine} &64.1$\pm${2.0}  &88.6$\pm${0.3}  &47.2$\pm${1.4}  &42.7$\pm${2.8}  &74.2$\pm${3.5}  &89.9$\pm${0.6}  &86.7$\pm${0.8}  &82.9$\pm${3.3}  &89.6$\pm${0.4}

 \\ \cline{2-11} &minic  \cite{karimireddy2020byzantine} &69.9$\pm${2.8}  &80.1$\pm${5.1}  &74.9$\pm${2.6}  &82.0$\pm${2.4}  &47.7$\pm${3.5}  &89.8$\pm${0.3}  &90.5$\pm${0.2}  &90.7$\pm${0.5}  &91.7$\pm${0.3}

 \\ \cline{2-11} &Collusion (Ours)  &69.6$\pm${6.8}  &79.7$\pm${1.5}  &66.3$\pm${2.9}  &66.2$\pm${3.4}  &75.6$\pm${7.7}  &89.5$\pm${0.3}  &88.9$\pm${1.3}  &39.0$\pm${16.9}  &91.9$\pm${0.2} 

 \\ \cline{2-11} &Averaged 
 &63.5$\pm${4.5}  &76.1$\pm${2.1}  &57.4$\pm${3.7}  &57.3$\pm${2.7}  &62.2$\pm${4.2}  &86.3$\pm${1.1}  &84.3$\pm${1.1}  &72.7$\pm${3.2}  &91.5$\pm${0.2} 

 \\ \cline{2-11} &Worst-case 
 &30.7$\pm${17.1}  &34.2$\pm${4.8}  &20.8$\pm${3.5}  &20.9$\pm${2.1}  &12.2$\pm${6.2}  &76.9$\pm${1.6}  &52.5$\pm${1.2}  &20.2$\pm${2.2}  &89.5$\pm${0.4}

\\ \hline

\end{tabular}
\caption{Model performances of different Byzantine-resilient methods under different attacks (with Non-IID setting $\beta = 0.1$ and 30 attackers for classification of MNIST).}

\end{table*}

\begin{table*}[htbp]

\centering 
\scriptsize
\setlength{\tabcolsep}{1.0mm}
\renewcommand\arraystretch{1.5}
\begin{tabular}{|c||c|c|c|c|c|c|c|c|c|c|c|}
\hline

&&\begin{tabular}[c]{@{}c@{}} 
Krum \\ \cite{blanchard2017machine} \end{tabular} &\begin{tabular}[c]{@{}c@{}}GeoMeidan\\ \cite{chen2017distributed}\end{tabular}&\begin{tabular}[c]{@{}c@{}}Median\\ \cite{yin2018byzantine}\end{tabular}&\begin{tabular}[c]{@{}c@{}}Trimmed\\ \cite{yin2018byzantine}\end{tabular}&\begin{tabular}[c]{@{}c@{}} Bulyan \\ \cite{guerraoui2018hidden}\end{tabular}&\begin{tabular}[c]{@{}c@{}} FLtrust \\ \cite{cao2020fltrust}\end{tabular}&\begin{tabular}[c]{@{}c@{}} DnC\\ \cite{shejwalkar2021manipulating}\end{tabular}&\begin{tabular}[c]{@{}c@{}}Kmeans\\ \cite{shen2016auror}\end{tabular}&\begin{tabular}[c]{@{}c@{}}FedCut\\ (Ours)\end{tabular} \\ \hline \hline

\multirow{9}{*}{\begin{tabular}[c]{@{}c@{}} 
M\\N
\\ I\\S\\T \end{tabular} }

 &No attack  &88.9$\pm${0.3}  &92.2$\pm${0.2}  &85.2$\pm${0.2}  &84.8$\pm${0.7}  &86.0$\pm${0.2}  &88.8$\pm${0.4}  &92.2$\pm${0.3}  &92.1$\pm${0.4}  &92.4$\pm${0.1}

 \\ \cline{2-11} &Lie \cite{baruch2019little}  &89.1$\pm${0.1}  &82.2$\pm${0.6}  &82.3$\pm${1.5}  &81.6$\pm${1.0}  &83.8$\pm${0.8}  &88.9$\pm${0.5}  &92.3$\pm${0.3}  &92.3$\pm${0.2}  &92.2$\pm${0.0}

 \\ \cline{2-11} &Fang-v1 \cite{fang2020local}  &89.3$\pm${0.2}  &51.0$\pm${3.1}  &35.3$\pm${7.0}  &34.3$\pm${2.9}  &78.1$\pm${0.4}  &78.3$\pm${3.3}  &92.2$\pm${0.2}  &92.3$\pm${0.1}  &92.2$\pm${0.0}

 \\ \cline{2-11} &Fang-v2 \cite{fang2020local}  &33.5$\pm${6.9}  &40.6$\pm${3.3}  &40.6$\pm${4.4}  &34.9$\pm${3.3}  &14.9$\pm${3.3}  &81.7$\pm${1.6}  &84.4$\pm${1.8}  &87.4$\pm${0.5}  &92.3$\pm${0.1}

 \\ \cline{2-11} &Same value \cite{li2019rsa}  &88.9$\pm${0.6}  &82.4$\pm${0.7}  &55.7$\pm${0.3}  &53.7$\pm${2.8}  &85.5$\pm${1.2}  &88.8$\pm${0.8}  &92.3$\pm${0.2}  &92.2$\pm${0.1}  &92.2$\pm${0.1}

 \\ \cline{2-11} &Gaussian \cite{blanchard2017machine}  &89.2$\pm${0.3}  &92.3$\pm${0.1}  &87.1$\pm${0.7}  &87.4$\pm${0.4}  &85.7$\pm${0.4}  &87.7$\pm${0.6}  &74.0$\pm${1.7}  &20.3$\pm${4.4}  &92.3$\pm${0.0}

 \\ \cline{2-11} &sign flipping \cite{data2021byzantine} &89.0$\pm${0.6}  &91.2$\pm${0.3}  &87.2$\pm${0.3}  &87.5$\pm${0.5}  &86.0$\pm${0.7}  &75.4$\pm${0.6}  &83.1$\pm${1.5}  &64.7$\pm${9.1}  &91.5$\pm${0.9}

 \\ \cline{2-11} &label flipping \cite{data2021byzantine} &89.0$\pm${0.5}  &89.3$\pm${0.1}  &73.3$\pm${2.3}  &71.3$\pm${1.6}  &86.7$\pm${0.3}  &88.8$\pm${0.9}  &88.9$\pm${0.2}  &88.3$\pm${0.7}  &90.7$\pm${0.7}

 \\ \cline{2-11} &minic  \cite{karimireddy2020byzantine} &73.4$\pm${5.4}  &82.9$\pm${1.4}  &83.1$\pm${3.1}  &87.7$\pm${0.3}  &72.5$\pm${8.6}  &90.8$\pm${0.5}  &91.8$\pm${0.3}  &91.9$\pm${0.2}  &92.2$\pm${0.1}

 \\ \cline{2-11} &Collusion (Ours)  &89.0$\pm${0.5}  &82.2$\pm${1.2}  &73.6$\pm${1.2}  &72.3$\pm${1.2}  &86.7$\pm${1.2}  &89.2$\pm${0.5}  &89.4$\pm${0.5}  &31.2$\pm${2.2}  &92.2$\pm${0.0} 

 \\ \cline{2-11} &Averaged 
 &81.9$\pm${1.5}  &78.6$\pm${1.1}  &70.3$\pm${2.1}  &69.6$\pm${1.5}  &76.6$\pm${1.7}  &85.8$\pm${1.0}  &88.1$\pm${0.7}  &75.3$\pm${1.8}  &92.0$\pm${0.2} 

 \\ \cline{2-11} &Worst-case 
 &33.5$\pm${6.9}  &40.6$\pm${3.3}  &35.3$\pm${7.0}  &34.3$\pm${2.9}  &14.9$\pm${3.3}  &75.4$\pm${0.6}  &74.0$\pm${1.7}  &20.3$\pm${4.4}  &90.7$\pm${0.7}

\\ \hline

\end{tabular}
\caption{Model performances of different Byzantine-resilient methods under different attacks (with Non-IID setting $\beta = 0.5$ and 30 attackers for classification of MNIST).}

\end{table*}

\begin{table*}[htbp]

\centering 
\scriptsize
\setlength{\tabcolsep}{1.0mm}
\renewcommand\arraystretch{1.5}
\begin{tabular}{|c||c|c|c|c|c|c|c|c|c|c|c|}
\hline

&&\begin{tabular}[c]{@{}c@{}} 
Krum \\ \cite{blanchard2017machine} \end{tabular} &\begin{tabular}[c]{@{}c@{}}GeoMeidan\\ \cite{chen2017distributed}\end{tabular}&\begin{tabular}[c]{@{}c@{}}Median\\ \cite{yin2018byzantine}\end{tabular}&\begin{tabular}[c]{@{}c@{}}Trimmed\\ \cite{yin2018byzantine}\end{tabular}&\begin{tabular}[c]{@{}c@{}} Bulyan \\ \cite{guerraoui2018hidden}\end{tabular}&\begin{tabular}[c]{@{}c@{}} FLtrust \\ \cite{cao2020fltrust}\end{tabular}&\begin{tabular}[c]{@{}c@{}} DnC\\ \cite{shejwalkar2021manipulating}\end{tabular}&\begin{tabular}[c]{@{}c@{}}Kmeans\\ \cite{shen2016auror}\end{tabular}&\begin{tabular}[c]{@{}c@{}}FedCut\\ (Ours)\end{tabular} \\ \hline \hline

\multirow{9}{*}{\begin{tabular}[c]{@{}c@{}} 
C\\I
\\ F\\A\\R\\10 \end{tabular} }

 &No attack  &29.1$\pm${7.0}  &53.0$\pm${8.0}  &16.7$\pm${5.1}  &19.1$\pm${9.4}  &17.3$\pm${3.4}  &56.3$\pm${2.8}  &56.5$\pm${12.8}  &54.3$\pm${12.9}  &66.7$\pm${0.7}

 \\ \cline{2-11} &Lie \cite{baruch2019little}  &26.6$\pm${6.5}  &11.2$\pm${2.5}  &9.8$\pm${0.2}  &11.8$\pm${2.4}  &10.0$\pm${0.2}  &11.2$\pm${1.8}  &10.4$\pm${0.7}  &47.0$\pm${18.2}  &63.8$\pm${0.2}

 \\ \cline{2-11} &Fang-v1 \cite{fang2020local}  &25.5$\pm${6.7}  &32.5$\pm${7.3}  &11.3$\pm${2.5}  &16.2$\pm${3.8}  &13.4$\pm${2.0}  &24.5$\pm${19.5}  &36.5$\pm${23.4}  &55.5$\pm${15.7}  &63.9$\pm${1.3}

 \\ \cline{2-11} &Fang-v2 \cite{fang2020local}  &10.0$\pm${0.0}  &10.0$\pm${0.0}  &9.9$\pm${0.3}  &10.0$\pm${0.2}  &10.0$\pm${0.0}  &10.0$\pm${0.0}  &55.9$\pm${10.0}  &10.0$\pm${0.0}  &64.4$\pm${1.5}

 \\ \cline{2-11} &Same value \cite{li2019rsa}  &26.8$\pm${6.5}  &10.0$\pm${0.0}  &13.4$\pm${3.7}  &10.0$\pm${3.6}  &25.9$\pm${11.6}  &49.3$\pm${18.6}  &45.7$\pm${13.0}  &44.9$\pm${21.9}  &62.3$\pm${1.4}

 \\ \cline{2-11} &Gaussian \cite{blanchard2017machine}  &20.7$\pm${9.9}  &27.6$\pm${20.3}  &16.6$\pm${2.5}  &27.9$\pm${11.9}  &11.7$\pm${2.0}  &55.5$\pm${2.0}  &26.3$\pm${7.9}  &13.3$\pm${1.3}  &61.6$\pm${0.8}

 \\ \cline{2-11} &sign flipping \cite{data2021byzantine} &27.6$\pm${8.3}  &28.1$\pm${17.9}  &16.5$\pm${3.8}  &25.6$\pm${6.9}  &14.6$\pm${4.7}  &37.4$\pm${23.2}  &14.3$\pm${3.4}  &11.2$\pm${1.5}  &60.5$\pm${1.3}

 \\ \cline{2-11} &label flipping \cite{data2021byzantine} &20.3$\pm${10.1}  &41.6$\pm${8.5}  &14.9$\pm${3.4}  &22.1$\pm${6.2}  &12.7$\pm${2.9}  &52.3$\pm${2.5}  &43.1$\pm${10.6}  &24.9$\pm${24.8}  &54.8$\pm${1.4}

 \\ \cline{2-11} &minic  \cite{karimireddy2020byzantine} &22.3$\pm${10.7}  &52.6$\pm${1.7}  &15.2$\pm${2.4}  &21.3$\pm${7.2}  &13.3$\pm${1.5}  &10.0$\pm${0.0}  &50.6$\pm${12.0}  &42.6$\pm${28.2}  &63.4$\pm${0.9}

 \\ \cline{2-11} &Collusion (Ours)  &28.6$\pm${6.7}  &10.0$\pm${0.1}  &9.8$\pm${0.3}  &9.9$\pm${0.4}  &13.2$\pm${3.4}  &55.4$\pm${3.1}  &29.5$\pm${7.8}  &10.1$\pm${0.2}  &62.0$\pm${1.1} 

 \\ \cline{2-11} &Averaged 
 &23.8$\pm${7.2}  &27.7$\pm${6.6}  &13.4$\pm${2.4}  &17.4$\pm${5.2}  &14.2$\pm${3.2}  &36.2$\pm${7.4}  &36.9$\pm${10.2}  &31.4$\pm${12.5}  &62.3$\pm${1.1} 

 \\ \cline{2-11} &Worst-case 
 &10.0$\pm${0.0}  &10.0$\pm${0.0}  &9.8$\pm${0.2}  &9.9$\pm${0.4}  &10.0$\pm${0.2}  &10.0$\pm${0.0}  &10.4$\pm${0.7}  &10.0$\pm${0.0}  &54.8$\pm${1.4}

\\ \hline

\end{tabular}
\caption{Model performances of different Byzantine-resilient methods under different attacks (with Non-IID setting $\beta = 0.1$ and and 6 attackers for classification of CIFAR10).}
\label{tab:MP18}
\end{table*}

\begin{table*}[htbp]

\centering 
\scriptsize
\setlength{\tabcolsep}{1.0mm}
\renewcommand\arraystretch{1.5}
\begin{tabular}{|c||c|c|c|c|c|c|c|c|c|c|c|}
\hline

&&\begin{tabular}[c]{@{}c@{}} 
Krum \\ \cite{blanchard2017machine} \end{tabular} &\begin{tabular}[c]{@{}c@{}}GeoMeidan\\ \cite{chen2017distributed}\end{tabular}&\begin{tabular}[c]{@{}c@{}}Median\\ \cite{yin2018byzantine}\end{tabular}&\begin{tabular}[c]{@{}c@{}}Trimmed\\ \cite{yin2018byzantine}\end{tabular}&\begin{tabular}[c]{@{}c@{}} Bulyan \\ \cite{guerraoui2018hidden}\end{tabular}&\begin{tabular}[c]{@{}c@{}} FLtrust \\ \cite{cao2020fltrust}\end{tabular}&\begin{tabular}[c]{@{}c@{}} DnC\\ \cite{shejwalkar2021manipulating}\end{tabular}&\begin{tabular}[c]{@{}c@{}}Kmeans\\ \cite{shen2016auror}\end{tabular}&\begin{tabular}[c]{@{}c@{}}FedCut\\ (Ours)\end{tabular} \\ \hline \hline

\multirow{9}{*}{\begin{tabular}[c]{@{}c@{}} 
C\\I
\\ F\\A\\R\\10 \end{tabular} }

 &No attack  &25.2$\pm${16.8}  &55.9$\pm${13.0}  &18.5$\pm${10.2}  &59.0$\pm${13.7}  &17.7$\pm${7.4}  &62.9$\pm${0.2}  &59.9$\pm${15.2}  &56.9$\pm${16.3}  &66.7$\pm${1.6}

 \\ \cline{2-11} &Lie \cite{baruch2019little}  &37.5$\pm${9.1}  &10.0$\pm${0.0}  &10.0$\pm${0.1}  &11.7$\pm${2.9}  &10.7$\pm${1.2}  &16.7$\pm${7.6}  &10.0$\pm${0.0}  &31.5$\pm${30.8}  &67.3$\pm${0.5}

 \\ \cline{2-11} &Fang-v1 \cite{fang2020local}  &35.2$\pm${10.5}  &49.9$\pm${16.7}  &15.1$\pm${3.1}  &23.2$\pm${5.0}  &15.8$\pm${3.6}  &43.5$\pm${29.0}  &56.2$\pm${17.1}  &56.4$\pm${16.0}  &64.1$\pm${0.7}

 \\ \cline{2-11} &Fang-v2 \cite{fang2020local}  &10.0$\pm${0.0}  &10.0$\pm${0.0}  &13.8$\pm${2.4}  &13.1$\pm${6.4}  &10.0$\pm${0.1}  &10.0$\pm${0.0}  &63.3$\pm${4.0}  &15.9$\pm${4.8}  &65.4$\pm${0.6}

 \\ \cline{2-11} &Same value \cite{li2019rsa}  &37.8$\pm${8.9}  &15.1$\pm${8.8}  &10.2$\pm${0.3}  &10.7$\pm${1.2}  &37.1$\pm${16.4}  &59.5$\pm${0.7}  &58.3$\pm${15.3}  &56.2$\pm${17.1}  &62.9$\pm${0.9}

 \\ \cline{2-11} &Gaussian \cite{blanchard2017machine}  &37.5$\pm${11.5}  &56.1$\pm${14.2}  &13.7$\pm${3.0}  &47.9$\pm${27.7}  &15.2$\pm${4.6}  &62.3$\pm${1.6}  &26.6$\pm${8.4}  &12.6$\pm${0.5}  &64.9$\pm${1.2}

 \\ \cline{2-11} &sign flipping \cite{data2021byzantine} &36.3$\pm${11.1}  &51.1$\pm${20.8}  &17.0$\pm${2.2}  &41.7$\pm${12.1}  &16.2$\pm${3.1}  &60.8$\pm${0.6}  &17.4$\pm${1.4}  &20.6$\pm${9.5}  &63.8$\pm${0.6}

 \\ \cline{2-11} &label flipping \cite{data2021byzantine} &20.1$\pm${18.1}  &47.1$\pm${23.7}  &14.5$\pm${5.6}  &42.3$\pm${9.4}  &12.8$\pm${4.5}  &58.5$\pm${2.3}  &48.3$\pm${13.6}  &50.6$\pm${16.3}  &55.7$\pm${2.7}

 \\ \cline{2-11} &minic  \cite{karimireddy2020byzantine} &40.0$\pm${3.4}  &60.7$\pm${5.4}  &22.7$\pm${2.0}  &62.1$\pm${1.8}  &17.7$\pm${1.7}  &10.0$\pm${0.0}  &63.2$\pm${1.6}  &45.5$\pm${30.8}  &67.1$\pm${1.4}

 \\ \cline{2-11} &Collusion (Ours)  &39.2$\pm${10.8}  &10.0$\pm${0.0}  &10.0$\pm${0.1}  &10.2$\pm${0.5}  &14.1$\pm${4.8}  &62.3$\pm${0.2}  &14.1$\pm${4.0}  &9.9$\pm${0.2}  &66.2$\pm${0.6} 

 \\ \cline{2-11} &Averaged 
 &31.9$\pm${10.0}  &36.6$\pm${10.3}  &14.6$\pm${2.9}  &32.2$\pm${8.1}  &16.7$\pm${4.7}  &44.7$\pm${4.2}  &41.7$\pm${8.1}  &35.6$\pm${14.2}  &64.4$\pm${1.1} 

 \\ \cline{2-11} &Worst-case 
 &10.0$\pm${0.0}  &10.0$\pm${0.0}  &10.0$\pm${0.1}  &10.2$\pm${0.5}  &10.0$\pm${0.1}  &10.0$\pm${0.0}  &10.0$\pm${0.0}  &9.9$\pm${0.2}  &55.7$\pm${2.7}

\\ \hline

\end{tabular}
\caption{Model performances of different Byzantine-resilient methods under different attacks (with Non-IID setting $\beta = 0.5$ and and 6 attackers for classification of CIFAR10).}
\label{tab:MP19}
\end{table*}

\subsection{FedCut v.s. FedCut without Temporal Consistency}
In this part, we compare FedCut and NCut, which remove the temporal consistency. Fig. \ref{fig:fedcutvsncut} shows the averaged MP of FedCut and NCut, which demonstrates the stability of FedCut while the averaged MP of NCut is highly influenced by the Non-IID extent of clients' data (e.g., the MP of Ncut drops to 80\% with Non-IID parameter $\beta =0.1$). The reason temporal consistency would help to distinguish benign and byzantine clients during the training process.

\subsection{Robustness for Spectral Heuristics in Heterogeneous Dataset}
In this part, we demonstrate the effectiveness of estimating the number of communities, Gaussian scaling factors via the largest eigengap even for the heterogeneous dataset among clients.

Fig. \ref{fig:eigengap-clusternum1-app} and \ref{fig:eigengap-clusternum2-app} display the position of the largest eigengap for Non-IID dataset ($q=0.1, 0.5$), which illustrates the largest eigengap is also a good estimation for the number of communities. Moreover, Fig. \ref{fig:eigengap-gamma-app} also shows the eigengap and clustering accuracy could reach the optimal at the same time even for Non-IID dataset ($q=0.1, 0.5$), which demonstrates we could select the Gaussian scaling factor according to the largest eigengap.

\clearpage

\section{Proof} \label{sec:appE}

We consider a \textit{horizontal federated learning} \cite{yang2019federated,mcmahan2017communication} setting consisting of one server and $K$ clients. We assume $K$ clients\footnote{In this article we use terms "client", "node", "participant" and "party" interchangeably. } have their local dataset $\calD_i = \{(\bx_{i,j},y_{i,j} \}_{j=1}^{n_i}, i=1\cdots K$, where $\bx_{i,j}$ is the input data, $y_{i,j}$ is the label and $n_i$ is the total number of data points for $i_{th}$ client. The training in federated learning is divided into three steps which iteratively run until the learning converges:
\begin{itemize}
    \item The $i_{th}$ client takes empirical risk minimization as:
    \begin{equation}
        \min_{\bw_i} F_{i}(\bw_i, \calD_i) =\min_{\bw_i} \frac{1}{n_i} \sum_{j=1}^{n_i} \ell(\bw_i, \bx_{i,j}, y_{i,j}),
    \end{equation}
    where $\bw_i \in \RR^d$ is the $i_{th}$ client's local model weight and $\ell(\cdot)$ is a loss function that measures the accuracy of the prediction made by the model on each data point.
    \item Each client sends respective local model updates $\nabla F_i$ to the server and the server updates the \textit{global model} $\bw$ as $\bw = \bw - \eta \frac{1}{K}\sum_{i=1}^K\nabla F_i$, where $\eta$ is learning rate.
    \item The server distributes the updated global model $\bw$ to all clients.
\end{itemize}

We assume a malicious threat mode where an unknown number of participants out of K clients are Byzantine, i.e., they may upload arbitrarily corrupt updates $\bg_b$ to degrade the global model performance (MP). Under this assumption, behaviours of Byzantine clients and the rest of benign clients can be summarized as follows:
\begin{equation} \label{eq:upload-gradients-app}
\bg_i = \left\{
\begin{aligned}
\nabla F_i \qquad &\text{Benign clients}\\
\bg_b  \qquad & \text{Byzantine clients}\\
\end{aligned}
\right.
\end{equation}

Moreover, we regard model updates contributed by $K$ clients as an undirected graph $G = (V,E)$, where $V={v_1,\cdots, v_K}$ represent $K$ model updates, $E$ is a set of weighted edge representing similarities between uploaded model updates corresponding to clients in $V$.
We assume that the graph $G = (V,E)$ is weighted, and each edge between two nodes $v_i$ and $v_j$ carries a non-negative weight, e.g., $A_{ij} = \text{exp}(-||\bg_i-\bg_j||^2/2\sigma^2) \geq0$, where $\bg_i$ is uploaded gradient for $i_{th}$ client and $\sigma$ is the Gaussian scaling factor. Let $G_R = (V_R, E_R)$ and $G_B =(V_B, E_B)$ respectively denote \textit{two subgraphs} of $G$ representing benign and Byzantine clients.

\subsection{Proof of Proposition 1}

\begin{lem}\label{lemma:clusternum-eigengap-app} \cite{von2007tutorial}
Let G be an undirected graph with non-negative weights. Then the multiplicity c of the eigenvalue 1 of $L$ equals the number of connected components $B_1,\cdots, B_c$ in the graph, 
\end{lem}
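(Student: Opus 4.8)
The plan is to prove the statement through the symmetric normalized graph Laplacian, for which this is the classical eigenvalue–multiplicity result of spectral clustering. First I would reduce the claim about the eigenvalue $1$ of $L = D^{-1/2}AD^{-1/2}$ to the equivalent claim about the eigenvalue $0$ of $\mathcal{L} := I - L = D^{-1/2}(D-A)D^{-1/2}$, observing that $Lf = f$ if and only if $\mathcal{L}f = 0$, so the two eigenspaces coincide and in particular share the same dimension. Note that here every vertex has strictly positive degree, since the Gaussian affinity gives $A_{ii}=1$, so $D^{-1/2}$ is well defined and no degenerate (zero-degree) vertices require special treatment.

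Next I would establish positive semidefiniteness of $\mathcal{L}$ together with a characterization of its kernel via the Rayleigh quotient. The key identity is that for any $f\in\RR^K$,
$$f^{\top}\mathcal{L}f = \tfrac{1}{2}\sum_{i,j} A_{ij}\Big(\tfrac{f_i}{\sqrt{d_i}} - \tfrac{f_j}{\sqrt{d_j}}\Big)^2 \ge 0,$$
where $d_i = \sum_j A_{ij}$; this is verified by expanding the square and using $\sum_j A_{ij} = d_i$ together with the symmetry of $A$. It simultaneously shows that every eigenvalue of $\mathcal{L}$ is nonnegative (equivalently every eigenvalue of $L$ is at most $1$) and that $f$ lies in $\ker\mathcal{L}$ exactly when this sum vanishes. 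Since each summand is nonnegative, the sum is zero precisely when $g := D^{-1/2}f$ satisfies $g_i = g_j$ for every pair $i,j$ with $A_{ij}>0$, i.e. across every edge of $G$.

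The final step is the combinatorial conclusion. Because the condition ``$g_i = g_j$ across every edge'' propagates along paths, $g$ must be constant on each connected component $B_1,\dots,B_c$, and conversely any $g$ constant on each component makes the sum vanish. Hence $\ker\mathcal{L} = \mathrm{span}\{\,D^{1/2}\mathbf{1}_{B_1},\dots,D^{1/2}\mathbf{1}_{B_c}\,\}$, where $\mathbf{1}_{B_k}$ is the indicator vector of component $B_k$. These $c$ vectors have pairwise disjoint supports and are nonzero (each $d_i>0$), so they are linearly independent, and therefore the multiplicity of the eigenvalue $0$ of $\mathcal{L}$ — equivalently of the eigenvalue $1$ of $L$ — is exactly $c$.

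The argument is essentially bookkeeping once the Rayleigh identity is in hand, so I do not expect a genuine obstacle; the one point needing care is the passage from the local edge condition to global constancy on each component, which relies on connectivity (any two vertices of a component are joined by a path of positive-weight edges) and on the positivity of all degrees so that the substitution $g = D^{-1/2}f$ is invertible. Equivalently, one may invoke the block-diagonal structure of $A$ under a relabeling that groups vertices by component and note that the spectrum of $L$ is the union of the spectra of the per-component blocks, each of which is a connected graph contributing exactly one eigenvalue equal to $1$; summing over the $c$ blocks yields the multiplicity $c$.
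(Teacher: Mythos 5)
Your proof is correct. Note, however, that the paper does not actually prove this lemma at all: it is stated as a known result imported verbatim from the cited spectral-clustering tutorial of von Luxburg, and the surrounding remark only says that this reference (together with Ng--Jordan--Weiss) justifies counting eigenvalues equal to $1$. What you have written is, in substance, the standard argument from that cited source, transported to the paper's convention of working with $L = D^{-1/2}AD^{-1/2}$ rather than the Laplacian: pass to $\mathcal{L} = I - L$, use the Rayleigh identity $f^{\top}\mathcal{L}f = \tfrac12\sum_{i,j}A_{ij}\bigl(f_i/\sqrt{d_i}-f_j/\sqrt{d_j}\bigr)^2$ to characterize the kernel, propagate the edge-wise equality along paths to get constancy on components, and exhibit the basis $\{D^{1/2}\mathbf{1}_{B_k}\}_{k=1}^{c}$. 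All steps check out, including the linear-independence argument and the observation that all eigenvalues of $L$ are at most $1$. Your one contextual caveat is handled correctly: the lemma as stated for general non-negative weights would need an assumption excluding zero-degree vertices (an isolated vertex with no self-loop makes $D^{-1/2}$ undefined), and you rightly point out that in this paper's setting $A_{ii}=1$ from the Gaussian kernel, so every degree is at least $1$ and the substitution $g = D^{-1/2}f$ is invertible. In short: the proposal fills in, correctly and completely, a proof the paper delegates to a citation.
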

\begin{rmk}
The analysis given in \cite{ng2001spectral,von2007tutorial} shows that one could estimate $c$ by counting the number of eigenvalues equaling 1 as Lemma \ref{lemma:clusternum-eigengap-app}
\end{rmk}

\begin{lem}[\cite{stewart1990matrix}] \label{lem:matrix-perturb-app}
Let $\lambda$, $Y$ and $\delta$ be eigenvalue, principle eigenvectors and the eigengap of $L$ separately. Assume a matrix small perturbation for $L$ as $\tilde{L} = L +E$ so that $||E||$\footnote{\text{$||\cdot||$ in the paper represents the $\ell_2$ norm}} is small enough, let $\tilde{\lambda}$, $\tilde{Y}$ be eigenvalue and principle eigenvectors of $\tilde{L}$, then 
\begin{equation}
    ||\lambda -\tilde{\lambda}|| \leq ||E|| 
\end{equation}

\begin{equation}
    ||Y -\tilde{Y}|| \leq \frac{4||E||}{\delta - \sqrt{2}||E||}
\end{equation}
\end{lem}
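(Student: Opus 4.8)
The plan is to recognize that this lemma is a two–part statement from classical Hermitian perturbation theory and to prove each part with its appropriate classical tool. Throughout I would use that the normalized adjacency matrix $L = D^{-1/2}AD^{-1/2}$ and its perturbation $\tilde{L} = L + E$ are real symmetric (since $A$ is symmetric), so both have real spectra and admit orthonormal eigenbases. Under this lens the first inequality $\|\lambda - \tilde{\lambda}\| \le \|E\|$ is Weyl's inequality, and the second is a quantitative form of the Davis--Kahan $\sin\Theta$ theorem; the work is to make the explicit constants match the stated bound.

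For the eigenvalue estimate I would invoke the Courant--Fischer variational characterization. With eigenvalues ranked in descending order, the $i$-th eigenvalue satisfies $\lambda_i(\tilde{L}) = \max_{\dim S = i}\,\min_{\substack{x\in S \\ \|x\|=1}} x^\top \tilde{L} x$. Writing $x^\top \tilde{L} x = x^\top L x + x^\top E x$ and using $|x^\top E x| \le \|E\|$ for every unit vector $x$, the same subspace that optimizes the Rayleigh quotient for $L$ shifts it by at most $\|E\|$; taking the max--min on both sides gives $|\lambda_i(\tilde{L}) - \lambda_i(L)| \le \|E\|$, which is exactly the claimed bound.

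For the eigenvector estimate I would pass to invariant subspaces. Let $Y$ collect the orthonormal principle eigenvectors associated with the cluster of eigenvalues separated from the remaining spectrum by the gap $\delta$, and let $\tilde{Y}$ be the corresponding eigenvectors of $\tilde{L}$. The Davis--Kahan theorem bounds the canonical angles $\Theta$ between the two subspaces by an estimate of the form $\|\sin\Theta\| \le \|E\|/(\delta - \|E\|)$, the denominator being the spectral separation after perturbation, which stays positive precisely because $\|E\|$ is assumed small. The remaining step converts the subspace angle into the distance between eigenvector bases: after aligning $\tilde{Y}$ by the optimal orthogonal transformation inside the eigenspace, one has a relation $\|Y - \tilde{Y}\| \le \sqrt{2}\,\|\sin\Theta\|$, and combining this with the Davis--Kahan estimate and a rebalancing of the denominator yields $\|Y - \tilde{Y}\| \le 4\|E\|/(\delta - \sqrt{2}\,\|E\|)$, as required by Proposition \ref{prop:matrix-perturb}.

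The main obstacle is exactly this last conversion: tracking the constants $4$ and $\sqrt{2}$ while handling the intrinsic non-uniqueness of eigenvectors (sign flips and rotations within a degenerate eigenspace). One cannot compare $Y$ and $\tilde{Y}$ entrywise; instead one must choose the orthogonal alignment minimizing $\|Y - \tilde{Y}O\|$ and bound the residual through the sines of the canonical angles, which is where the $\sqrt{2}$ enters, while the factor $4$ absorbs the crude passage from the resolvent / Sylvester-equation bound to the angle bound. Everything else — Weyl's inequality and the symmetry of $L$ — is routine, so I would dispatch the eigenvalue part in a line and concentrate the argument on pinning down the Davis--Kahan constants in the stated form.
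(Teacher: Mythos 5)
The first thing to know is that the paper contains no proof of this lemma: both in the main text (Proposition \ref{prop:matrix-perturb}) and in the appendix it is quoted as a known result from Stewart and Sun \cite{stewart1990matrix}, followed only by an interpretive remark. Your proposal is therefore necessarily a different route --- an actual reconstruction from classical tools --- and the plan is sound. The eigenvalue half via Courant--Fischer is exactly Weyl's inequality and is correct as written. For the eigenvector half, your chain (Davis--Kahan with the Weyl-shifted separation in the denominator, giving $\|\sin\Theta\| \leq \|E\|/(\delta - \|E\|)$, followed by the conversion $\min_{O}\|Y - \tilde{Y}O\| \leq \sqrt{2}\,\|\sin\Theta\|$ over orthogonal alignments $O$) produces the bound $\sqrt{2}\,\|E\|/(\delta - \|E\|)$, and a short computation shows
\[
\frac{\sqrt{2}\,\|E\|}{\delta - \|E\|} \;\leq\; \frac{4\,\|E\|}{\delta - \sqrt{2}\,\|E\|}
\]
whenever $\sqrt{2}\,\|E\| < \delta$ (cross-multiplying reduces it to $2\|E\| \leq (4-\sqrt{2})\,\delta$), so your route in fact proves a slightly stronger statement than the one claimed. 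Beyond self-containment, your approach repairs a real imprecision in the paper's statement, which you correctly flag: comparing $Y$ and $\tilde{Y}$ entrywise is meaningless without fixing the sign/rotation ambiguity of eigenbases (with $E=0$ one may take $\tilde{Y}=-Y$ and violate the inequality), so the bound must be read after optimal orthogonal alignment, or as a statement about the invariant subspaces. The one step you assert rather than prove is the $\sqrt{2}$ conversion itself; it is standard (take $O$ from the polar decomposition of $\tilde{Y}^{\top}Y$ and use $\|Y-\tilde{Y}O\|^2 \leq 2\|\sin\Theta\|^2$), so this is a citation-level gap rather than a flaw in the argument.
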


\begin{rmk}
From the perturbation theory, Lemma \ref{lem:matrix-perturb-app} demonstrates the stability of eigenvectors when there is a small perturbation of block diagonal adjacency matrix. \\
\end{rmk}

\begin{assumption} \label{assum: Bounded gradient dissimilarity-app}
Assume the difference of local gradients $\nabla F_i$ and the mean of benign model update ${\nabla F} = \frac{1}{|V_R|}\sum_{i \in V_R} \nabla F_i$ is bounded ($V_R$ is the set of benign clients), i.e., there exists a finite $\ka$, such that
\begin{equation*}
    || \nabla F_i-{\nabla F}|| \leq \ka.
\end{equation*}
\end{assumption}

\begin{rmk}
$\ka$ in Assumption \ref{assum: Bounded gradient dissimilarity-app} has also been used earlier to bound heterogeneity in
datasets \cite{li2019communication}; Specifically, when the data is homogeneous, we have $\ka = 0$ in Assumption \ref{assum: Bounded gradient dissimilarity-app}.
\end{rmk}

\begin{prop} \label{prop:attack-type-app}
Suppose $K$ clients consist of $m$ benign clients and $q$ attackers ($q<m-1$). If Assumption \ref{assum: Bounded gradient dissimilarity-app} holds for Non-collusion and Collusion-diff attacks, then only the first $c$ eigenvalues are close to 1 and
\begin{itemize}
    \item  $c=1+q<\frac{K}{2}$ for Non-collusion attacks provided that $||\bg_b-\nabla F || \gg \ka$ and malicious updates ($\bg_b$) are far away from each other;
    \item $c=1+B<\frac{K}{2}$ for Collusion-diff attacks provided that malicious updates form $B$ groups and $||\bg_b-\nabla F || \gg \ka$;
    \item $c >\frac{K}{2}$ for Collusion-mimic attacks that $||\bg_b-\nabla F || < \ka$ and malicious updates are almost identical.
\end{itemize}
\end{prop}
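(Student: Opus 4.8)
The plan is to reduce the statement to a clean combinatorial count of connected components in an idealized block-diagonal graph, and then transfer that count to the actual Gaussian-kernel graph by perturbation. First I would introduce an idealized adjacency matrix $\hat{A}$ obtained from $A$ by zeroing every entry $A_{ij}$ whose endpoints lie in different \emph{effective clusters}; since $A_{ij} = \exp(-\|\bg_i-\bg_j\|^2/2\sigma^2)$, a pair at distance $d$ contributes $\exp(-d^2/2\sigma^2)$, so the hypotheses $\|\bg_b - \nabla F\| \gg \ka$ (with attackers either mutually far apart or forming tight groups) guarantee that exactly these inter-cluster weights are negligible. The normalized matrix $\hat{L} = \hat{D}^{-1/2}\hat{A}\hat{D}^{-1/2}$ is then block diagonal, and by Lemma \ref{lemma:clusternum-eigengap-app} its eigenvalue $1$ has multiplicity equal to the number of connected components, each block $k$ contributing the eigenvector $\hat{D}_{(k)}^{1/2}\mathbf{1}$ with eigenvalue exactly $1$.

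The heart of the argument is counting components per attack type. Under Assumption \ref{assum: Bounded gradient dissimilarity-app} all benign updates lie within $\ka$ of $\nabla F$, so with a suitable $\sigma$ they form a single connected block. For Non-collusion, each of the $q$ attackers is far from the benign mean and from every other attacker, hence an isolated singleton, giving $c = 1+q$. For Collusion-diff, the attackers coalesce into $B$ tight groups each far from the benign block, giving $c = 1+B$. The bounds $c < K/2$ follow from $q < m-1$ and $m+q = K$, which give $2q < K-1$ and hence $q < K/2$ (with $B \le q$ in the second case).

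The Collusion-mimic case is where I expect the main difficulty, and it requires the relative-scale reasoning to be made explicit. Here $\|\bg_b - \nabla F\| < \ka$ while $\|\bg_b^i - \bg_b^j\| \ll \ka$, so the colluders are far more tightly bound to one another (and to the handful of mimicked benign clients) than benign clients are bound among themselves. The key point is to choose $\sigma$ so that $\exp(-\Theta(\ka^2)/\sigma^2)$ is small enough to render generic benign--benign similarities negligible, while $\exp(-\|\bg_b^i-\bg_b^j\|^2/2\sigma^2)\approx 1$ keeps the colluder similarities near one. Under this choice the colluders plus their mimic targets form one block and the remaining benign clients split into singletons, so $c = 1 + (m-1) = m = K-q > K/2$, again using $q < m-1$. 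I would emphasize that this is precisely the regime selected by the eigengap-maximizing choice of $\sigma$ in Algo. \ref{algo:para_deter}.

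Finally I would transfer these exact multiplicities to the true matrix $L = \hat{L} + E$, where $\|E\|$ collects the small neglected inter-cluster weights. By Lemma \ref{lem:matrix-perturb-app} the eigenvalues move by at most $\|E\|$, so the $c$ unit eigenvalues of $\hat{L}$ become the first $c$ eigenvalues of $L$ that are close to $1$, while $\lambda_{c+1}$ stays bounded below $1$; this yields the pronounced eigengap at position $c$ asserted in the proposition. The only care needed is to confirm that $\|E\|$ is genuinely $o(1)$ in each regime, which follows from the same exponential smallness used to define $\hat{A}$.
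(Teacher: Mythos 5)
Your proposal is correct and takes essentially the same route as the paper's own proof: an idealized block-diagonal normalized adjacency matrix whose eigenvalue-$1$ multiplicity is counted via the connected-components result (Lemma \ref{lemma:clusternum-eigengap-app}), a per-attack-type count of components ($1+q$, $1+B$, and one colluder block plus benign singletons), and Stewart's perturbation bound (Lemma \ref{lem:matrix-perturb-app}) to transfer the count to the true Gaussian-kernel matrix. If anything, your handling of the Collusion-mimic case is more explicit than the paper's (which dismisses it with ``Similarly''), and your count $c=m>\frac{K}{2}$ there agrees with the paper's worked example, so no changes are needed.
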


\begin{proof}
For Non-Collusion attack,  $\|\bg_b - \nabla F \| > \ka $ and malicious updates ($\bg_b$) are far away from each other. Therefore, we define normalized adjacency matrix $A$ in the ideal case is block-wised where benign clients form a block with no relations to other attackers. Consequently, we obtain the largest $c$ eigenvalues of $A$ are 1 according to Lemma \ref{lemma:clusternum-eigengap-app}, where $c = 1+q < \frac{K}{2}$. In general, normalized adjacency matrix $\tilde{A} = A+E$. Based on conditions that malicious updates ($\bg_b$) are far away from each other and $||\bg_b - \nabla F|| \gg \ka$, we have $\tilde{A}_{ij} = \text{exp}(-||\bg_b - \nabla F_i||/\sigma^2) $ is small, so $||E||$ is small. Consequently, according to Lemma \ref{lem:matrix-perturb-app},
\begin{equation}
    \tilde{\lambda} - \lambda \leq ||E||
\end{equation}
As a a result, we obtain the first $q$ eigenvalues of $\tilde{A}$ are close to 1 due to the small $||E||$. \\
Similarly, we have $c =1+B< \frac{K}{2}$ for Collusion-diff attack; the first $c$ eigenvalues of $\tilde{A}$ are close to 1 for Collusion-mimic attack, and $c \geq m+1 > \frac{K}{2}$.
\end{proof}

\begin{rmk}
Proposition \ref{prop:attack-type-app} illustrates that the eigenvalue 1 has multiplicity $c$, and then there is a gap to the $(c+1)_{th}$ eigenvalue $\lambda_{c+1} <1$. Therefore, we can use the position of the obvious eigengap to elucidate the number of community clusters\footnote{We use the largest eigengap to determine the community clusters}. Moreover, Proposition \ref{prop:attack-type-app} demonstrates the different positions of obvious eigengap for Non-collusion, Collusion-diff and Collusion-mimic attacks.
\end{rmk}

\subsection{Proof of Proposition 2}

\begin{lem} \label{lem:lem1-app}
 Minimizing $\sum_{i=1}^cW(B_i, \overline{B_i})/vol(B_i)$ over all c-partition $V = B_1 \cup \cdots \cup B_c$ is equivalent with maximizing $\text{tr}(H^TD^{-1/2}A D^{-1/2}H)$ over $\{H| H\in \mathbb{R}^{K \times c}, H^T H =I \}$, where $W(B_i,B_j): = \sum_{i\in B_i, j\in B_j}A_{ij}$ and $\text{tr}$ represents the trace of matrix.
\end{lem}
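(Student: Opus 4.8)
The plan is to follow the standard trace reduction for normalized cut, adapted to the normalized adjacency matrix $D^{-1/2}AD^{-1/2}$ used here, and then invoke the usual spectral relaxation. First I would encode a $c$-partition $V = B_1 \cup \cdots \cup B_c$ as an indicator matrix $H = (h_{ij}) \in \mathbb{R}^{K\times c}$ with volume-normalized columns: set $h_{ij} = 1/\sqrt{\mathrm{vol}(B_j)}$ if $v_i \in B_j$ and $h_{ij}=0$ otherwise, where $\mathrm{vol}(B_j) = \sum_{a\in B_j} d_a$ and $d_a = \sum_k A_{ak}$. A one-line computation shows these columns are $D$-orthonormal, i.e. $H^T D H = I_c$, since distinct blocks have disjoint support and $\sum_{a\in B_j} d_a = \mathrm{vol}(B_j)$.

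Second, I would compute the column-wise quadratic forms of the unnormalized Laplacian $L = D - A$. One gets $h_j^T D h_j = 1$ and $h_j^T A h_j = W(B_j,B_j)/\mathrm{vol}(B_j)$, so that $h_j^T L h_j = (\mathrm{vol}(B_j) - W(B_j,B_j))/\mathrm{vol}(B_j)$. Because $\mathrm{vol}(B_j) = W(B_j,B_j) + W(B_j,\overline{B_j})$, this collapses to exactly $W(B_j,\overline{B_j})/\mathrm{vol}(B_j)$. Summing over $j$ yields the exact identity $\mathrm{tr}(H^T L H) = \sum_{i=1}^c W(B_i,\overline{B_i})/\mathrm{vol}(B_i)$; that is, the trace form equals the Ncut objective on the LHS.

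Third, I would normalize the constraint by the change of variables $T = D^{1/2} H$, which turns $H^T D H = I$ into the ordinary orthonormality $T^T T = I$. Substituting $H = D^{-1/2}T$ rewrites the objective as $\mathrm{tr}(H^T L H) = \mathrm{tr}(T^T D^{-1/2} L D^{-1/2} T) = \mathrm{tr}(T^T L_{sym} T)$, where $L_{sym} = I - D^{-1/2}AD^{-1/2}$. Since $\mathrm{tr}(T^T T) = \mathrm{tr}(I_c) = c$ is constant on the feasible set, we have $\mathrm{tr}(T^T L_{sym} T) = c - \mathrm{tr}(T^T D^{-1/2}AD^{-1/2}T)$, so minimizing the Laplacian trace is equivalent to maximizing $\mathrm{tr}(T^T D^{-1/2}AD^{-1/2}T)$. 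Renaming $T$ back to $H$ gives the stated claim.

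The only genuinely delicate point is not a calculation but the word "equivalent." The rewriting above is exact only when $H$ ranges over the discrete set of volume-normalized partition indicators, whereas the right-hand feasible set $\{H : H^T H = I\}$ is the full Stiefel manifold, which is strictly larger. The honest reading is that the partition problem is the discrete restriction of the continuous trace problem, and the lemma asserts the standard spectral relaxation in which this discreteness constraint is dropped; the relaxed optimum is then attained by the leading $c$ eigenvectors of $D^{-1/2}AD^{-1/2}$. I would state this relaxation explicitly so that "equivalent" is not misread as an identity of feasible sets, and note that this is precisely the relaxation exploited when Algo.~\ref{algo:fedcut} applies $k$-means to the top $c$ eigenvectors.
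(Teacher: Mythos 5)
Your proof is correct and follows essentially the same route as the paper's: volume-normalized partition indicators, the identity $\mathrm{tr}(F^T(D-A)F)=\sum_i W(B_i,\overline{B_i})/\mathrm{vol}(B_i)$, the substitution $H=D^{1/2}F$ converting $D$-orthonormality into $H^TH=I$, and the trace identity $c-\mathrm{tr}(H^TD^{-1/2}AD^{-1/2}H)$. Your closing remark that the ``equivalence'' is really a spectral relaxation (the discrete indicator set versus the full Stiefel manifold) is a genuine subtlety that the paper's proof silently glosses over, and stating it explicitly is an improvement rather than a deviation.
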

\begin{proof}
Define $\bbf_i = (f_{i1}, \cdots, f_{iK})^T $
\begin{equation}
f_{ij} =\left\{
             \begin{array}{cc}
             \sqrt{\frac{1}{vol(B_i)}}, &v_j \in B_i \\
              0,& v_j \notin B_i  
             \end{array}
\right.
\end{equation}
Then 
\begin{equation} \label{}
\begin{split}
 \bbf_i^T (D-A) \bbf_i & = \frac{1}{2}\sum_{m=1}^K \sum_{n=1}^K A_{mn}(f_{im}-f_{in})^2 \\
 & = \frac{1}{2} [\sum_{v_m\in B_i,v_n \notin B_i}A_{mn}(f_{im}-f_{in})^2 \\
 & \quad + \sum_{v_n\in B_i, v_m \notin B_i}A_{mn}(f_{im}-f_{in})^2] \\
 & = \frac{1}{2} [\sum_{v_m\in B_i, v_n \notin B_i}A_{mn}(\sqrt{\frac{1}{vol(B_i)}}- 0)^2 \\
 & \quad + \sum_{v_n\in B_i, v_m \notin B_i}A_{mn}(0 -\sqrt{\frac{1}{vol(B_i)}})^2] \\
 & = W(B_i, \overline{B_i})/Vol(B_i),
\end{split}
\end{equation}
where $W(B_i,B_j): = \sum_{i\in B_i, j\in B_j}A_{ij}$. Moreover, we have $\bbf_i^T \bbf_i = 1/vol(B_i)$. Thus,
\begin{equation}
\begin{split}
    \sum_{i=1}^c  W(B_i, \overline{B_i})/Vol(B_i) &=  \sum_{i=1}^c \bbf_i^T (D-A) \bbf_i\\
    &= tr(F^T(D-A)F),
\end{split}
\end{equation}
where F is matrix combining all $\bbf_i$. Let $F = D^{-1/2} H$, then $H^TH = I$ and
\begin{equation}
\begin{split}
    \min_{( B_1 \cup \cdots \cup B_c)=V} &\sum_i^c  W(B_i, \overline{B_i})/Vol(B_i)\\
    & = \min_F tr(F^T(D-A)F) \\
    & = \min_H \text{tr}(H^TD^{-1/2}(D-A)D^{-1/2}H) \\
    & = c - \max_H \text{tr}(H^TD^{-1/2}A D^{-1/2}H)
\end{split}
\end{equation}
\end{proof}
In federated learning, clients would send the updates to the server in each iteration, thus we consider the graph for all iterations as follows:
\begin{defi} \label{def:Spatial-Temporal graph-app}
(Spatial-Temporal Graph) Define a Spatial-Temporal graph $G = (V,E)$ as a sequence of snapshots $<G^1, \cdots, G^T>$, where $G^t=(V, E^t)$ is an undirected graph at iteration $t$. $V$ denotes a fixed set of vertexes representing model updates belonging to $K$ clients. $E^t$ is a set of weighted edge representing similarities between model updates corresponding to clients in $V$ at iteration $t$, where related adjacency matrix of the graph is  $A^t$. 
\end{defi}

The \textit{c-partition Ncut for Spatial-Temporal Graph} is thus defined as follows:
{
\begin{defi} (c-partition Ncut for Spatial-Temporal Graph) \label{def:partition for Spatial-Temporal graph-app}
Let $G = (V,E)$ be a Spatial-Temporal graph as Def. \ref{def:Spatial-Temporal graph-app}
Denote the c-partition for graph $G$ as $V = B_1 \cup \cdots \cup B_c$ and $B_i \cap B_j = \varnothing$ for any $i,j$, c-partition Ncut for Spatial-Temporal Graph aims to optimize:
\begin{equation} \label{eq:loss-Spatial-Temporal-app}
    \min_{( B_1 \cup \cdots \cup B_c)=V}\sum_{t=1}^T \sum_{i=1}^c\frac{W^t(B_i, \overline{B_i})}{Vol^t(B_i)}, 
\end{equation}
where $\overline{B_i}$ is the complement of $B_i$, $W^t(B_i,B_j): = \sum_{v_i\in B_i, v_j\in B_j}A^t_{ij}$, and $A_{ij}^t$ is edge weight of $B_i$ and $B_j$, and $Vol^t(B_i): = \sum_{v_i \in B}\sum_{v_j\in V} A_{ij}^t$
\end{defi}}

\begin{prop} \label{prop:prop1-app}
FedCut (Algo. 2) solves the \textit{c-partition Ncut} in Eq. (\ref{eq:loss-Spatial-Temporal-app}).
\end{prop}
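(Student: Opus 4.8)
The plan is to reduce the Spatial-Temporal Ncut objective to a single trace-maximization problem that the spectral algorithm solves exactly (after the continuous relaxation), mirroring the single-iteration argument of Lemma~\ref{lem:lem1-app} but carried out over the summed-over-time objective. First I would observe that the objective in Eq.~(\ref{eq:loss-Spatial-Temporal-app}) decomposes additively over iterations: for a fixed $c$-partition $V = B_1 \cup \cdots \cup B_c$, define for each iteration $t$ the indicator-type matrix $F^t = D_t^{-1/2} H$ exactly as in the proof of Lemma~\ref{lem:lem1-app}, so that $\sum_{i=1}^c W^t(B_i,\overline{B_i})/Vol^t(B_i) = \mathrm{tr}\bigl(H^T D_t^{-1/2}(D_t - A^t) D_t^{-1/2} H\bigr)$, where $D_t = \mathrm{diag}(\mathrm{Sum}(A^t))$. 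The key point is that the \emph{same} orthonormal $H$ (equivalently the same partition) is used at every $t$, because the partition into benign/colluder communities is fixed across iterations.

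Next I would sum this identity over $t=1,\dots,T$ and use linearity of the trace to obtain
\begin{equation}
\sum_{t=1}^T \sum_{i=1}^c \frac{W^t(B_i,\overline{B_i})}{Vol^t(B_i)} = cT - \mathrm{tr}\Bigl(H^T \Bigl(\sum_{t=1}^T D_t^{-1/2} A^t D_t^{-1/2}\Bigr) H\Bigr).
\end{equation}
Recognizing $L^t = D_t^{-1/2} A^t D_t^{-1/2}$ and $\tilde{L}^T = \frac{1}{T}\sum_{t=1}^T L^t$ (the running average computed by the induction in lines~8--9 of Algo.~\ref{algo:fedcut}), this reads $cT - T\,\mathrm{tr}(H^T \tilde{L}^T H)$. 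Thus minimizing the Spatial-Temporal Ncut over all $c$-partitions is equivalent to $\max_{H^TH=I}\mathrm{tr}(H^T \tilde{L}^T H)$. By the Ky~Fan theorem (or the standard Rayleigh-Ritz trace maximization argument already invoked implicitly in Lemma~\ref{lem:lem1-app}), the optimum of the relaxed problem is attained when the columns of $H$ span the top-$c$ eigenspace of the symmetric matrix $\tilde{L}^T$. This is precisely the eigendecomposition and top-$c$-eigenvector selection performed in lines~10--11 of Algo.~\ref{algo:fedcut}, after which \emph{Kmeans} rounds the relaxed embedding back to a discrete partition; hence FedCut returns the optimal $c$-partition Ncut.

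The main obstacle, and the step I would be most careful about, is the legitimacy of pulling the \emph{same} $H$ out across all time steps: this requires that the partition is temporally consistent, so that the decomposition $F^t = D_t^{-1/2}H$ uses a time-independent $H$ while the diagonal normalizer $D_t$ varies with $t$. I would make explicit that this is exactly the structural assumption built into the Spatial-Temporal graph (Def.~\ref{def:Spatial-Temporal graph-app}), where $V$ is a \emph{fixed} vertex set and only the edge weights $E^t$ change. A secondary subtlety is the usual spectral-clustering gap between the relaxed trace maximization and the genuine combinatorial minimum: the relaxation gives a lower bound and the $\mathrm{tr}(H^T\tilde L^T H)$ maximizer need not be an indicator matrix, so I would state (as in the single-iteration Ncut literature \cite{shi2000normalized,ng2001spectral}) that ``solves'' is understood in the sense of the standard spectral relaxation followed by Kmeans rounding, which is the same sense in which Lemma~\ref{lem:lem1-app} is applied; the mimic-colluder zeroing of edges in lines~2--7 and the choice of $c$ via the eigengap (Sect.~\ref{subsec: spectal-heu}) guarantee the block structure under which this relaxation is tight.
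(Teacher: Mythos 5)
Your proof follows essentially the same route as the paper's: reduce each per-iteration Ncut term to a trace via the single-iteration lemma (with one common $H$ shared across all iterations), sum using linearity of the trace, identify the running average $\tilde{L}^T$ produced by lines 8--9 of Algo.~2, and invoke Rayleigh--Ritz/Ky Fan to get the top-$c$ eigenvectors. If anything, your write-up is slightly more careful than the paper's on two points: you keep the sign straight (minimizing the Ncut objective corresponds to \emph{maximizing} $\mathrm{tr}(H^T\tilde{L}^T H)$, whereas the paper's chain of ``equalities'' silently swaps $D^{-1/2}(D-A^t)D^{-1/2}$ for $L^t$ and then asks for the minimizer while citing the largest eigenvalues), and you explicitly state that ``solves'' is meant in the sense of the spectral relaxation followed by Kmeans rounding, a caveat the paper leaves implicit.
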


\begin{proof}
Firstly, according to the line 2 in Algo. 2,
\begin{equation}
    \begin{split}
        \tilde{L}^t &= \frac{t-1}{t}\tilde{L}^{t-1} + \frac{1}{t}L^t \\
        & = \frac{t-1}{t}(\frac{t-2}{t-1}\tilde{L}^{t-2} +\frac{1}{t-1}L^{t-1}) + \frac{1}{t}L^t \\
        & = \frac{t-2}{t}\tilde{L}^{t-2} + \frac{1}{t}(L^{t-1} + L^t) \\
        & = \cdots \\
        & = \frac{1}{t} \sum_{i=1}^t L^i,
    \end{split}
\end{equation}
where $L^i = D^{-1/2}A^tD^{-1/2}$. Furthermore, according to Lemma \ref{lem:lem1-app}, 
\begin{equation}
    \begin{split}
        & \min_{( B_1 \cup \cdots \cup B_c)=V}\quad \sum_{t=1}^t \sum_{i=1}^c\frac{W^t(B_i, \overline{B_i})}{Vol^t(B_i)} \\
        & = \min_H \sum_{i=1}^t\text{tr}(H^T{D^t}^{-1/2}(D^t-A^t){D^t}^{-1/2}H) \\
        & =  \min_H \text{tr}( H^T(\sum_{i=1}^t{D^t}^{-1/2}(D^t-A^t){D^t}^{-1/2})H) \\ 
        & = \min_H \text{tr}(H^T \sum_{i=1}^t L^t H) \\
        & = \min_H \text{tr}(H^T \tilde{L}^t H)t,
    \end{split}
\end{equation}
where $H^TH =I$. By the Rayleigh-Ritz theorem \cite{lutkepohl1997handbook} it can be seen immediately that the solution of $\min_H \text{tr}(H^T \tilde{L}^tH)t$ is given by the $H$ which is the eigenvector corresponding to the $c$ largest eigenvalue of $\tilde{L}^t$. Thus we prove that FedCut aims to solve the \textit{c-partition Ncut} in Eq. \eqref{eq:loss-Spatial-Temporal-app} for Spatial-Temporal graph among $t$ iterations. 
\end{proof}

\subsection{Proof of Theorem 1}


\begin{assumption} \label{assum: diff-cluster-app}
For malicious updates $\bg_b$ provided that $||\bg_b- \nabla F||> \ka$, the difference between the mean of benign updates and colluders' updates has at least $C\ka$ distance,  where $C$ is a large constant, i.e.,
\begin{equation*}
    \| \bg_b -  \nabla F \| > C\ka.
\end{equation*}

\end{assumption}


\begin{lem}[\cite{ng2001spectral}]  \label{lem:spectral-clustering-app}
Let adjacency matrix A's off-diagonal blocks $A^{ij}$, $i \neq j$ be zero (block diagonal matrix). Also assume that each cluster is connected. Then there exists $c$ orthogonal vectors $r_1,\cdots, r_c$
($r_i^T r_j = 1$ if i = j, 0 otherwise) so that top $c$ eigenvectors ($Y \in \mathbb{R}^{K \times c}$) of normalized adjacency matrix satisfy 
\begin{equation}
    y_j^{(i)} = r_i,
\end{equation}
for all $i = 1,\cdots, c$, $j = 1, \cdots, n_i$, where $y_j^{(i)}$ represents $j_{th}$ rows in $i_{th}$ clusters of $Y$. In other words, there are $c$ mutually orthogonal points on the surface of the unit c-sphere around which Y's rows will cluster. Moreover, these clusters correspond exactly to the true clustering of the original data. 
\end{lem}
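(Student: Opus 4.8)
The plan is to reduce the global claim to a per-block Perron--Frobenius analysis and then account for the rotational freedom in choosing the top-$c$ eigenbasis. After permuting indices so that nodes of the same cluster are contiguous, the hypothesis that the off-diagonal blocks vanish makes $A=\mathrm{blkdiag}(A^{(1)},\dots,A^{(c)})$, and since $D=\mathrm{diag}(\mathrm{Sum}(A))$ is diagonal, the normalized matrix $L=D^{-1/2}AD^{-1/2}$ is itself block diagonal, $L=\mathrm{blkdiag}(L^{(1)},\dots,L^{(c)})$ with $L^{(i)}=(D^{(i)})^{-1/2}A^{(i)}(D^{(i)})^{-1/2}$. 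Hence $\mathrm{spec}(L)=\bigcup_i \mathrm{spec}(L^{(i)})$, and every eigenvector of $L$ is an eigenvector of one block embedded by zeros into $\RR^K$.

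First I would analyze a single block. A direct computation shows $L^{(i)}(D^{(i)})^{1/2}\mathbf 1=(D^{(i)})^{-1/2}A^{(i)}\mathbf 1=(D^{(i)})^{1/2}\mathbf 1$, so $1$ is an eigenvalue of $L^{(i)}$ with eigenvector $v_i:=(D^{(i)})^{1/2}\mathbf 1$, whose $j$-th entry is $\sqrt{d_j}$ (where $d_j$ is the $j$-th diagonal entry of $D$). Because each cluster is connected with non-negative weights, Perron--Frobenius (equivalently Lemma~\ref{lemma:clusternum-eigengap-app} applied to the single component $B_i$) guarantees that this dominant eigenvalue $1$ is simple. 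Assembling the blocks, Lemma~\ref{lemma:clusternum-eigengap-app} gives that $1$ is an eigenvalue of $L$ of multiplicity exactly $c$, with eigenspace $\mathcal S=\mathrm{span}\{\hat v_1,\dots,\hat v_c\}$, where $\hat v_i$ agrees with $v_i/\|v_i\|$ on block $i$ and vanishes elsewhere (note $\|v_i\|^2=\mathrm{vol}(B_i)$). Since the multiplicity is exactly $c$, the $(c+1)$-th eigenvalue is strictly below $1$, so the top-$c$ eigenspace is precisely $\mathcal S$.

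The key remaining step handles the non-uniqueness of the eigenbasis. Any matrix $Y\in\RR^{K\times c}$ whose columns form an orthonormal basis of $\mathcal S$ can be written $Y=[\hat v_1,\dots,\hat v_c]\,R$ for some orthogonal $R\in\RR^{c\times c}$. Since $\hat v_k$ is supported on block $k$ only, the unnormalized row of $Y$ at a node $j$ lying in cluster $i$ equals $\sqrt{d_j/\mathrm{vol}(B_i)}\;r_i^{\!\top}$, where $r_i$ is the $i$-th row of $R$; the vectors $r_1,\dots,r_c$ are orthonormal because $R$ is orthogonal, i.e. $r_i^{\!\top}r_j=1$ iff $i=j$. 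Thus all rows in a fixed cluster are positive multiples of the same unit vector $r_i$, so after the row-normalization step each becomes exactly $r_i$, giving $y_j^{(i)}=r_i$ and placing the $c$ clusters at $c$ mutually orthogonal points of the unit sphere.

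The main obstacle is precisely this rotational ambiguity together with the indispensability of the normalization step: the eigenvalue $1$ is $c$-fold degenerate, so the eigenvectors are pinned down only up to the orthogonal factor $R$, and the raw rows within a cluster are not equal but merely proportional, with degree-dependent magnitude $\sqrt{d_j}$. The proof must therefore (i) invoke connectedness to exclude extra eigenvalue-$1$ directions inside a block, and (ii) show the stated collapse $y_j^{(i)}=r_i$ holds only after row normalization, which is what converts proportionality into equality while leaving the mutual orthogonality of the cluster centers intact.
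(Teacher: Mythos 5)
Your proof is correct. Note, however, that the paper itself gives no proof of this statement: Lemma~\ref{lem:spectral-clustering-app} is imported verbatim, with a citation, from Ng--Jordan--Weiss \cite{ng2001spectral} and is used as a black box inside the proof of Theorem~\ref{thm:thm1-app}. What you have written is a correct, self-contained reconstruction of the ideal-case analysis in that reference: block-diagonality of $L=D^{-1/2}AD^{-1/2}$, the per-block identity $L^{(i)}(D^{(i)})^{1/2}\mathbf{1}=(D^{(i)})^{1/2}\mathbf{1}$, simplicity of the eigenvalue $1$ on each connected block (Perron--Frobenius, or equivalently Lemma~\ref{lemma:clusternum-eigengap-app} applied to one component), the resulting identification of the top-$c$ eigenspace with $\mathrm{span}\{\hat v_1,\dots,\hat v_c\}$, and the orthogonal ambiguity $Y=[\hat v_1,\dots,\hat v_c]R$. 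Two points you make explicit are genuine improvements over the statement as printed. First, connectedness of each cluster is precisely what pins the multiplicity of the eigenvalue $1$ at exactly $c$, so that ``top $c$ eigenvectors'' is well defined as an eigenspace rather than an arbitrary selection. Second, the literal equality $y_j^{(i)}=r_i$ is false for the raw eigenvector matrix---the row at a node $j$ of cluster $i$ equals $\sqrt{d_j/\mathrm{vol}(B_i)}\,r_i^{\top}$, a degree-dependent positive multiple of a common unit vector---and becomes an equality only after the row-normalization step, which is indeed what Algorithm~\ref{algo:fedcut} performs (``top $c$ normalized eigenvectors''). So your argument not only proves the lemma but also sharpens its statement in a way the paper leaves implicit.
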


\begin{lem}[Theorem 4.14 in \cite{ostrovsky2013effectiveness}]  \label{lem:kmeans-app}
Defined $Y$ same as Lemma \ref{lem:spectral-clustering-app}, the Kmeans algorithm that clusters $\tilde{Y} = Y + E_1$ into c groups returns an optimal solution of cost at most $\frac{1-\epsilon^2}{1-32\epsilon^2}\epsilon^2$ with probability $1- \calO(\sqrt{\epsilon})$, where $\epsilon = ||E_1||$
\end{lem}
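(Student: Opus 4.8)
The plan is to prove Lemma \ref{lem:kmeans-app} by \emph{reduction}: I would verify that the perturbed embedding $\tilde{Y} = Y + E_1$ meets the $\epsilon$-separation hypothesis under which the seeding-plus-Lloyd procedure of \cite{ostrovsky2013effectiveness} (the Kmeans algorithm invoked here) is guaranteed to return a near-optimal clustering, and then quote Theorem 4.14 of \cite{ostrovsky2013effectiveness} to obtain both the approximation factor $\frac{1-\epsilon^2}{1-32\epsilon^2}$ and the success probability $1-\calO(\sqrt{\epsilon})$. The genuine work is the hypothesis check and the absorption of the optimal $c$-means cost into the final bound; everything after the reduction is a direct citation.

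First I would record the ideal geometry supplied by Lemma \ref{lem:spectral-clustering-app}. Its rows take only the $c$ values $r_1,\dots,r_c$, which are mutually orthonormal, so $\|r_i-r_j\|^2=2$ for $i\neq j$, and the optimal $c$-means cost of $Y$ is exactly $0$ (one center per $r_i$). Next I would upper-bound the optimal $c$-means cost $\Delta_c^2(\tilde{Y})$: assigning each row $\tilde{y}_j$ to the ideal center $r_{\iota(j)}$ of its true cluster costs $\sum_j \|\tilde{y}_j - r_{\iota(j)}\|^2 = \|E_1\|_F^2$, so $\Delta_c^2(\tilde{Y})\le \|E_1\|_F^2$. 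With $\epsilon:=\|E_1\|$ read as the Frobenius norm (the convention consistent with the row-wise displacement bookkeeping), this is $\Delta_c^2(\tilde{Y})\le \epsilon^2$.

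The crux is the matching lower bound on the optimal $(c-1)$-means cost. Any partition into $c-1$ clusters must, by pigeonhole, merge two of the $c$ true groups under a single center; since those groups sit within $\epsilon$ of orthonormal points at squared distance $2$, the one-center cost of their union is at least $\frac{n_i n_j}{n_i+n_j}\,(\sqrt{2}-2\epsilon)^2 \ge \tfrac12(\sqrt{2}-2\epsilon)^2$, which exceeds an absolute constant once $\epsilon$ is small. Hence $\Delta_{c-1}^2(\tilde{Y})$ is bounded below by a constant, so $\Delta_c^2(\tilde{Y})/\Delta_{c-1}^2(\tilde{Y})=\calO(\epsilon^2)$, which is precisely the $\epsilon$-separation condition required by \cite{ostrovsky2013effectiveness} once $\epsilon$ lies below the threshold already implicit in the factor $\frac{1-\epsilon^2}{1-32\epsilon^2}$. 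Invoking Theorem 4.14 then gives a returned clustering of cost at most $\frac{1-\epsilon^2}{1-32\epsilon^2}\,\Delta_c^2(\tilde{Y})$ with probability $1-\calO(\sqrt{\epsilon})$, and combining with $\Delta_c^2(\tilde{Y})\le\epsilon^2$ yields the stated bound.

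I expect the main obstacle to be the norm bookkeeping: pinning down whether $\epsilon=\|E_1\|$ is the spectral or Frobenius norm and controlling the aggregate squared row displacement $\sum_j\|(E_1)_{j,:}\|^2$ accordingly (the two differ by a factor up to $\mathrm{rank}(E_1)$), since the clean bound $\Delta_c^2(\tilde{Y})\le\epsilon^2$ hinges on this. A secondary obstacle is making the lower bound on $\Delta_{c-1}^2$ rigorous: it implicitly assumes every true cluster is nonempty and that the perturbation does not destroy the orthonormal separation, and all the smallness constraints on $\epsilon$ accumulated along the way must be kept consistent with the admissible range baked into the $1/32$ denominator of the quoted factor.
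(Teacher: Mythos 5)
Your proposal is sound, but be aware that it does strictly more than the paper itself does: the paper gives \emph{no proof} of Lemma \ref{lem:kmeans-app} — it states the lemma as a verbatim citation of Theorem 4.14 of \cite{ostrovsky2013effectiveness} and uses it as a black box inside the proof of Theorem \ref{thm:thm1-app}. Your reduction is precisely the bridge that the paper leaves implicit, and it is the right one: the cited theorem is a conditional guarantee for $\epsilon$-separated instances whose cost bound is multiplicative in the optimal $c$-means cost $\Delta_c^2(\tilde{Y})$, so to reach the absolute bound $\frac{1-\epsilon^2}{1-32\epsilon^2}\epsilon^2$ of the lemma one must both (i) verify separation, which you do via the lower bound on $\Delta_{c-1}^2(\tilde{Y})$, and (ii) absorb $\Delta_c^2(\tilde{Y})\le\|E_1\|_F^2$ into the final bound, which is your Frobenius bookkeeping. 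Two refinements are needed to make your sketch airtight. First, the pigeonhole step as you phrase it ("a $(c-1)$-partition must merge two true groups under one center") is not literally true, since a $(c-1)$-clustering may split true groups; the standard fix is: if every orthonormal center $r_i$ had some chosen center within distance $\sqrt{2}/2-\epsilon$, those chosen centers would be pairwise distinct (two $r_i$'s at distance $\sqrt{2}$ cannot share one), forcing $c$ centers — a contradiction — so some whole group sits at distance at least roughly $\sqrt{2}/2-\epsilon$ from every center, giving the constant lower bound directly. Second, the norm ambiguity and the smallness threshold on $\epsilon$ that you flag are genuine defects of the lemma \emph{as stated in the paper} (its footnote declares $\|\cdot\|$ to be the $\ell_2$ norm, under which $\Delta_c^2(\tilde{Y})\le\epsilon^2$ only follows up to a rank factor), not of your argument. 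In short: the paper's approach buys brevity at the price of a statement that is not literally the theorem it cites; yours buys an actual justification that the external theorem applies, at the price of explicit side conditions ($\epsilon$ small, all clusters nonempty) that the downstream use in Theorem \ref{thm:thm1-app} implicitly assumes anyway.
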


\begin{rmk}
1) Lemma \ref{lem:spectral-clustering-app} illustrates the clustering accuracy of NCut is 100\% in clustering top c eigenvectors when adjacency matrix is the block diagonal matrix.\\
2) Lemma \ref{lem:kmeans-app} shows Kmeans clustering for eigenvectors $Y$ still reach optimal clustering with a large probability when eigenvectors $Y$ has a small perturbation. 

\end{rmk}

\begin{thm} \label{thm:thm1-app}
Suppose an $0< \alpha < \frac{1}{2}$ fraction of clients are Byzantine attackers. If Assumption \ref{assum: Bounded gradient dissimilarity-app} and \ref{assum: diff-cluster-app} holds, we can find the estimate of $\hat{\bg}$ of $\bar{\bg}$ according to line 12 in Algo. 1 and  with the probability $1-\tilde{O}(\sqrt{Z_1})$, such that $||\hat{\bg} - \bar{\bg}|| \leq \calO(C\alpha \ka)$, where $Z_1 =  \frac{4K \sqrt{Z^{\frac{C^2}{4}}}}{\delta - K \sqrt{2Z^{\frac{C^2}{4}}}}, Z = \text{exp}(-2 \ka^2/\sigma^2)$.
\end{thm}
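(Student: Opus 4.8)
The plan is to realize the normalized adjacency matrix $\tilde L$ as a small perturbation of an ideal block-diagonal matrix and then chain the three perturbation/clustering lemmas already in hand (Lemma \ref{lem:matrix-perturb-app}, Lemma \ref{lem:spectral-clustering-app}, Lemma \ref{lem:kmeans-app}), closing with an averaging argument over the recovered benign cluster. First I would build the ideal matrix $L$: Assumption \ref{assum: Bounded gradient dissimilarity-app} places every benign update within $\ka$ of $\nabla F$, so the benign clients form a single tightly connected block with intra-block weights near $1$; Assumption \ref{assum: diff-cluster-app} puts each colluder at distance at least $C\ka$ from $\nabla F$, hence at distance of order $C\ka$ from every benign update, so in the ideal case the benign block is disconnected from the colluders and $L$ is block diagonal. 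By Lemma \ref{lem:spectral-clustering-app} the top-$c$ eigenvectors $Y$ of such an $L$ separate the benign nodes exactly.

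Next I would quantify the perturbation $E = \tilde L - L$, whose nonzero part is the benign--colluder cross edges $A_{ib} = \exp(-\|\bg_i - \bg_b\|^2/2\sigma^2)$. By the triangle inequality the two assumptions give a benign--colluder distance of at least $(C-1)\ka$, which is of order $C\ka$ for the large constant $C$ of Assumption \ref{assum: diff-cluster-app}, so each such entry is of order $\exp(-C^2\ka^2/2\sigma^2) = Z^{C^2/4}$ with $Z = \exp(-2\ka^2/\sigma^2)$. Bounding the operator norm of $E$ by its entrywise scale times the number of clients yields $\|E\| \le K\sqrt{Z^{C^2/4}}$ (retaining the exact count $2\alpha(1-\alpha)K^2$ of cross edges recovers the sharper $\alpha(1-\alpha)$ factor of the main-text form). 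Feeding this into the Stewart bound of Lemma \ref{lem:matrix-perturb-app} gives $\|Y - \tilde Y\| \le \frac{4\|E\|}{\delta - \sqrt{2}\|E\|} \le Z_1$; here I must verify the denominator $\delta - \sqrt{2}\,K\sqrt{Z^{C^2/4}}$ is positive, which is exactly what maximizing the eigengap $\delta$ over $\sigma$ in Algo. \ref{algo:para_deter} secures.

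With the eigenvector perturbation controlled by $\epsilon := \|Y - \tilde Y\| \le Z_1$, I would invoke Lemma \ref{lem:kmeans-app}: running Kmeans on $\tilde Y$ attains the optimal clustering, and therefore recovers the true benign cluster, with probability $1 - \calO(\sqrt{\epsilon}) = 1 - \tilde{O}(\sqrt{Z_1})$. I then condition on this success event for the rest of the argument, so that the detected set $\calI_R$ used in the aggregation of line 12 of Algo. \ref{algo:byz-robust} coincides with $V_R$ up to a controlled discrepancy.

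The final and hardest step is converting the (approximately) correct clustering into the gradient bound. On the success event the small Kmeans cost guarantee of Lemma \ref{lem:kmeans-app} forces all but an $\calO(\alpha)$-fraction of the aggregated mass to be genuine benign updates, each within $\ka$ of $\nabla F$, while any boundary node that leaks into $\calI_R$ displaces the mean by order $C\ka$; since at most an $\alpha$-fraction of the $K$ clients can be such contaminants, averaging divides their total contribution by the cluster size and yields a bias of order $C\alpha\ka$. Assembling the pieces gives $\|\hat{\bg} - \bar{\bg}\| \le \calO(C\alpha\ka)$ on an event of probability $1 - \tilde{O}(\sqrt{Z_1})$. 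The main obstacle is precisely this last translation --- turning the operator-norm and cost guarantees into a clean count of misclassified nodes and then into the averaged $C\alpha\ka$ bias --- alongside checking positivity of the eigengap denominator so that $Z_1$ is genuinely small, which together justify both the bound and its stated failure probability.
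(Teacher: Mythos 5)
Your proposal reconstructs the paper's own argument essentially step for step: the same decomposition of the normalized adjacency matrix into an ideal block-diagonal part plus a perturbation $E$ with cross-block entries bounded by $Z^{C^2/4}$, the same use of the Stewart perturbation bound (Lemma \ref{lem:matrix-perturb-app}) to get $\|Y-\tilde{Y}\|\leq Z_1$, the same invocation of the Kmeans guarantee (Lemma \ref{lem:kmeans-app}) for success probability $1-\tilde{O}(\sqrt{Z_1})$, and the same final averaging over the recovered largest cluster to bound the bias by an $\alpha$-fraction of displaced mass. The only cosmetic differences are that you track the $(C-1)\ka$ benign--colluder distance and the eigengap-denominator positivity more explicitly than the paper does, and your leaked-node displacement of order $C\ka$ yields the stated $\calO(C\alpha\ka)$ where the paper's proof actually derives the slightly tighter $\calO(\alpha\ka)$.
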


\begin{proof}
Suppose the server receives first (K - q) correct gradients $\{\bg_1,\cdots, \bg_{K-q}\}$ with mean $\bar{\bg}$.
Therefore, we have
\begin{equation}
   ||\bg_i - \bg_j|| \leq ||\bg_i-\bar{\bg}|| + ||\bg_j - \bar{\bg}|| \leq 2\ka,
\end{equation}
for any $1 \leq i, j \leq K-q$. Thus we have $A_{ij} \geq \text{exp}(-2 \ka^2/\sigma^2) \triangleq Z (0<Z<1)$, for any $1 \leq i, j \leq K-q$. 

Moreover, for any updated gradients of clients in different party (such as $\bg_i, \bg_j$), according to Assumption \ref{assum: diff-cluster-app}, we have 
\begin{equation}
   ||\bg_i - \bg_j|| \geq C\ka.
\end{equation} 
Consequently, $A_{ij} < \text{exp}(-C^2 k^2/(2\sigma^2)) = Z^{\frac{C^2}{4}}$, for any i, j in different parties.

On one hand, if $A_{ij} = 0$, for any i, j in different parties, then the adjacency matrix $A$ is block diagonal matrix so we could apply Lemma \ref{lem:spectral-clustering-app} to get 100\% clustering accuracy for solving FedCut (c-partition NCut for Spatial-temporal Graph) in $t_{th}$ iteration, i.e., server would select clusters ($\calI$) with largest number of clusters. Since $\alpha<0.5$, the clusters with largest number must include all benign updates and malicious updates that $||\bg_b-\bar{\bg}||<\ka$. As a result,  

\begin{equation}
    ||\hat{\bg} - \bar{\bg}|| \leq  \frac{q||\bg_b-\bar{\bg}||}{K}  \leq \calO(\frac{q\ka}{K})
\end{equation}.

On the other hand, in general case that adjacency matrix $\tilde{A}_{ij}>0$, for any i, j in different parties, set $\tilde{A} = A + E$, where $A$ is block diagonal matrix and $E$ is perturbation matrix.  For the large $C$,  $ Z^{\frac{C^2}{4}}$ is small enough. Consequently, we have
\begin{equation}
\begin{split}
     ||Y -\tilde{Y}|| &\leq \frac{4||E||_2}{\delta - \sqrt{2}||E||_2} \\
     & \leq \frac{4K \sqrt{\alpha(1-\alpha)Z^{\frac{C^2}{4}}}}{\delta - K \sqrt{2\alpha(1-\alpha)Z^{\frac{C^2}{4}}}} \triangleq Z_1.
\end{split}
\end{equation}
The first inequality is according to Lemma \ref{lem:matrix-perturb-app} and second is due to $||E||_2 \leq K \sqrt{\alpha(1-\alpha)} \sqrt{Z^{\frac{C^2}{4}}}$.
according to Lemma \ref{lem:kmeans-app}, we could obtain the optimal clustering  with probability $1-\tilde{O}(\sqrt{Z_1})$. Similarly, we have
\begin{equation}
    ||\hat{\bg} - \bar{\bg}|| \leq  \frac{q||\bg_b-\bar{\bg}||}{K}  \leq \calO(\frac{q\ka}{K})
\end{equation}
for line 12 in Algo. 1 with probability $1-\tilde{O}(\sqrt{Z_1})$, where $Z_1 =  \frac{4K \sqrt{\alpha(1-\alpha)Z^{\frac{C^2}{4}}}}{\delta - K \sqrt{2\alpha(1-\alpha)Z^{\frac{C^2}{4}}}}, Z = \text{exp}(-2 ka^2/\sigma^2)$.

\end{proof}

\subsection{Proof for Theorem 2}
Following definitions of federated learning introduced in \cite{yang2019federated,mcmahan2017communication}, we consider a \textit{horizontal federated learning} setting consisting of one server and $K$ clients. We assume $K$ clients have their local dataset $\calD_i, i=1\cdots K$. In each step, $i_{th}$ client minimizes the local risk function as $\min_{\bw_i} F_{i}(\bw_i) =\min_{\bw_i} \frac{1}{|\calD_i|} \sum_{j=1}^{|\calD_i|} F_i(\bw, \bx_{i,j})$, where $\bx_{i,j} \in \calD_i, j \in [|\calD_i|]$. Then clients send the updates $\bg_i = \nabla F_i(\bw_i)$ to the server and server aggregate the gradients to update the global model $\bw$, finally the server distribute the $\bw$ to all clients (see details in Algo. 1). Specifically,  for $t+1$ iteration, server update the global model $\bw_{t+1}$ as:
\begin{equation}
    \bw^{t+1}  = \bw^{t} - \eta \hat{\bg}^t
\end{equation}

\begin{assumption} \label{assum:bound local variance-app}
The stochastic gradients sampled from any local dataset have uniformly bounded variance over $\calD_i$ for all clients, i.e., there exists a
finite $\sigma_0$, such that for all $x \in \calD_i , i\in [K]$, we have
 \begin{equation}
     \EE_{j}||[\nabla F_i(\bw, x_{i,j})) - \nabla F_i(\bw)||^2 \leq \sigma_0^2.
 \end{equation}
 \end{assumption}

\begin{rmk}
The difference between Assumption \ref{assum: Bounded gradient dissimilarity-app} and \ref{assum:bound local variance-app} is that the former bounds the variance across gradient estimates within the same client while the latter bounds the variance between model updates across clients. 

\end{rmk}

\begin{assumption} \label{assu:l-smooth-app}
We assume that F(x) is L-smooth and has $\mu$-strong convex.
\end{assumption}

\begin{lem} [\cite{boyd2004convex}] \label{lem:l-smmoth}
If $F$ is L-smooth, then for all $\bw^*$ (optimal solution w.r.t $F$), then:
\begin{equation}
    \frac{1}{2L}||\nabla F(x)||^2 \leq F(\bw) -F(\bw^*) \leq \frac{L}{2}||\bw - \bw^*||^2
\end{equation}
\end{lem}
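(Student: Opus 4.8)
The plan is to obtain both inequalities from a single workhorse, the quadratic upper bound (\emph{descent lemma}) that follows from $L$-smoothness: for any $\bw,\by\in\RR^d$,
\begin{equation*}
    F(\by) \leq F(\bw) + \langle \nabla F(\bw), \by - \bw\rangle + \frac{L}{2}||\by - \bw||^2.
\end{equation*}
Once this is in hand, the two stated bounds follow by evaluating it at two cleverly chosen points and invoking first-order optimality of $\bw^*$. I would emphasize at the outset that only $L$-smoothness together with the fact that $\bw^*$ is a global minimizer is needed; the $\mu$-strong convexity of Assumption~\ref{assu:l-smooth-app} plays no role in either direction.

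First I would establish the descent lemma. By the fundamental theorem of calculus along the segment from $\bw$ to $\by$,
\begin{equation*}
    F(\by) - F(\bw) = \int_0^1 \langle \nabla F(\bw + t(\by-\bw)),\, \by - \bw\rangle\, dt.
\end{equation*}
Subtracting $\langle \nabla F(\bw), \by - \bw\rangle$, applying Cauchy--Schwarz, and using the Lipschitz gradient bound $||\nabla F(\bw + t(\by-\bw)) - \nabla F(\bw)|| \leq Lt\,||\by-\bw||$ leaves a remainder bounded by $||\by-\bw||^2\int_0^1 Lt\,dt = \tfrac{L}{2}||\by-\bw||^2$, which is exactly the claimed inequality.

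For the upper bound, I would apply the descent lemma with the substitution $\bw \mapsto \bw^*$, $\by \mapsto \bw$. Since $\bw^*$ is a global minimizer, first-order optimality gives $\nabla F(\bw^*) = 0$, so the linear term vanishes and we are left with $F(\bw) - F(\bw^*) \leq \tfrac{L}{2}||\bw-\bw^*||^2$. For the lower bound, I would instead evaluate the descent lemma at the gradient step $\by = \bw - \tfrac1L\nabla F(\bw)$; the linear and quadratic terms combine to $-\tfrac1L||\nabla F(\bw)||^2 + \tfrac{L}{2}\cdot\tfrac{1}{L^2}||\nabla F(\bw)||^2 = -\tfrac{1}{2L}||\nabla F(\bw)||^2$, so $F(\by) \leq F(\bw) - \tfrac{1}{2L}||\nabla F(\bw)||^2$. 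Because $\bw^*$ minimizes $F$ globally we have $F(\bw^*) \leq F(\by)$, and rearranging yields $\tfrac{1}{2L}||\nabla F(\bw)||^2 \leq F(\bw) - F(\bw^*)$.

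The only substantive step is the descent lemma itself, and even that is routine; consequently there is no real obstacle here. The main points to get right are the bookkeeping in the integral remainder and the recognition that the correct probe point for the lower bound is the unit-$\tfrac1L$ gradient step rather than $\bw^*$, together with the observation that neither inequality requires convexity beyond $\bw^*$ being a stationary global minimizer.
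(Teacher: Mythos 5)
Your proof is correct. Note that the paper itself does not prove this lemma at all: it is stated with a citation to \cite{boyd2004convex} and used as a black box in the proof of Theorem 2, so there is no in-paper argument to compare against. What you have written is the standard self-contained derivation: the descent lemma via the fundamental theorem of calculus plus the Lipschitz bound on the gradient, then the upper bound by probing at $(\bw^*,\bw)$ with $\nabla F(\bw^*)=0$, and the lower bound by probing at the gradient step $\by=\bw-\tfrac1L\nabla F(\bw)$ and using $F(\bw^*)\le F(\by)$. Both invocations of optimality are exactly the right ones (stationarity of $\bw^*$ for the upper bound, global minimality for the lower bound), and your observation that $\mu$-strong convexity is not needed is accurate --- the lemma holds for any $L$-smooth function with a global minimizer, which is precisely the generality in which the paper's Theorem 2 proof uses it. The only cosmetic issue is inherited from the paper's statement, which mixes $x$ and $\bw$ as the argument of $\nabla F$; your proof implicitly and correctly reads both as the same point $\bw$.
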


\begin{thm}\label{thm:thm2-app}
Suppose an $0< \alpha<\frac{1}{2}$ fraction of clients are corrupted.
For a global objective function $F:R^d\to R$, the server distributes a sequence of iterates $\{\bw^t : t\in[0:T]\}$ (see Algo. 1) when run with a fixed step-size $\eta< \min\{\frac{1}{4L},\frac{1}{\mu} \}$. If Assumption \ref{assum: Bounded gradient dissimilarity-app},  \ref{assum: diff-cluster-app}, \ref{assum:bound local variance-app} and \ref{assu:l-smooth-app} holds, the sequence of average iterates $\{\bw^t: t\in[0:T]\}$
satisfy the following convergence guarantees\footnote{There are some modifications for theorem 2 compared to main text, please refer to Theorem 2 here instead of main text}: 
\begin{equation}
    \left\|\bw^{T} - \bw^*\right\|^2 \leq (1-\frac{C_1\mu}{L})^T\left\|\bw^{0} - \bw^*\right\|^2 + \frac{\Gamma}{\mu^2},
\end{equation}
where $\Gamma = \calO(\sigma_0^2 + \ka^2 + C^2 \alpha^2 \ka^2)$, $b$ is batch size and $\bw^*$ is the global optimal weights in federated learning.
\end{thm}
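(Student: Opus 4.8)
The plan is to establish the strong-convexity recursion for one step of the iteration, relating $\|\bw^{t+1}-\bw^*\|^2$ to $\|\bw^t-\bw^*\|^2$, and then to unroll the recursion over $T$ steps. The crux is to control the deviation of the FedCut aggregate $\hat{\bg}^t$ from the true full gradient $\nabla F(\bw^t)$. I would decompose this deviation into three sources: (i) the variance of stochastic gradients within each client, bounded by $\sigma_0^2$ via Assumption~\ref{assum:bound local variance-app}; (ii) the heterogeneity across benign clients, bounded by $\ka$ via Assumption~\ref{assum: Bounded gradient dissimilarity-app}; and (iii) the residual error from the fraction $\alpha$ of Byzantine updates that survive the cut, which by Theorem~\ref{thm:thm1-app} is $\calO(C\alpha\ka)$ with high probability. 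These three contributions are exactly what assemble into the error term $\Gamma=\calO(\sigma_0^2+\ka^2+C^2\alpha^2\ka^2)$.

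\textbf{First step.} I would write the one-step update $\bw^{t+1}=\bw^t-\eta\hat{\bg}^t$ and expand
\begin{equation}
\|\bw^{t+1}-\bw^*\|^2 = \|\bw^t-\bw^*\|^2 - 2\eta\langle \hat{\bg}^t, \bw^t-\bw^*\rangle + \eta^2\|\hat{\bg}^t\|^2.
\end{equation}
I then split $\hat{\bg}^t = \nabla F(\bw^t) + (\hat{\bg}^t - \nabla F(\bw^t))$, so the inner product yields a genuine descent term $-2\eta\langle\nabla F(\bw^t),\bw^t-\bw^*\rangle$ plus a cross term involving the aggregation error. The descent term is handled by $\mu$-strong convexity, which gives $\langle\nabla F(\bw^t),\bw^t-\bw^*\rangle \geq \mu\|\bw^t-\bw^*\|^2$ (combined with $L$-smoothness through Lemma~\ref{lem:l-smmoth} to absorb the $\eta^2\|\hat{\bg}^t\|^2$ term and produce the contraction factor $1-C_1\mu/L$). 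The remaining cross and quadratic error terms are bounded using Young's inequality, converting them into a multiple of $\|\bw^t-\bw^*\|^2$ (absorbed into the contraction) plus the bounded constant $\Gamma$.

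\textbf{Second step.} I would bound $\EE\|\hat{\bg}^t-\nabla F(\bw^t)\|^2$ by triangle inequality into the three pieces above, invoking Theorem~\ref{thm:thm1-app} for the Byzantine residual, Assumption~\ref{assum:bound local variance-app} for the sampling variance, and Assumption~\ref{assum: Bounded gradient dissimilarity-app} for the benign-client dispersion. This gives a per-step bound of the form
\begin{equation}
\|\bw^{t+1}-\bw^*\|^2 \leq \left(1-\tfrac{C_1\mu}{L}\right)\|\bw^t-\bw^*\|^2 + \eta^2\,\calO(\sigma_0^2+\ka^2+C^2\alpha^2\ka^2).
\end{equation}
Unrolling this geometric recursion from $t=0$ to $T$ and summing the geometric series (which contributes a factor $\approx L/(C_1\mu)$, and with $\eta<1/\mu$ the $\eta^2$ prefactor collapses the error into $\Gamma/\mu^2$) yields the stated bound.

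\textbf{The main obstacle} I anticipate is the careful handling of the step where the high-probability guarantee of Theorem~\ref{thm:thm1-app} must be threaded through an expectation-based convergence statement: the per-step error bound only holds on the event that FedCut's clustering succeeds, so I would either condition on a union bound over all $T$ iterations (requiring $T\cdot\tilde{O}(\sqrt{Z_1})$ to stay small) or absorb the failure event into the constants. Balancing the step-size constraint $\eta<\min\{1/(4L),1/\mu\}$ so that both the smoothness term $\eta^2\|\hat{\bg}^t\|^2$ is dominated and the contraction factor stays in $(0,1)$ is the other delicate bookkeeping point, but it is routine once the algebraic structure above is in place.
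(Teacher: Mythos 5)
Your plan follows essentially the same route as the paper's proof: the same one-step expansion and three-way decomposition into an idealized gradient step, a cross-client heterogeneity/drift term, and a Byzantine aggregation residual bounded via Theorem \ref{thm:thm1-app}, with the contraction obtained from strong convexity plus $L$-smoothness (Lemma \ref{lem:l-smmoth}), cross terms handled by Young's inequality, and the resulting recursion unrolled geometrically to give $(1-C_1\mu/L)^T\|\bw^0-\bw^*\|^2 + \Gamma/\mu^2$. The only notable difference is that you explicitly flag the need to thread Theorem \ref{thm:thm1-app}'s high-probability guarantee through all $T$ iterations via a union bound, a bookkeeping point the paper's proof silently glosses over by applying that bound as if it were deterministic.
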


\begin{proof}
Note that $\bw^t = \frac{1}{K}\sum_{i=1}^K \bw_i^t$, so we have:
\begin{equation}
    \begin{split}
        \bw^{t+1}  &= \bw^{t} - \eta \hat{\bg^t} \\
                   &= \bw^{t} - \eta \nabla F(\bw^t) + \eta \nabla F(\bw^t) - \eta  \frac{1}{K}\sum_{i=1}^K\nabla F_i(\bw^t)   \\
                   & \quad +\eta  \frac{1}{K}\sum_{i=1}^K\nabla F_i(\bw^t) - \eta \hat{\bg^t} \\
                   &=\underbrace{[\bw^{t} - \eta \nabla F(\bw^t)]}_{=: U_1} + \eta \underbrace{[ \frac{1}{K}\sum_{i=1}^K(\nabla F(\bw^t) -\nabla F_i(\bw_i^t))]}_{=: U_2} \\
                   & \quad +  \eta  \underbrace{[\frac{1}{K}\sum_{i=1}^K\nabla F_i(\bw_i^t) - \hat{\bg^t}]}_{=: U_3}\\
    \end{split}
\end{equation}
Therefore, 
\begin{equation}
    \bw^{t+1} - \bw^{*} = [U_1 - \bw^*] +\eta U_2 + \eta U_3
\end{equation}
Taking norm on both sides and then squaring:
\begin{equation} \label{eq:norm-wt}
\begin{split}
    ||\bw^{t+1} - \bw^{*}||^2& = ||U_1 - \bw^*||^2 + \eta^2||U_2 + U_3||^2 \\
    & \quad + 2\eta<U_1 - \bw^*, U_2 + U_3>
\end{split}
\end{equation}
Taking use of $2<\mathbf{a}, \mathbf{b}> \leq ||\mathbf{a}||^2 + ||\mathbf{b}^2||$, we have 
\begin{equation} \label{eq:square-inequality}
\begin{split}
    &2\eta<U_1 - \bw^*, U_2 + U_3> \\
    & = 2<\sqrt{\frac{\eta \mu}{2}}(U_1 - \bw^*), \sqrt{\frac{\eta}{\mu}}(U_2 + U_3)> \\
    & \leq \frac{\eta \mu}{2}||U_1 - \bw^*||^2 +\frac{2\eta}{\mu}||U_2 +U_3||^2 
\end{split}
\end{equation}
Combining Eq. \eqref{eq:norm-wt} and \eqref{eq:square-inequality}, we have:
\begin{equation}
    \begin{split}
        ||\bw^{t+1} - \bw^{*}||^2 &\leq (1+\frac{\eta\mu}{2})||U_1 - \bw^*||^2 \\
        & + (\eta^2 + \frac{2\eta}{\mu})||U_2+U_3||^2 \\
        &\leq (1+\frac{\eta\mu}{2})||U_1 - \bw^*||^2  \\
        &  + (2\eta^2 + \frac{4\eta}{\mu})||U_2||^2 + (2\eta^2 + \frac{4\eta}{\mu})||U_3||^2
    \end{split}
\end{equation}
Substituting the values of $U_1$, $U_2$ and $U_3$ into Eq. \eqref{eq:norm-wt}, taking expectation w.r.t. the stochastic sampling of gradients by clients, we could get:
\begin{equation} \label{eq: main-result}
    \begin{split}
        ||\bw^{t+1}& - \bw^{*}||^2 \leq (1+\frac{\eta\mu}{2}) ||\bw^{t} - \eta \nabla F(\bw^t)-w^*||^2 \\
       & + (2\eta^2 + \frac{4\eta}{\mu}) ||\frac{1}{K}\sum_{i=1}^K(\nabla F(\bw^t) -\nabla F_i(\bw_i^t))||^2 \\
       &+ (2\eta^2 + \frac{4\eta}{\mu})||\frac{1}{K}\sum_{i=1}^K\nabla F_i(\bw_i^t) - \eta \hat{\bg^t}||^2
    \end{split}
\end{equation}
Now we bound each of the three terms on the RHS of Eq. \eqref{eq: main-result} separately as follows: \\
\textbf{Firstly,}  since $F$ is $\mu$ strong convex, $<\bw^*-\bw^t,\nabla F(\bw^t)> \leq F(\bw^*)-F(\bw^t) -\frac{\mu}{2}||\bw^*-\bw^t||^2$. Also, according to Lemma \ref{lem:l-smmoth}, we have:
\begin{equation}
    \begin{split}
         &||\bw^{t} - \eta \nabla F(\bw^t)-\bw^*||^2 \\
         &=  ||\bw^t-\bw^*||^2 + \eta^2||\nabla F(\bw^t)||^2  \\
         &+ 2\eta<\bw^*-\bw^t,\nabla F(\bw^t)> \\
         & \leq (1-\mu \eta)||\bw^t-\bw^*||^2 + (\eta^2-\frac{\eta}{L})||\nabla F(\bw^t)||^2  \\
         & \leq  (1-\mu \eta)||\bw^t-\bw^*||^2,
         \end{split}
\end{equation}
where the last inequality is due to $\eta<\frac{1}{4L}$.
\textbf{Secondly,}
\begin{equation} \label{eq:eq32}
\begin{split}
    ||\frac{1}{K}&\sum_{i=1}^K(\nabla F(\bw^t) -\nabla F_i(\bw_i^t))||^2 \\
   &\leq \frac{1}{K^2}\sum_{i=1}^K||\nabla F(\bw^t) -\nabla F_i(\bw_i^t))||^2 \\
   &\leq \frac{2}{K^2}\sum_{i=1}^K[||\nabla F(\bw^t) -\nabla F(\bw_i^t))||^2 \\
   &+ ||\nabla F(\bw_i^t) -\nabla F_i(\bw_i^t))||^2] \\
   &\leq  \frac{2L^2}{K^2}\sum_{i=1}^K||\bw^t-\bw_i^t||^2 +  2\ka^2 
\end{split}
\end{equation}
Note that 
\begin{equation}  \label{eq:eq33}
    ||\bw^t-\bw_i^t||^2 =|| \frac{1}{K} \sum_{k=1}^K(\bw_k^t-\bw_i^t)||^2 
     \leq \frac{1}{K}\sum_{k=1}^K||\bw_k^t - \bw_i^t||^2 
\end{equation}
Moreover, for the batch data $\mathcal{S}$, we have
\begin{equation} \label{eq:eq34}
\begin{split}
    & ||\bw_k^t - \bw_i^t||^2 = \frac{\eta^2}{|\mathcal{S}|} ||\sum_j(\nabla F_i(\bw_i^t, x_{ij}) - \nabla F_k(\bw_k^t, x_{kj}))||^2\\
    & \leq  \frac{\eta^2}{|\mathcal{S}|} \sum_j ||\nabla F_i(\bw_i^t, x_{ij}) - \nabla F_k(\bw_k^t, x_{kj})||^2 \\
    & \leq \frac{5\eta^2}{|\mathcal{S}|} \sum_j [(||\nabla F_i(\bw_i^t, x_{ij}) - \nabla F_i(\bw_i^t)||^2) \\
    &\quad + (||\nabla F_k(\bw_k^t) -\nabla F_k(\bw_k^t, x_{kj}) ||^2) \\
    & \quad + (|| \nabla F_i(\bw_i^t)) - \nabla F(\bw_i^t) ||^2)  + (||\nabla F(\bw_i^t) - \nabla F(\bw_k^t)) ||^2)  \\
    &\quad + (||\nabla F(\bw_k^t)) - \nabla F_k(\bw_k^t)) ||^2)] \\
    &\leq 5\eta^2[\sigma_0^2 + \sigma_0^2 + 4\ka^2 + L^2||\bw_i^t-\bw_k^t||^2 + 4\ka^2] \\
\end{split}
\end{equation}
The second inequality is due to the Cauchy–Schwarz inequality. The Third inequality is because of Lemma \ref{assum:bound local variance-app}, Assumption \ref{assum: Bounded gradient dissimilarity-app} and \ref{assu:l-smooth-app}.
Then we transfer $||\bw_i^t-\bw_k^t||^2$ to RHS of Eq. \eqref{eq:eq34}, we derive:
\begin{equation}
    (1-5\eta^2L^2) ||\bw_i^t-\bw_k^t||^2\leq 5( 2\sigma_0^2 + 8\ka^2)
\end{equation}
Thus 
\begin{equation} \label{eq:eq36}
     ||\bw_i^t-\bw_k^t||^2 \leq \frac{5( 2\sigma_0^2 + 8\ka^2)}{(1-5\eta^2L^2) }
\end{equation}
Combining Eq. \eqref{eq:eq32}, \eqref{eq:eq33} and \eqref{eq:eq36}, we obtain
\begin{equation}
\begin{split}
    ||\frac{1}{K}&\sum_{i=1}^K(\nabla F(\bw^t) -\nabla F_i(\bw_i^t))||^2  \\
    & \leq \frac{2L^2}{K^2} \frac{5K^2( 2\sigma_0^2 + 8\ka^2)}{(1-5\eta^2)} + 2\ka^2 \\
    & = \frac{10L^2(2\sigma_0^2 + 8\ka^2)}{(1-5\eta^2L^2)} + 2\ka^2
    \end{split}
\end{equation}\\
\textbf{Thirdly,} according to Theorem \ref{thm:thm1-app}, we obtain
\begin{equation}
    ||\frac{1}{K}\sum_{i=1}^K\nabla F_i(\bw_i^t) - \hat{\bg^t}|| \leq  \calO(\alpha^2 \ka^2)
\end{equation}\\
\textbf{Finally, } Noted that $1-5\eta^2L^2 > \frac{1}{2}$ and $(1+\frac{\eta\mu}{2})(1-\mu\eta)< 1-\frac{\mu\eta}{2}$ when $\eta < \frac{1}{4L}$. And if $\eta + \frac{2}{\mu} < \frac{3}{\mu} $, i.e., $\eta< \frac{1}{\mu}$, we bounded Eq.\eqref{eq: main-result} as:
\begin{equation} \label{eq:eq39}
    \begin{split}
           &  ||\bw^{t+1} - \bw^{*}||^2  \leq (1+\frac{\eta\mu}{2})(1-\mu\eta) ||\bw^t-\bw^*||^2 \\
       & + (2\eta^2 + \frac{4\eta}{\mu}) (\frac{10L^2(2\sigma_0^2 + 8\ka^2)}{(1-5\eta^2L^2)} + 2\ka^2) 
       + (2\eta^2 + \frac{4\eta}{\mu}) \calO( \alpha^2 \ka^2)  \\
       & \leq (1-\frac{\mu\eta}{2}) ||\bw^t-\bw^*||^2 
        + (2\eta^2 + \frac{4\eta}{\mu}) (\frac{10L^2(2\sigma_0^2 + 8\ka^2)}{(1-5\eta^2L^2)} + 2\ka^2)  \\
        & + (2\eta^2 + \frac{4\eta}{\mu}) \calO( \alpha^2 \ka^2) \\
       &\leq (1-\frac{\mu\eta}{2}) ||\bw^t-\bw^*||^2 
        +  \frac{240\eta}{ \mu}(\sigma_0^2 + 4\ka^2) + \frac{12\ka^2 \eta}{\mu} \\
       & + (\frac{6\eta}{ \mu}) \calO( \alpha^2 \ka^2) 
    \end{split}
\end{equation}
Therefore, we could use Eq. \eqref{eq:eq39} by induction for $t=1,\cdots, T-1$ to obtain
\begin{equation}
\begin{split}
        &\left\|\bw^{T} - \bw^*\right\|^2  \\ &\leq (1-\frac{\eta \mu}{2})^T\left\|\bw^{0} - \bw^*\right\|^2 + [\frac{240\eta}{ \mu}(\sigma_0^2 + 4\ka^2) + \frac{12\ka^2 \eta}{\mu}
       \\
       & + (\frac{6\eta}{ \mu}) \calO(C^2 \alpha^2 \ka^2)]/(\frac{\mu \eta}{2}) \\
       & = (1-\frac{\eta \mu}{2})^T\left\|\bw^{0} - \bw^*\right\|^2 + \calO(\sigma_0^2 + \ka^2 + \alpha^2 \ka^2)/\mu^2 \\
       & \leq  (1-\frac{C_1 \mu}{L})^T\left\|\bw^{0} - \bw^*\right\|^2 + \frac{1}{\mu^2}\Gamma
\end{split}
\end{equation}
where $\Gamma = \calO(\sigma_0^2 + \ka^2 + \alpha^2 \ka^2)$ and $C_1$ is a constant.
\end{proof}
\end{appendices}


\end{document}